\newtheorem*{example*}{}
\newtheorem{theorem}{Theorem}
\newtheorem{lemma}{Lemma}
\newtheorem{problem}{Problem}
\newcommand{\StratRec}{\textsf{StratRec}\xspace}
\newcommand{\ADPaR}{\textsf{ADPaR}\xspace}
\newcommand{\ADPaRA}{\textsf{ADPaR-Exact}\xspace}
\newcommand{\BatchStrat}{\textsf{BatchStrat}\xspace}
\title{Recommending Deployment Strategies for Collaborative Tasks}
\author{
  Dong Wei \\
  New Jersey Institute of Technology\\
  Newark, NJ USA \\
  \texttt{dw277@njit.edu} \\
   \And
 Senjuti Basu Roy \\
	New Jersey Institute of Technology\\
	Newark, NJ USA \\
  \texttt{senjutib@njit.edu} \\
   \AND
    Sihem Amer-Yahia \\
	\textit{CNRS, Univ. Grenoble Alpes}\\
   Grenoble, France \\
   \texttt{sihem.amer-yahia@cnrs.fr} \\
}
\begin{document}
\maketitle
\begin{abstract}
  Our work contributes to aiding requesters in deploying collaborative tasks in crowdsourcing. We initiate the study of recommending deployment strategies for  collaborative tasks  to requesters that are consistent with deployment parameters they desire: a lower-bound on the quality of the crowd contribution, an upper-bound on the latency of task completion, and an upper-bound on the cost incurred by paying workers. A deployment strategy is a choice of value for three dimensions: {\em Structure} (whether to solicit the workforce sequentially or simultaneously), {\em Organization} (to organize it collaboratively or independently), and {\em Style} (to rely solely on the crowd or to combine it with machine algorithms).  We propose \StratRec, an optimization-driven middle layer that recommends deployment strategies and alternative deployment parameters to requesters by accounting for worker availability. Our solutions are grounded in discrete optimization and computational geometry techniques that produce results with theoretical guarantees. We present extensive experiments on Amazon Mechanical Turk, and conduct synthetic experiments to validate the qualitative and scalability aspects of \StratRec. 
\end{abstract}

\section{Introduction}\label{intro}
Despite becoming a popular mean of deploying tasks, crowdsourcing offers very little help to requesters. In particular, task deployment requires that requesters identify  {\em appropriate deployment strategies}.  A strategy involves the interplay of multiple dimensions: {\em Structure} (whether to solicit the workforce sequentially or simultaneously), {\em Organization} (to organize it collaboratively or independently), and {\em Style} (to rely on the crowd alone or on a combination of crowd and machine algorithms). A strategy needs to be commensurate to {\em deployment parameters} desired by a requester, namely, a lower-bound on quality, an upper-bound on latency, and an upper-bound on cost. For example, for a sentence translation task, a requester wants the translated sentences to be at least $80$\% as good as the work of a domain expert, in a span of at most $2$ days, and at a maximum cost of \$$100$.  Till date, the burden is entirely on requesters to design deployment strategies that satisfy desired parameters. 
Our effort in this paper is to present a formalism and computationally efficient algorithms to recommend multiple strategies (namely $k$) to the requester that are commensurate to her deployment parameters, primarily  for collaborative tasks.
  
A recent work~\citep{borromeo2017deployment} investigated empirically the deployment of text creation tasks in Amazon Mechanical Turk (AMT). The authors validated the effectiveness of different propose {\em to automate strategy recommendation}. This is particularly challenging because the estimation of the cost, quality and latency of a strategy for a given deployment request must account for many factors. 

To realize our contributions, we develop \StratRec (refer to Figure~\ref{fig:framework}), an optimization-driven middle layer that sits between requesters, workers, and platforms. \StratRec has two main modules: {\em Aggregator} and {\em Alternative Parameter Recommendation} (\ADPaR in short).  {\em Aggregator} is responsible for recommending $k$ strategies to a batch of such incoming strategies for different collaborative tasks, such as text summarization and text translation, and provided evidence for the need to guide requesters in choosing the right strategy. In this paper, we deployment requests, considering worker availability. If the platform does not have enough qualified workers to satisfy all requests, {\em Aggregator} {\em  triages them by optimizing platform-centric goals, i.e., to maximize throughput or pay-off (details  in Section~\ref{sec:framework})}.  Unsatisfied requests are sent to the  {\em Alternative Parameter Recommendation} module (\ADPaR), that recommends different deployment parameters for which $k$ strategies are available.

In principle, {\em recommending deployment strategies involves modeling worker availability considering their skills for the tasks that require deployment}. This gives rise to a complex function that estimates parameters (quality, latency, and cost) of a strategy considering worker skills, task types, and worker availability. As the first ever principled investigation of strategy recommendation in crowdsourcing, we first make a binary match between workers' skills and task types and then estimate strategy parameters considering those workers' availability. Worker availability is captured as  a probability  distribution function (pdf)  by leveraging historical  data on a platform.  For example,  the pdf can capture that there  is a 70\% chance  of  having $7\%$ of  the  workers and a 30\%  chance of having  2\%  of the  workers available who  are  suitable to  undertake a  certain type of task.  In expectation, this gives rise to  $5.5$\% of  available workers. If a platform has $4000$ total workers available to undertake a certain type of  task, that  gives rise to a total of $220$ available workers in an expected sense. \StratRec works with such {\em expected values}.

\begin{figure}
\centering
  \includegraphics[width=8.5cm,height=5.2cm]{./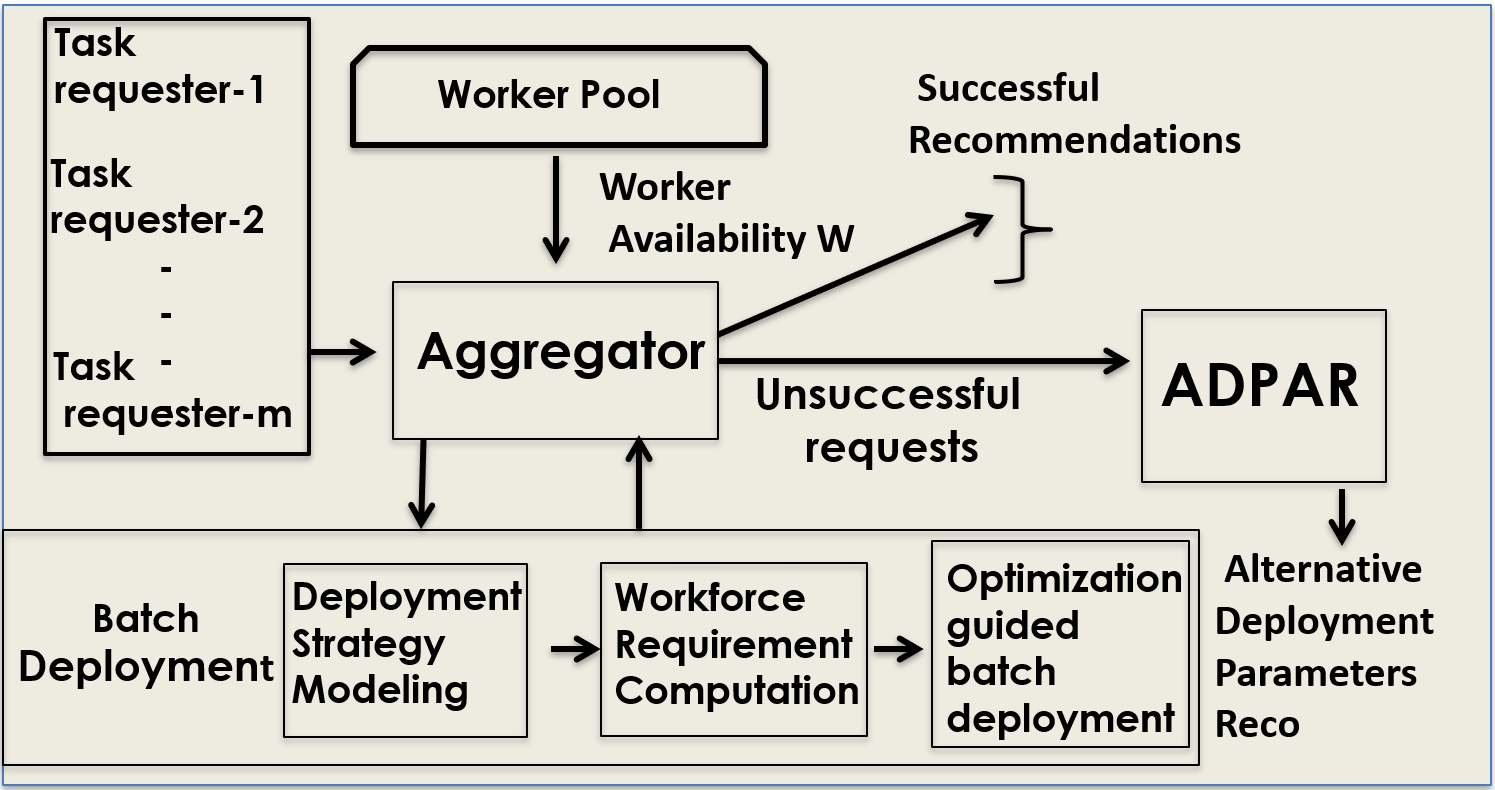}
  \caption{\StratRec Framework}
 \label{fig:framework}
\end{figure}

{\bf Contribution 1. Modeling and Formalism}: We present a general framework \StratRec for modeling quality, cost, and latency of a set of collaborative tasks, when deployed based on a strategy considering worker availability (Section~\ref{depmod}). The first problem we study is {\em Batch Deployment Recommendation} inside to deploy a batch of tasks to maximize two different platform-centric criteria: task throughput and pay-off.  After that, unsatisfied requests are sent one by one to the {\em Alternative Parameter Recommendation} module (\ADPaR). \ADPaR solves an optimization problem that recommends alternative parameters for which $k$ deployment strategies exist. For instance, if a request has a very small latency threshold that cannot be attained based on worker availability, \ADPaR may recommend to increase the latency and cost thresholds to find $k$ legitimate strategies. \ADPaR does not arbitrarily choose the alternative deployment parameters. It recommends those alternative parameters that are {\em closest}, i.e., minimizing the $\ell_2$ distance to the ones specified. 

{\bf Contribution 2.  Algorithms}:  In Section~\ref{sec:batch}, we design \BatchStrat, a unified algorithmic framework to solve the {\em Batch Deployment Recommendation} problem. \BatchStrat is greedy in nature and provides exact results for the throughput maximization problem, and a $1/2$-approximation factor for the pay-off maximization problem (which is NP-hard). In Section~\ref{sec:adpar}, we develop \ADPaRA to solve \ADPaR that is geometric and exploits the fact that our objective function is monotone (Equation~\ref{multiobjective-prob}). Even though the original problem is defined in a continuous space, we present a discretized technique that is exact. \ADPaRA employs a sweep-line technique~\cite{de1997computational} that gradually relaxes quality, cost, and latency, and is guaranteed to produce the tightest alternative parameters for which $k$ deployment strategies exist. 

{\bf Contribution 3. Experiments}: We conduct comprehensive real-world deployments for text editing applications  with real workers and rigorous synthetic data experiments (Section~\ref{exp}). The former validate that worker availability {\em varies over time, and could be reasonably estimated through multiple real world deployments}. It also shows {\em with statistical significance that cost, quality, latency have a linear relationship with worker availability for text editing tasks}. 
Our real data experiments (Section~\ref{realdata2}) also validate that when tasks are deployed considering recommendation of \StratRec, with statistical significance, they achieve higher quality and lower latency, under the fixed cost threshold on an average, compared to the deployments that do not consult  \StratRec. These results validate the effectiveness of deployment recommendations of our proposed frameworks and its algorithms. 

\section{Framework and Problem}\label{sec:model}

\subsection{Data Model}
           {\bf Crowdsourcing Tasks:} A platform is designed  to crowdsource tasks, deployed by a set of requesters and undertaken by crowd workers.  We consider collaborative tasks such as sentence translation, text summarization, and puzzle solving~\cite{julien,habib}.  

{\bf Deployment Strategies: } A deployment strategy~\cite{thesis} instantiates three dimensions:
{\em Structure} (sequential or simultaneous), {\em Organization} (collaborative or independent), and {\em Style} (crowd-only or crowd and algorithms). We rely on common deployment strategies~\cite{thesis, borromeo2017deployment} and refer to them as $\mathcal{S}$. Figure~\ref{fig:strategies} enlists some strategies that are suitable for text translation tasks (from English to French in this example). For instance, {\em SEQ-IND-CRO} in Figure~\ref{fig:strategies}(a) dictates that workers complete tasks sequentially ({\em SEQ}), independently ({\em IND}) and with no help from algorithms ({\em CRO}).  In {\em SIM-COL-CRO} (Figure~\ref{fig:strategies}(b)), workers are solicited in parallel ({\em SIM}) to complete a task collaboratively ({\em COL}) and with no help from algorithms ({\em CRO}). The last strategy {\em SIM-IND-HYB} dictates a hybrid work style ({\em HYB}) where workers are combined with algorithms, for instance with Google Translate.

A platform could provide the ability to implement some strategies. For instance, communication between workers enables {\tt SEQ} while collaboration enables {\tt COL}. Additionally, coordination between machines and humans may enable {\tt HYB}. Therefore, strategies could be implemented inside or outside platforms. In the latter, a platform could be used solely for hiring workers who are then redirected to an environment where strategies are implemented. In all cases, we will assume a set of strategies $\mathcal{S}$ for a given platform.

\begin{figure*}[htpb]
	\subfloat[{\bf SEQ-IND-CRO}]{
		\includegraphics[height=3cm, width=.22\textwidth]{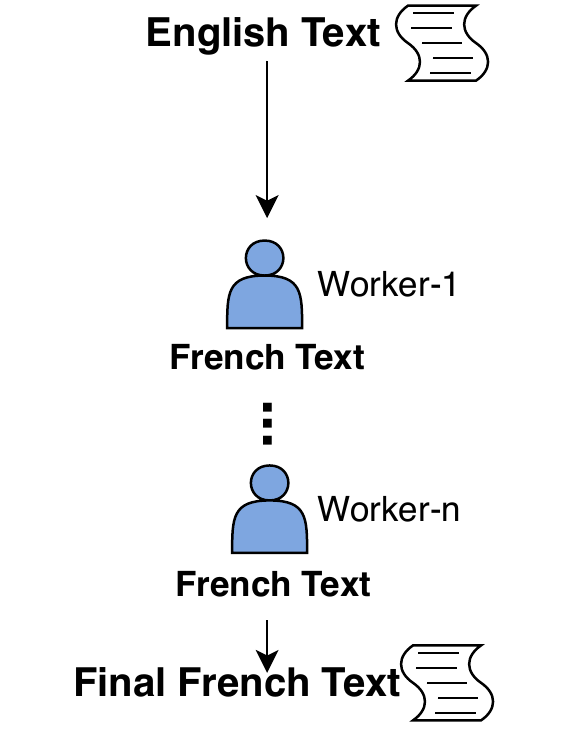}
	}
	\subfloat[{\bf SIM-COL-CRO}]{
		\includegraphics[height=3cm, width=.22\textwidth]{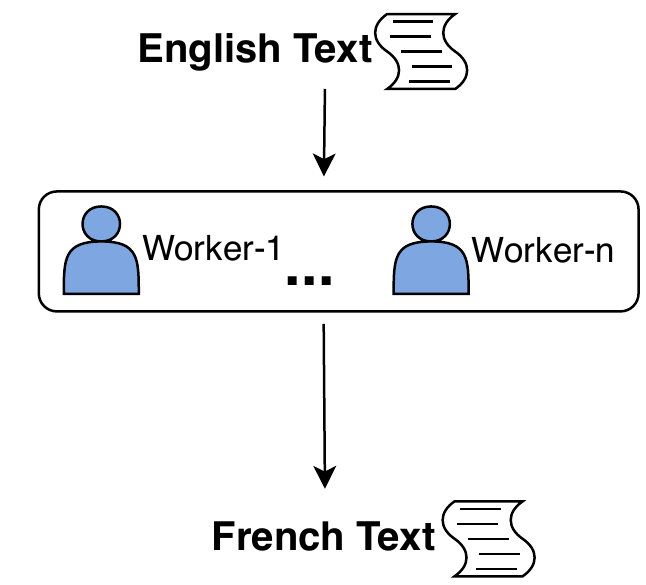}
	}
	\subfloat[{\bf SIM-IND-CRO}]{
		\includegraphics[height=3cm, width=.22\textwidth]{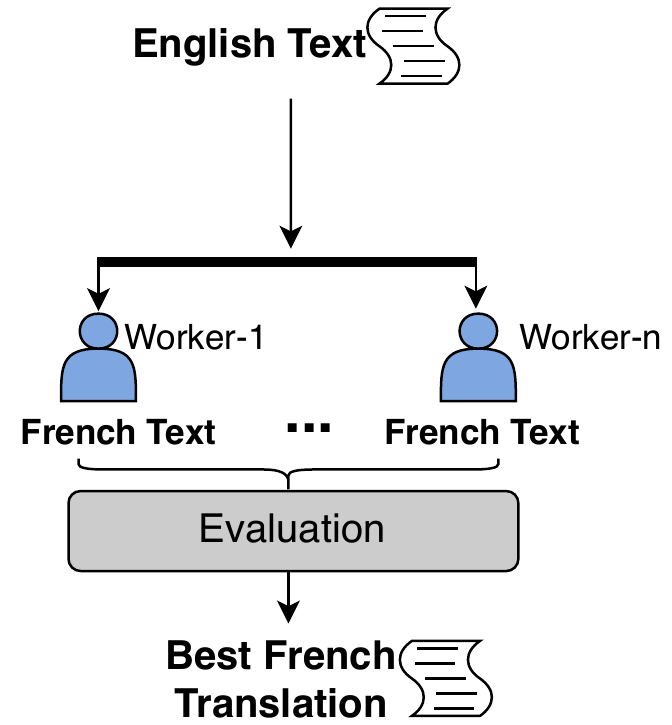}
	}
	\subfloat[{\bf SIM-IND-HYB}]{
		\includegraphics[height=3cm, width=.22\textwidth]{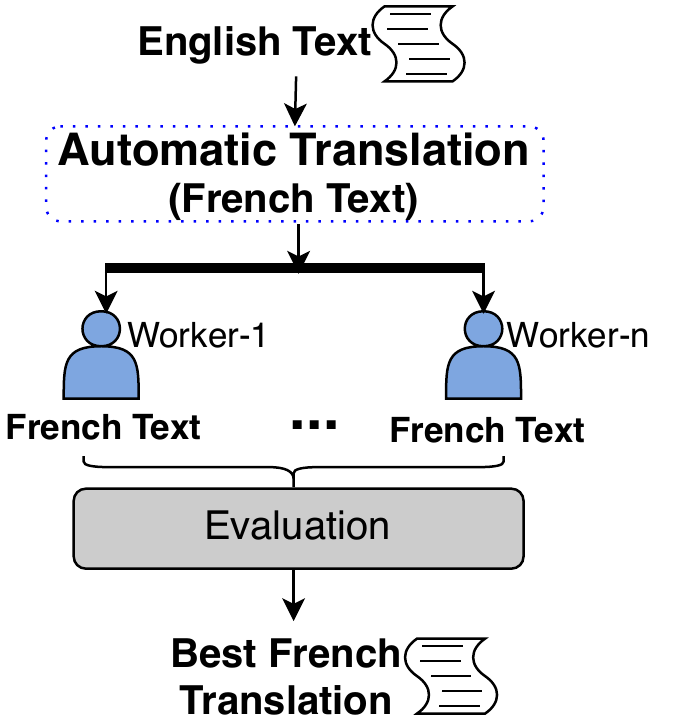}
	}
  	\caption{Deployment Strategies}
  	\label{fig:strategies}
\end{figure*}


For the purpose of illustration, we will only use a few strategies in this paper. However, in principle, the number of possible strategies could be very large. The closest analogy is query plans in relational databases in which joins, selections and projections could be combined any number of times and in different orders. Let us assume a collaborative task that involves $n$ workers and that $|Structure| \times |Organization| \times | Style|=v$. Even when the same combination of {\em Structure}, {\em Organization}, and {\em Style} appears at most once in a strategy, the number of possible strategies is essentially of the order of 
$(v^n \times v!)$. That is, if {\em Structure}, {\em Organization}, and {\em Style}, have $2$ unique choices each, the number of possible strategies is in the order of  $(8^n \times 8!)$. Now,  if the same combination of {\em Structure}, {\em Organization}, and {\em Style} can appear any number of times, the number of possible strategies becomes infinite. Additionally, there exists multiple real world tools Turkomatic~\cite{turkomatic} or Soylent~\cite{Bernstein10soylent:a}, that aid  requesters  in  planning  and solving collaborative tasks. In Turkomatic, while workers decompose and solve tasks, requesters can  view  the status  of worker-designed  workflows  in  real  time;  intervene  to  change  tasks; and request new solutions. For instance, with $x$ tasks in the workflow, there are $8^x$ possible strategies (e.g., $1,073,741,824$ strategies for $x=10$ and $v=8$). Such tools would certainly benefit from strategy recommendation.\\
\color{black}
{\bf Task Requests and Deployment Parameters:} 
A requester intends to find one or more strategies (notationally $k$, a small integer) for a deployment $d$ with parameters on quality, cost, and latency ($d.quality$, $d.cost$, $d.latency$) such that,  when a task in $d$ is deployed using strategy $s \in S$,  it is estimated to achieve a crowd contribution quality $s.quality$, by spending at most $s.cost$, and the deployment will last at most $s.latency$.
\begin{table}[!htbp]
\centering
    \begin{tabular}{ | l | l | l | l |}
    \hline
     & Quality & Cost & Latency \\ \hline
    $d_1$ & $0.4$ & $0.17$  & $0.28$ \\
    $d_2$ & $0.8$ & $0.2$ & $0.28$ \\
    $d_3$ & $0.7$ & $0.83$ & $0.28$ \\
   \hline \hline
   $s_1$ & $0.5$ & $0.25$ & $0.28$ \\
   $s_2$ & $0.75$ &  $0.33$ &$0.28$ \\
   $s_3$ & $0.8$  & $0.5$ & $ 0.14$ \\
   $s_4$ & $0.88$ & $0.58$ & $0.14$ \\
    \hline
    \end{tabular}
    \caption{Deployment Requests and Strategies}\label{def}
\end{table}

\begin{example*}\label{ex1}
 {\bf Example 1}: Assume there are $3$ ($m=3$) task deployment requests for different types of collaborative sentence translation tasks. The first requester $d_1$ is interested in deploying sentence translation tasks for $2$ days (out of $7$ days), at a cost up to \$$100$ (out of $\$600$ max), and expects the quality of the translation to reach at least $40\%$ of domain expert quality. Table~\ref{def} presents these after normalization between $[0-1]$. We set $k=3$. 
\end{example*}


A strategy $s$ is suitable to be recommended to $d$, if $s.quality \geq d.quality$ \& $s.cost \leq d.cost$ \& $s.latency \leq d.latency$. Estimating the parameters $s.quality$, $s.cost$, $s.latency$  for each $s$ and deployment $d$ requires accounting for the worker pool and their skills who are available to undertake  tasks in $d$. {\em A simple yet reasonable approach to that is to first match task types in a deployment request with workers' skills to select a pool of workers. Following that,  we account for {\em worker availability} from this selected pool, since the deployed tasks are to be done by those workers. Thus, the (estimated) quality, cost and latency of a strategy for a task is a function of worker availability, considering a selected pool of workers who are suitable for the tasks}. 


{\bf Worker Availability:} Worker availability is a discrete random variable and is represented by its corresponding distribution function (pdf), which gives the probability of the proportion of workers who are suitable and available to undertake tasks of a certain type within a specified time $d.latency$ (refer to Example~\ref{ex1}).  This pdf is computed from historical data on workers' arrival and departure on a platform.  \StratRec computes the expected value of this pdf to represent the available workforce $W$, as a normalized value in $[0,1]$.  In the remainder of the paper, worker availability stands for worker availability  in expectation, unless otherwise specified. How  to  accurately estimate worker availability  is an  interesting yet orthogonal problem and  not our focus here.

\subsection {Illustration of \StratRec}\label{sec:framework}
StratRec is an optimization-driven middle layer that sits between requesters, workers, and platforms. At any time, a crowdsourcing platform has a batch of $m$ deployment requests each with its own parameters as defined above, coming from different requesters. \StratRec is composed of two main modules -  {\em Aggregator} and  {\em Alternative Parameter Recommendation} (or \ADPaR).  

For the purpose of illustration, continuing with Example~\ref{ex1}, $\mathcal{S}$ consists of the set of $4$ deployment strategies, as shown in Figure~\ref{fig:strategies}: {\em SIM-COL-CRO, SEQ-IND-CRO, SIM-IND-CRO, SIM-IND-HYB.} To ease understanding, we name them as $s_1$, $s_2$, $s_3$, $s_4$, respectively. 

These requests, once received by \StratRec, are sent to the {\em Aggregator}. First,  it analyzes the Worker Pool to estimate worker availability. There is a $50\%$ probability of  having $700$ workers and a $50\%$ probability of  having $900$ workers out of $1000$ suitable  workers  for sentence translation tasks available for the next $7$ days. Thus, the expected worker availability $W$ is $0.8$. After that, it consults the {\em Deployment Strategy Modeling} in {\em Batch Deployment module} to estimate quality, cost, and latency of a strategy (more in Section~\ref{depmod}) for a deployment.  Since all deployments are of same type, Equation~\ref{eq2},  could be used to estimate those Strategy parameters (also presented in Table~\ref{def}). Then, it consults the {\em Workforce Requirement Computation} to estimate workforce requirement of each strategy (more in Section~\ref{wr} and Figure~\ref{WR}).  Finally, the 
{\em Optimization Guided Batch Deployment} (refer to Section~\ref{od}) is invoked to select a subset of requests that optimizes the underlying goal and recommends $k$ strategies for each. Each unsatisfied request $d_i$ is sent to \ADPaR that recommends an alternative deployment $d'_i$ to the requester for which there exists $k$ deployment strategies. 

Using Example~\ref{ex1}, out of the three deployment requests, only  $d_3$ could be fully served  (considering either throughput or pay-off objective) and $s_2$, $s_3$, $s_4$ are recommended.  $d_1$ and $d_2$ are then sent to \ADPaR one by one.

\subsection{Problem Definitions}
\label{sec:pbms}

\begin{problem}\label{pbm1}
{\bf Batch Deployment Recommendation:}
Given an optimization goal $F$, a set $\mathcal{S}$ of strategies, a batch of $m$ deployment requests from different requesters, where the $i$-th task deployment $d_i$ is associated with parameters $d_i.quality$, $d_i.cost$ and $d_i.latency$, and worker availability $W$, distribute $W$ among these requests by recommending $k$ strategies for each request, such that $F$ is optimized.

The high level problem optimization problem could be formalized as:

\begin{equation}\label{opt}
\begin{aligned}
& \text{Maximize }
F= \sum f_i \\
& \text{s.t.}   \sum \vec{w_i} \leq W  \text{ AND } \\
&   d_i \text{ is successful}
\end{aligned}
\end{equation}
\end{problem}

where $f_i$ is the optimization value of deployment $d_i$ and $\vec{w_i}$ is the workforce required to successfully recommend $k$ strategies it. A deployment request $d_i$ is successful, if for each of the $k$ strategies in the recommended set of strategies $S^i_d$, the following three criteria are met:  $s.cost \leq d_i.cost$, $s.latency \leq d_i.latency$ and $s.quality \geq d_i.quality$.

Using Example~\ref{ex1},  $d_3$ is successful, as it will return $S^3_d = \{s_2,s_3,s_4\}$, such that $d_3.cost \ge s_4.cost \ge s_3.cost \ge s_2.cost $ \& $d_3.latency \ge s_4.latency\ge s_3.latency\ge s_2.latency$ \& $d_3.quality \le s_4.quality \le s_3.quality \le s_2.quality$, and it could be deployed with the available workforce  $W=0.8$.

In this work, $F$ is designed to maximize one of two different platform centric-goals: task throughput and pay-off.

{\em Throughput} maximizes the total number of successful strategy recommendations without exceeding $W$. Formally,
\begin{equation}
\begin{aligned}
& \text{Maximize}
  \sum_{i = 1}^{m} x_i  \\
& \text{s.t.}   \sum x_i \times \vec{w_i} \leq W \\
&   x_i =
\begin{cases}
1& d_i.cost \leq s_j.cost  \text{ AND } \\
 &   d_i.latency \leq s_j.latency \text{ AND} \\
& d_i.quality \geq s_j.{quality} \text{ AND} \\
& |S^i_{d}| = k , \forall i=1,\ldots, m; j=1,\ldots,|\mathcal{S}|\\
0& otherwise
\end{cases} \\
\end{aligned}
\label{throughput}
\end{equation}
{\em Pay-off} maximizes $d_i.cost$, if $d_i$ is a successful deployment request without exceeding $W$. The rest of the formulation is akin to Equation~\ref{throughput}.

\begin{problem}\label{pbm2}
{\bf Alternative Parameter Recommendation:} 
Given a deployment $d$, worker availability $W$, a set of deployment strategies $\mathcal{S}$, and a cardinality constraint $k$,  \ADPaR recommends an alternative deployment $d'$ and associated $k$ strategies, such that,  the Euclidean distance ($\ell_2$)  between $d$ and $d'$ is minimized.

Formally, our problem could be stated as a constrained optimization problem:
\begin{equation}\label{multiobjective-prob}
\begin{aligned}
& \text{min}
& & (d'.cost - d.cost)^2 + (d'.latency - d.latency)^2 \\ & & &+ (d'.quality - d.quality)^2\\
& \text{s.t.} & &  \sum_{j = 1}^{|\mathcal{S}|} x_j = k \\
& & & x_j =
\begin{cases}
1& d'.cost \leq s_j.cost  \text{ AND } \\
 &   d'.latency \leq s_j.latency \text{ AND} \\
& d'.quality \geq s_j.{quality}\\
0& otherwise
\end{cases} \\
\end{aligned}
\end{equation}
\end{problem}

Based on Example~\ref{ex1}, if \ADPaR takes the following input values $d_1:(0.4,0.17,0.28)$ and $\mathcal{S}$. For $d_1$, the alternative recommendation should be $(0.4,0.5,0.28)$ with three strategies $s_1, s_2, s_3$.

%

\section{Deployment Recommendation}\label{sec:batch}
We describe our proposed  solution for {\em Batch Deployment Recommendation} (Problem~\ref{pbm1}). Given $m$ requests and $W$, the {\em Aggregator}  invokes \BatchStrat, our  unified solution to solve the batch deployment recommendation problem. There are three major steps involved. \BatchStrat first obtains model parameters of a set of candidate strategies (Section~\ref{depmod}), then computes workforce requirement to satisfy these requests (Section~\ref{wr}), and finally performs optimization to select a subset of $m$ deployment requests, such that different platform-centric optimization goals could be achieved (Section~\ref{od}).

We first provide an abstraction which serves the purpose of designing \BatchStrat. Given $m$ deployment requests and $W$ workforce availability, we intend to compute a two dimensional matrix $\mathcal{W}$, where there are 
$|\mathcal{S}|$ columns that map to available deployment strategies and $m$ rows of different deployment requests. Figure~\ref{mapping} shows the matrix built for Example~\ref{ex1}. A cell $w_{ij}$  in this matrix estimates the workforce required to deploy $i$-th request using $j$-th strategy. This matrix $\mathcal{W}$ is crucial to enable platform centric optimization for batch deployment. 

\subsection{Deployment Strategy Modeling} \label{depmod}
\BatchStrat first performs deployment strategy modeling to estimate quality, cost, latency of a strategy $s$ for a given deployment request $d$. As the first principled solution, it models these parameters as a  linear function of worker availability, from the filtered pool of workers whose profiles match tasks in the deployment request \footnote{\small We note that \StratRec could be adapted for  tasks that do not exhibit such linear relationships.}.  Therefore, if $d$ is deployed using strategy $s$, the quality parameter of this deployment is modeled as:

\begin{equation}\label{eq2}
s_d.quality = \alpha_{qds}.(w_{qds})+\beta_{qds}
\end{equation}

Our experimental evaluation (Table~\ref{mc}) in Section~\ref{realdata}, performed on AMT validates this linearity assumption with $90\%$ statistical significance for two text editing tasks.

Model parameters  $\alpha$ and $\beta$ are obtained for every $s$, $d$, and parameter (quality, cost, latency) combination, by fitting historical data to this linear model.  Once these parameters are known, \BatchStrat uses Equation~\ref{eq2} again to estimate workforce requirement  $w_{qds}$ to satisfy quality threshold (cost and latency like-wise) for deployment $d$ using strategy $s$.  We repeat this exercise for each $s \in \mathcal{S}$, which comprises our set of candidate strategies for a deployment $d$.

\subsection{Workforce Requirement Computation}\label{wr}
The goal of the {\em Workforce Requirement Computation} is to estimate workforce requirement per (deployment, strategy) pair.  It performs that in two sub-steps, as described below. \\
{\bf (1) Computing Matrix $\mathcal{W}$: } The first step is to compute $\mathcal{W}$, where $w_{i,j}$ represents the workforce requirement of deploying $d_i$ with strategy $s_j$. Recall that in Equation~\ref{eq2}, as long as for a deployment $d_i$, the deployment parameters on quality, cost, and latency, i.e., $d_i.quality$, $d_i.cost$ and $d_i.latency$ are known, for a strategy, $s_j$, we can compute $w_{i,j}$, i.e., that is the minimum workforce needed to achieve those thresholds, by considering the equality condition, i.e.,  $s_j.quality=d_i.quality$ (similarly for cost and latency), and solving Equation~\ref{eq2} for $w$, with known ($\alpha, \beta$) values.
Using Example~\ref{ex1}, the table in Figure~\ref{mapping} shows the rows and columns of matrix $\mathcal{W}$ and how a workforce requirement could be calculated for  $w_{11}$. Basically, once we solve the workforce requirement of quality, cost, and latency($w_{qij}$, $w_{cij}$, $w_{lij}$), the overall workforce requirement of deploying $d_i$ using $s_j$ is the maximum over these three requirements. Formally, they could be stated as follows:

\[
  w_{ij}=Max \begin{cases}
               d_i.quality=\alpha_{qij} w_{qij}+\beta_{qij}\\
               d_i.cost=\alpha_{cij} w_{cij}+ \beta_{cij}\\
               d_i.latency=\alpha_{lij} w_{lij}+ \beta_{lij}
            \end{cases}
\]
Using Example~\ref{ex1}, $w_{11}$ is the maximum over $\{w_{q11},w_{c11}, w_{l11}\}$.  Figure~\ref{mapping} shows how $w_{11}$ needs to be computed for deployment $d_1$ and strategy $s_1$ for the running example.

{\bf Running Time:} Running time of computing  $\mathcal{W}$ is $O(m |\mathcal{S}|)$, since computing each cell $w_{ij}$ takes constant time.


\begin{figure*}[h]
	\subfloat[Requirement for ($d_1,s_1$) ]{
		\includegraphics[height=3cm, width=.3\textwidth]{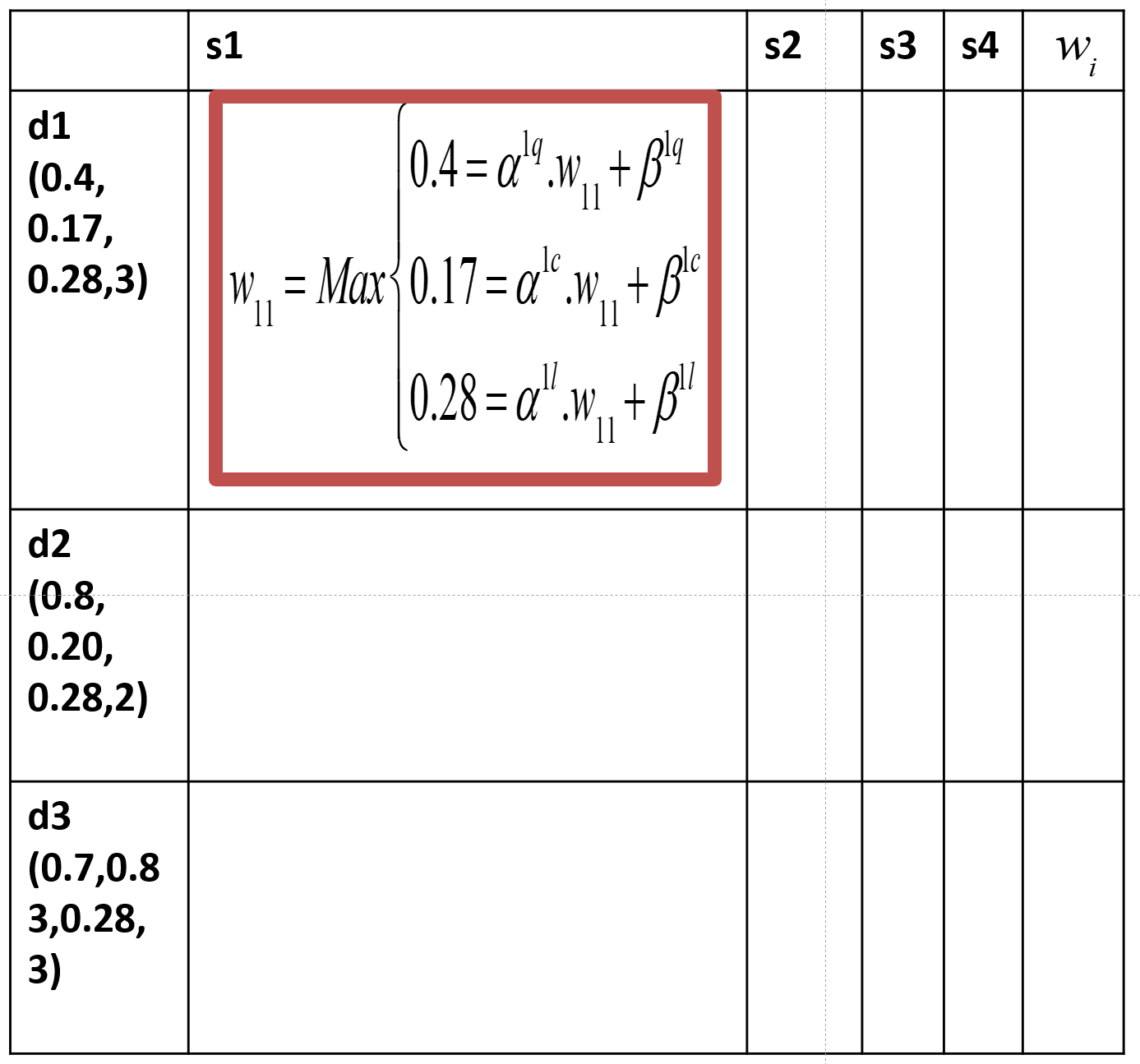}\label{mapping}
	}
	\hfill
	\subfloat[Aggregated requirement per request (Sum)]{
		\includegraphics[height=3cm, width=.3\textwidth]{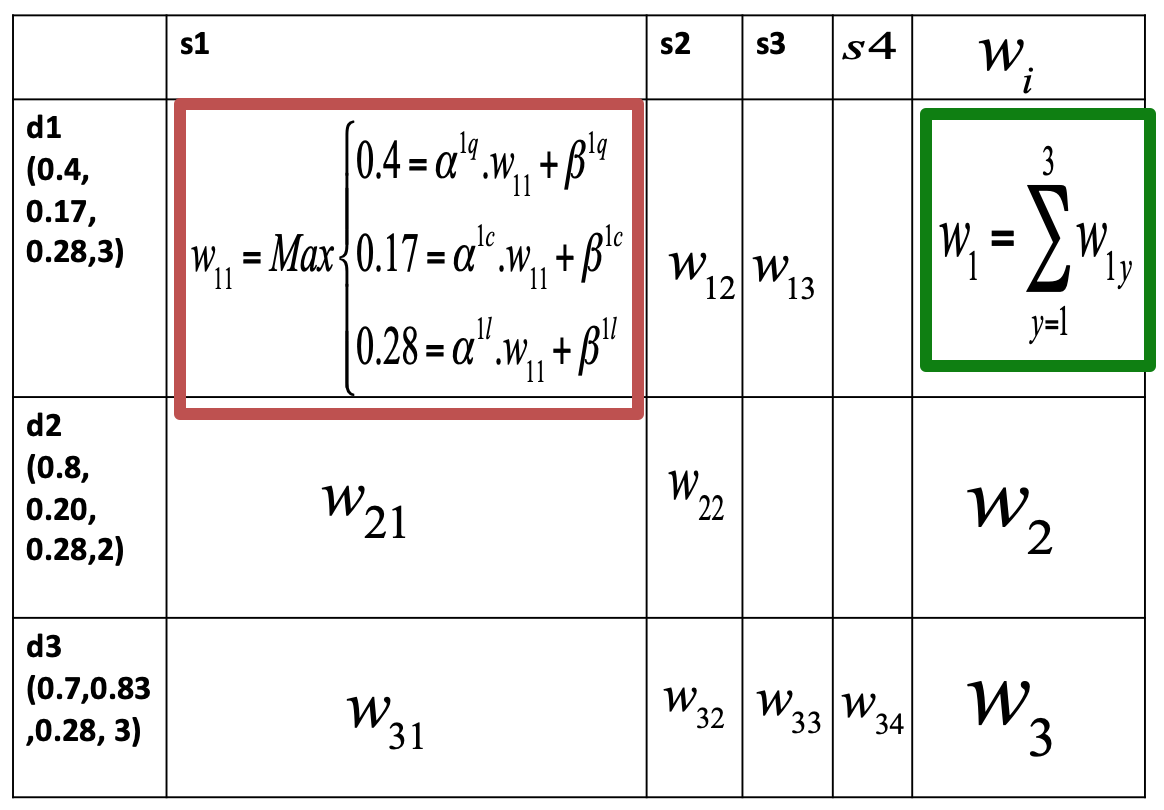}\label{sum}
	}
	\hfill
	\subfloat[Aggregated requirement per request (Max)]{
		\includegraphics[height=3cm, width=.3\textwidth]{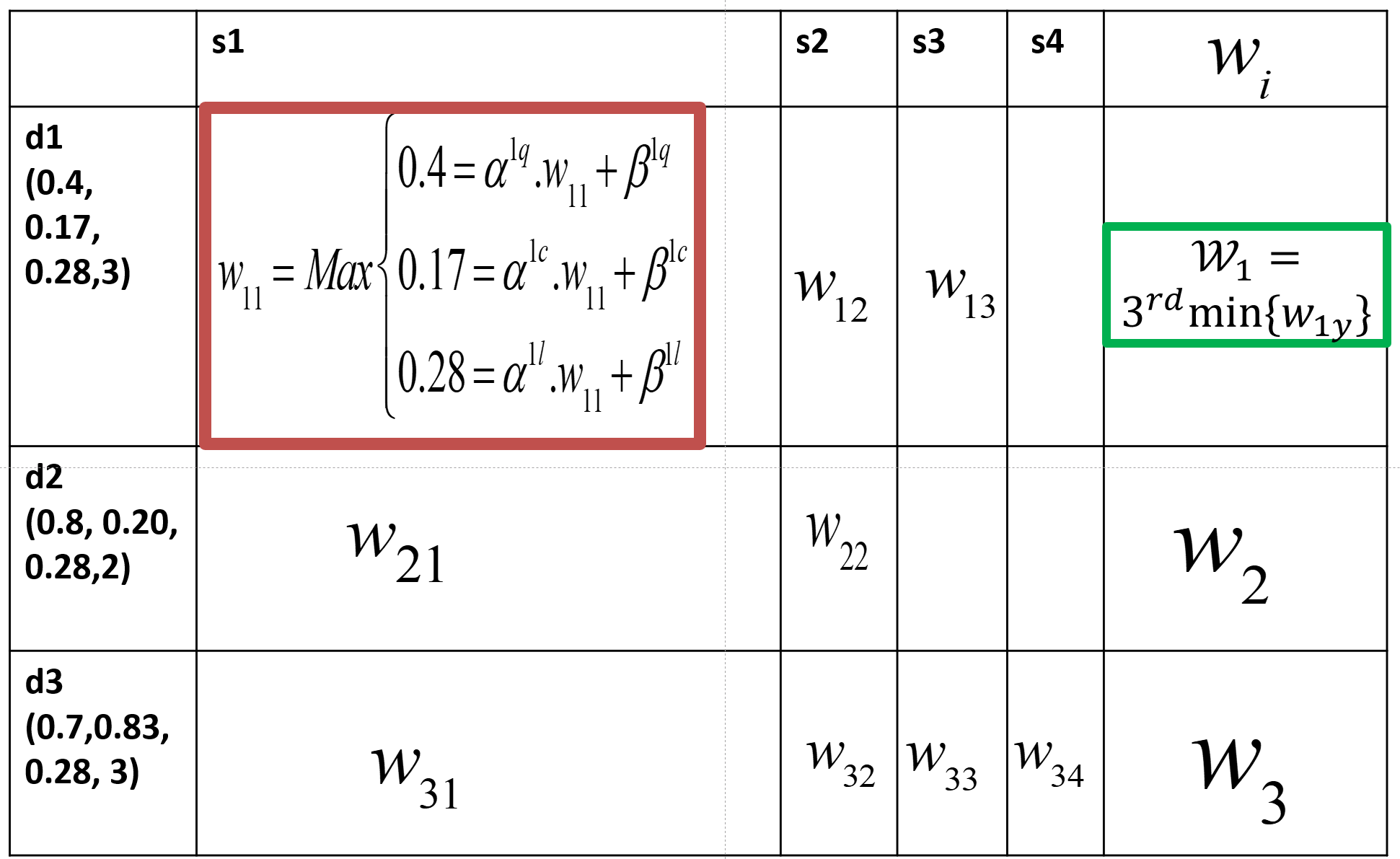}\label{max}
	}
	\caption{Computing Workforce Requirement}
	\label{WR}
\end{figure*}

{\bf (2) Computing Workforce Requirement per Deployment:} For a deployment request $d_i$ to be successful, \BatchStrat has to find $k$ strategies, such that each satisfies the deployment parameters.  In step (2), we investigate how to make compute workforce requirement for all $k$ strategies, for each $d_i$.  The output of this step produces a vector $\vec{W}$ of length $m$, where the $i$-th value represents the aggregated workforce requirement for request $d_i$. Computing $\vec{W}$ requires understanding of two cases:
\begin{itemize}
\item {\bf Sum-case:} It is possible that the task designer intends to perform the deployment using all $k$ strategies.  Therefore, the minimum workforce ($w_i$) needed to satisfy cardinality constraint $k_i$ is $\Sigma_{ y=1}^{ k} w_{iy}$ (where $w_{iy}$ is the $y$-th smallest workforce value in row $i$ of matrix $\mathcal{W}$.
\item {\bf Max-case:} The task designer intends to only deploy one of the $k$ recommended strategies - in that case, $w_i =  w_{iy}$, (where $w_{iy}$ is the $k$-th smallest workforce value in row $i$ of matrix $\mathcal{W}$).
\end{itemize}
Figures~\ref{sum} and~\ref{max} represent how $\vec{W}$ is calculated considering sum-case and max-case, respectively.

{\bf Running Time:} The running time of computing the aggregated workforce requirement of the $i$-th deployment request is $O(|\mathcal{S}| k log |\mathcal{S}|)$, if we use min-heaps to retrieve the $k$ smallest numbers. The overall running time is again $O(m  k \ log |\mathcal{S}|)$.

\subsection{Optimization-Guided Batch Deployment}\label{od}
Finally, we focus on the optimization step of \BatchStrat, where, given $\vec{W}$, the objective is to distribute the available workforce $W$ among $m$ deployment requests such that it optimizes a platform-centric goal $F$.  Since $W$ can be limited, it may not be possible to successfully satisfy all deployment requests in a single batch. This requires distributing $W$ judiciously among competing deployment requests and satisfying the ones that maximize platform-centric optimization goals, i.e., throughput or pay-off.

 At this point, a keen reader may notice that the batch deployment problem bears resemblance to a well-known discrete optimization problem that falls into the general category of assignment problems, specifically, Knapsack-type of problems~\cite{garey2002computers}. The objective is to maximize a goal (in this case, throughput or pay-off), subject to the capacity constraint of worker availability $W$. In fact, depending on the nature of the problem, the optimization-guided batch deployment problem could become intractable.

Intuitively, when the objective is only to maximize throughput (i.e., the number of satisfied deployment requests), the problem is polynomial-time solvable. However, when there is an additional dimension, such as pay-off, the problem becomes NP-hard problem, as we shall prove next.

	\begin{figure}[h]
		\centering
		\includegraphics[width=6cm,height=6cm]{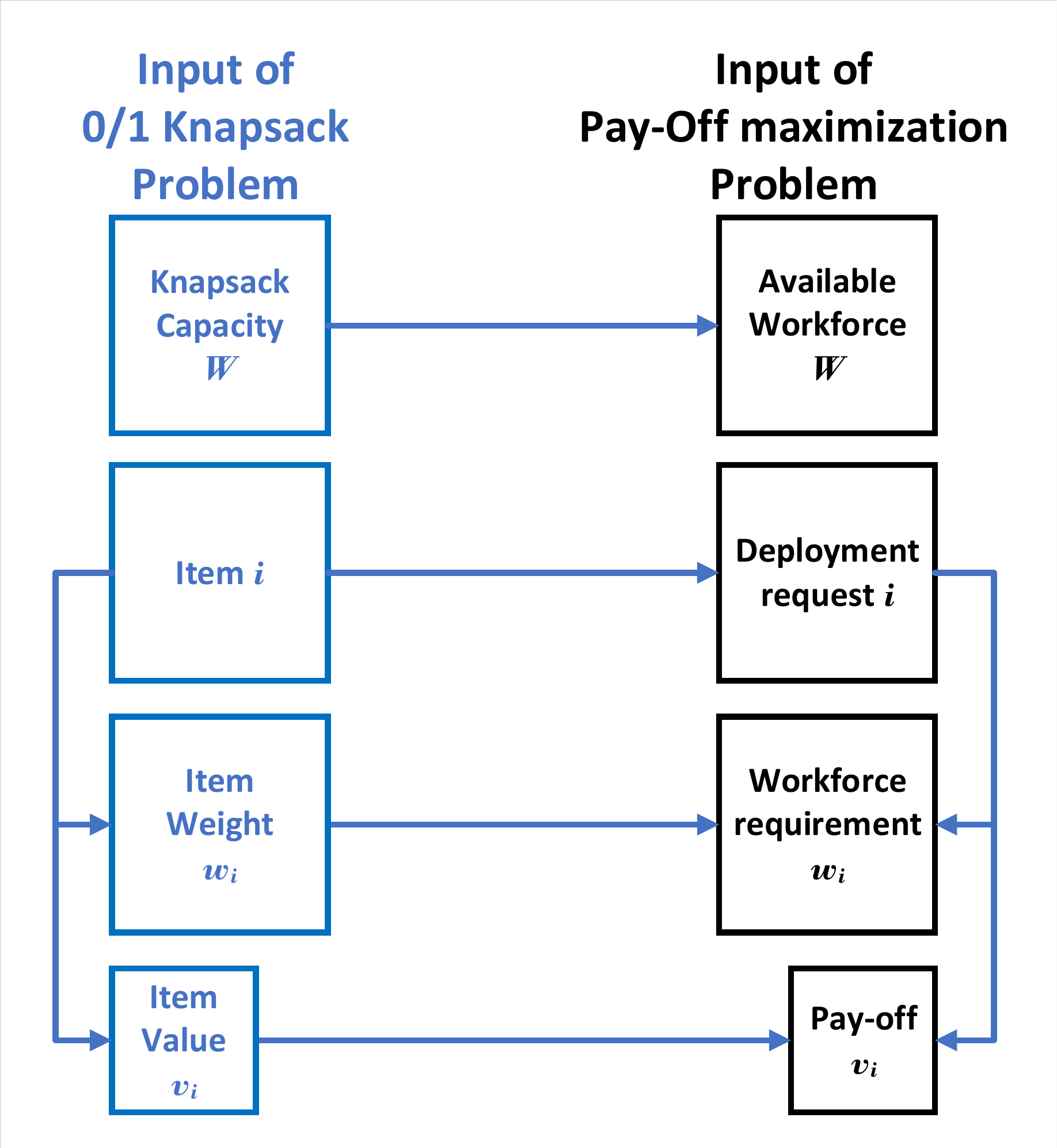}
		\caption{Reduction from the $0/1$ Knapsack problem to the Pay-off Maximization problem}
		\label{p1}
	\end{figure}

\begin{theorem}
The Pay-Off maximization problem is NP-hard.
\end{theorem}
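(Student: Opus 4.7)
The plan is to establish NP-hardness by a polynomial-time reduction from the classical $0/1$ Knapsack decision problem. An instance of $0/1$ Knapsack consists of $n$ items with positive integer weights $w'_1,\dots,w'_n$ and values $v'_1,\dots,v'_n$, a capacity $C$, and a target value $V$; the question is whether there is a subset $I\subseteq\{1,\dots,n\}$ with $\sum_{i\in I} w'_i \leq C$ and $\sum_{i\in I} v'_i \geq V$. Before giving the reduction, I would briefly note that the decision version of Pay-Off Maximization lies in NP, since given a candidate set of deployment requests together with the $k$ strategies assigned to each, one can verify in polynomial time that each selected $d_i$ is successful, that $\sum \vec{w_i}\leq W$, and that the total pay-off meets the target.

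For the reduction, I would take an arbitrary Knapsack instance and, after an initial normalization step that scales weights and values into $[0,1]$ (which preserves the answer and only blows up size polynomially), construct a Pay-Off Maximization instance with $m=n$ deployment requests and capacity $W=C$. For each item $i$ I create a deployment request $d_i$ whose cost parameter is set to $d_i.\mathit{cost}=v'_i$, so that serving $d_i$ contributes exactly $v'_i$ to the pay-off objective. I would then use a trivial strategy set (for example $|\mathcal{S}|=1$ and $k=1$, or a uniform family of strategies) and pick the linear-model coefficients $(\alpha_{qij},\beta_{qij})$ and $(\alpha_{lij},\beta_{lij})$ in Equation~\ref{eq2} together with the quality and latency thresholds $d_i.\mathit{quality}$ and $d_i.\mathit{latency}$ so that the quality and latency constraints are satisfied for arbitrarily small workforce (making $w_{qij}$ and $w_{lij}$ zero or negligible). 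The coefficients $(\alpha_{cij},\beta_{cij})$ are then chosen so that the cost-induced workforce requirement $w_{cij}$ equals exactly $w'_i$. By the definition of the aggregate workforce requirement in Section~\ref{wr}, this forces $\vec{w_i}=w'_i$ for each $i$.

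Under this construction, a subset $I$ of requests is feasible in the Pay-Off instance (i.e., $\sum_{i\in I}\vec{w_i}\leq W$ with each $d_i$ successful) if and only if $\sum_{i\in I} w'_i \leq C$ in the Knapsack instance, and the pay-off objective $\sum_{i\in I} d_i.\mathit{cost}$ equals the Knapsack value $\sum_{i\in I} v'_i$. Hence an algorithm that solves Pay-Off Maximization on the constructed instance immediately decides the Knapsack instance, establishing NP-hardness. The reduction is clearly polynomial: only $O(n)$ requests are created, and all parameters are polynomial in the input size.

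The main obstacle I anticipate is the last piece of the construction: carefully engineering the model parameters so that (i) the quality and latency constraints are non-binding, (ii) the cost-based workforce requirement matches the Knapsack weight exactly, and (iii) all quantities respect the normalization conventions ($[0,1]$ ranges, nonnegative $\alpha$, sensible $\beta$) implicit in Section~\ref{depmod}. This is a matter of choosing the linear coefficients transparently and perhaps scaling the Knapsack input beforehand, but it is where the argument must be written with care so that the reduction is unambiguously valid rather than merely plausible.
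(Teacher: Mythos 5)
Your proposal is correct and follows essentially the same route as the paper: a polynomial-time reduction from $0/1$ Knapsack in which each item of weight $w_i$ and value $v_i$ becomes a deployment request with workforce requirement $w_i$ and pay-off (cost parameter) $v_i$, so that feasible subsets and objective values coincide. You are in fact more careful than the paper's proof, which simply asserts the correspondence without explaining how the linear model of Equation~\ref{eq2} can be instantiated to realize arbitrary workforce requirements, and which omits the NP-membership remark.
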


\begin{proof}
		To prove NP-hardness, we reduce an instance of the known NP-hard problem $0/1$ Knapsack problem~\cite{garey2002computers} problem to an instance of the Pay-off Maximization problem. Given an instance of the 0/1 Knapsack problem, an instance of our problem could be created as follows: an item $i$ of weight $w_i$ and value $v_i$ represents a deployment request $d_i$ with minimum workforce requirement $w_i$ and pay-off $v_i$, as the Figure \ref{p1} shown.  Clearly this transformation is performed in polynomial time. After that, it is easy to notice that a solution exists for the 0/1 Knapsack problem iff the same solution solves our Pay-Off maximization problem.
\end{proof}

Our proposed solution bears similarity to the greedy approximation algorithm of the Knapsack problem~\cite{ibarra1975fast}. The objective is to sort the deployment strategies in non-increasing order of $\frac{f_i}{\vec{w_i}}$.  The algorithm greedily adds deployments based on this sorted order until it hits a deployment $d_i$ that can no longer be satisfied by $W$, that is, $\Sigma_{i=1..x}\ d_i > W$.  At that step, it chooses the better of $\{d_1,d_2,d_{i-1}\}$ and $d_i$ and the process continues until no further deployment requests could be satisfied based on $W$. Lines $4-8$ in Algorithm \BatchStrat describe those steps.

{\bf Running Time:} The running time of this step is dominated by the sorting time of the deployment requests, which is $O(m \ log m)$.

\begin{algorithm}
\caption{Algorithm \BatchStrat}\label{alg:batch}
\begin{algorithmic}[1]
\State {\bf Input:} $m$ deployment requests, $\mathcal{S}$,  objective function $F$, available workforce $W$
\State {\bf Output:} recommendations for a subset of deployment requests.
\State Estimate model parameters for each (strategy, deployment) pair.
\State Compute Workforce Requirement Matrix $\mathcal{W}$
\State Compute Workforce Requirement per Deployment Vector $\vec{W}$
\State Compute the objective function value $f_i$ of each deployment request $d_i$
\State Sort the deployment strategies in non-increasing order of  $\frac{f_i}{\vec{w_i}}$
\State Greedily add deployments until we hit $d_i$, such that $\Sigma_{i=1..x}\ d_i > W$
\State Pick the better of $\{d_1,d_2,d_{i-1}\}$ and $d_i$
\end{algorithmic}
\end{algorithm}

\subsubsection{Maximizing Throughput}
When task throughput is maximized, the objective function $F$ is computed simply by counting the number of deployment requests that are satisfied by the {\tt Aggregator}. Therefore, $f_i$, the objective function value of deployment $d_i$ is the same for all the deployment requests and is $1$. Our solution, \BatchStrat-ThroughPut, sorts the deployment requests in increasing order of workforce requirement $\vec{w_i}$ to make 
$\frac{1}{\vec{w_i}}$ non-increasing. Other than that, the rest of the algorithm remains unchanged.
\begin{theorem}
Algorithm \BatchStrat-ThroughPut gives an exact solution to the problem.
\end{theorem}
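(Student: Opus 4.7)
The plan is to reduce the throughput-maximization instance to a unit-value knapsack, then verify that sorting by $\vec{w_i}$ and sweeping in is exactly the textbook optimal algorithm for that special case.

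First I would recast the problem. After the preprocessing stages (Sections~\ref{depmod} and~\ref{wr}), each deployment request $d_i$ has an associated aggregated workforce requirement $\vec{w_i}$ (infinite if fewer than $k$ strategies can satisfy it, in which case $d_i$ is dropped). Any valid \textsf{Aggregator} output corresponds to a subset $T\subseteq\{d_1,\ldots,d_m\}$ with $\sum_{i\in T}\vec{w_i}\le W$, and the throughput objective is simply $|T|$. So the problem is: maximize $|T|$ subject to the budget constraint $\sum_{i\in T}\vec{w_i}\le W$. I would note in passing that the ``pick the better of $\{d_1,\ldots,d_{i-1}\}$ and $d_i$'' step collapses trivially here, since any prefix of length $i-1\ge 1$ has cardinality $\ge 1$.

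Next I would give an exchange argument. Without loss of generality relabel so that $\vec{w_1}\le\vec{w_2}\le\cdots\le\vec{w_m}$. Then \BatchStrat-ThroughPut returns $G=\{d_1,\ldots,d_g\}$ where $g$ is the largest integer with $\sum_{j=1}^{g}\vec{w_j}\le W$. Suppose for contradiction that there is a feasible solution $O$ with $|O|=o>g$. Write the workforce values of $O$ in sorted order as $u_1\le u_2\le\cdots\le u_o$. Because $\vec{w_j}$ is the $j$-th smallest workforce requirement overall, we have $u_j\ge \vec{w_j}$ for every $j=1,\ldots,o$; in particular for $j=1,\ldots,g+1$. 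Therefore
\begin{equation*}
\sum_{j=1}^{g+1}\vec{w_j}\;\le\;\sum_{j=1}^{g+1}u_j\;\le\;\sum_{j=1}^{o}u_j\;\le\;W,
\end{equation*}
which contradicts the maximality of $g$. Hence $|G|\ge|O|$ for every feasible $O$, and \BatchStrat-ThroughPut is optimal.

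There is no real obstacle: the only subtlety is checking that the generic ``pick the better of the prefix and the next item'' branch of Algorithm~\ref{alg:batch} does not damage optimality in the throughput specialization, which follows because $f_i=1$ for every $i$ makes any prefix of length $\ge 1$ weakly preferable to the single item that first violates the budget. I would close by remarking that the argument is essentially the standard proof that greedy-by-weight is exact for the unit-value knapsack, which is the form the throughput problem collapses to.
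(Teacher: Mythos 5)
Your proof is correct and follows essentially the same route as the paper's: both argue by contradiction that if an optimal solution contained more requests than the greedy prefix, then comparing workforce requirements position-by-position in sorted order forces the optimal solution to exceed the budget $W$. Your write-up is in fact tighter than the paper's (the inequality chain $\sum_{j=1}^{g+1}\vec{w_j}\le\sum_{j=1}^{g+1}u_j\le W$ makes the contradiction explicit, and you handle the ``better of the prefix or $d_i$'' branch, which the paper glosses over), but it is the same underlying argument.
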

\begin{proof}
We proof this theorem by the method of contradiction.

Assume the solution $S$ produced by the Algorithm \BatchStrat-ThroughPut has the objective function value $x$. Assume $O$ is the optimal solution with the objective function value $y$, such that $S\ne O$(that is, $x\le y$). Let $d_i$ be the first deployment request, such that, $d_i \notin O$. Therefore, $d_1,\dots,d_{i-1}\in O$.

Let the $i$-th deployment request present in $S$ and $O$  have the workforce requirement  of $w_i$ and $w'_i$, respectively. Since, \BatchStrat-ThroughPut iteratively selects the deployments in ascending order of workforce requirement, therefore, we can say, for all $j\ge i$, $\frac{1}{w_j}\ge \frac{1}{w'_j}$. Therefore, to satisfy $x\le y$,  $O$ must exceed the available workforce constraint $W$. That is a contradiction. Therefore, $x =y$ and Algorithm \BatchStrat-ThroughPut gives an exact solution of the throughput problem.
\end{proof}

\subsubsection{Maximizing Pay-Off}
Unlike throughput, when pay-off is maximized, there is an additional dimension involved that is different potentially for each deployment request. $f_i$ for deployment request $d_i$ is computed using $d^i.cost$, the amount of payment deployment $d_i$ is willing to expend. Other than that, the rest of the algorithm remains unchanged.
\begin{theorem}
Algorithm \BatchStrat-PayOff has a 1/2-approximation factor.
\end{theorem}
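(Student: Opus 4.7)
The plan is to mimic the classical analysis of the greedy $1/2$-approximation for the $0/1$ Knapsack problem, since Algorithm \BatchStrat-PayOff is essentially that greedy strategy applied to the pay-off objective (using $f_i = d_i.cost$ as the ``value'' and $\vec{w_i}$ as the ``weight''), and the reduction in the previous theorem shows the two problems are structurally identical. The key tool will be the LP/fractional relaxation of the underlying knapsack instance, which yields a clean upper bound on the optimum.

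First, I would set up notation. Let $d_1, d_2, \dots, d_m$ denote the deployment requests reindexed in non-increasing order of $f_i / \vec{w_i}$, as produced in line~7 of the algorithm. Let $i$ be the breakpoint index identified in line~8, i.e., the smallest index for which $\sum_{j=1}^{i} \vec{w_j} > W$. Let $A = \{d_1, \dots, d_{i-1}\}$ be the greedy prefix, let $B = \{d_i\}$ be the breakpoint request, and let $\textsc{Alg} = \max\{\sum_{j<i} f_j,\, f_i\}$ be the pay-off returned by the algorithm (line~9). Let $\textsc{Opt}$ denote the optimal pay-off under the capacity constraint $W$.

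Next, I would introduce the fractional relaxation $\textsc{Opt}_{\text{frac}}$, in which fractional ``acceptance'' of a deployment request is allowed, with pay-off and workforce scaling linearly. A standard exchange argument shows that an optimal fractional solution picks requests in non-increasing order of $f_j / \vec{w_j}$, saturating the capacity; hence
\begin{equation*}
\textsc{Opt}_{\text{frac}} \;=\; \sum_{j=1}^{i-1} f_j \;+\; \frac{W - \sum_{j=1}^{i-1}\vec{w_j}}{\vec{w_i}}\cdot f_i \;\le\; \sum_{j=1}^{i-1} f_j + f_i.
\end{equation*}
Since every integral feasible solution is also fractional feasible, $\textsc{Opt} \le \textsc{Opt}_{\text{frac}}$. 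Finally, because $\max\{a,b\} \ge (a+b)/2$ for non-negative $a,b$, we obtain
\begin{equation*}
\textsc{Alg} \;=\; \max\Bigl\{\sum_{j<i} f_j,\; f_i\Bigr\} \;\ge\; \tfrac{1}{2}\Bigl(\sum_{j<i} f_j + f_i\Bigr) \;\ge\; \tfrac{1}{2}\,\textsc{Opt}_{\text{frac}} \;\ge\; \tfrac{1}{2}\,\textsc{Opt},
\end{equation*}
which establishes the $1/2$-approximation ratio.

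Two minor wrinkles to handle cleanly: first, if the greedy algorithm exhausts the list before any breakpoint is reached (i.e., every request fits), then $\textsc{Alg}$ equals the sum of all $f_j$ and trivially matches $\textsc{Opt}$, so the bound holds vacuously; second, one must verify that the breakpoint request $d_i$ is on its own feasible, i.e., $\vec{w_i} \le W$, which is a mild assumption (otherwise $d_i$ could never appear in any feasible solution and can be discarded in a preprocessing step). The main obstacle, if any, is being precise about these edge cases and about the fact that the fractional-knapsack upper bound carries over to our setting despite the cardinality-$k$ strategy constraint: that constraint has already been absorbed into $\vec{w_i}$ via the workforce requirement computation of Section~\ref{wr}, so the remaining optimization is genuinely a $0/1$ knapsack instance and the classical argument applies without modification.
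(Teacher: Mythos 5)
Your proposal is correct and follows essentially the same route as the paper's own proof: bound $\textsc{Opt}$ by the fractional (LP) relaxation, observe that the relaxed optimum is at most the greedy prefix pay-off plus the breakpoint item's pay-off, and conclude via $\max\{a,b\} \ge (a+b)/2$ that taking the better of the prefix and the breakpoint item achieves half the optimum. Your write-up is cleaner on the edge cases (no breakpoint reached, $\vec{w_i} > W$) and on why the cardinality-$k$ constraint is already absorbed into the workforce requirements, but the underlying argument is the same.
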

	\begin{proof}
		Assume $OPT$ be the optimal Payoff value of the problem. According to the algorithm, the output of algorithm \BatchStrat-PayOff is either $\{d_1,...,d_{i-1} \}$ or $\{d_i\}$. Now, if we perform integer relaxation of this problem (that is, it is allowed to satisfy a deployment request partially), the optimal value of the relaxed problem becomes, \\ 
		\begin{equation}
			\text{Payoff}_i\ge\sum_{x = 1...i-1}\text{Payoff}_x+\frac{\vec{W}-\sum_{x=1...i-1}w_x}{w_i}\text{Payoff}_i
		\end{equation}
	
	This is larger than $OPT$, because deployments are allowed to be partially satisfied due to integer relaxation.
	
	 In addition, 
\begin{equation}
 \text{Payoff}_i\ge\frac{\vec{W}-\sum_{x=1...i-1}w_x}{w_i}\text{Payoff}_i
\end{equation}	 
Therefore:
		\begin{equation}
		\sum_{x = 1...i-1}\text{Payoff}_x+\text{Payoff}_i\ge\sum_{x = 1...i-1}\text{Payoff}_x+\frac{\vec{W}-\sum_{x=1...i-1}w_x}{w_i}\text{Payoff}_i\ge OPT
		\end{equation}
		
Since,$\sum_{x = 1...i-1}\text{Payoff}_x+\text{Payof}f_i\ge OPT$, either $\sum_{x = 1...i-1}\text{Payoff}_x \ge \frac{1}{2}OPT$ or $\text{Payoff}_i\ge \frac{1}{2}OPT$. Since the output solution of algorithm \BatchStrat-PayOff is the better one in either $\{d_1,...,d_{i-1} \}$ or $\{d_i\}$, algorithm \BatchStrat-PayOff holds the $\frac{1}{2}$ approximation factor~\cite{lawler1979fast}.

	\end{proof}
	
%
\section{ADPaR}\label{sec:adpar}
We discuss our solution to the \ADPaR problem, that takes a deployment $d$ and strategy set $\mathcal{S}$ as inputs, and is designed to recommend alternative deployment parameters $d'$ to optimize the goal stated in Equation~\ref{multiobjective-prob} (Section~\ref{sec:pbms}), such that $d'$ satisfies the cardinality constraint of $d$.

Going back to Example~\ref{ex1} with the request $d_2$, \StratRec there is no strategy that satisfies $d_2$ (refer to Figure~\ref{fig:3dspace}).

At a high level, \ADPaR bears resemblance to {\em Skyline and Skyband queries}~\cite{sky1, sky2, sky3} - but as we describe in Section~\ref{related}, there are significant differences between these two problems - thus the former solutions do not adapt to solve \ADPaR. Similarly, \ADPaR is significantly different from existing works on query refinement~\cite{mishra,query2,query3,query4}, that we further delineate in Section~\ref{related}.

\begin{figure}
	\subfloat[Deployment parameters in 3-D space]{
		\includegraphics[height=6cm, width=.45\textwidth]{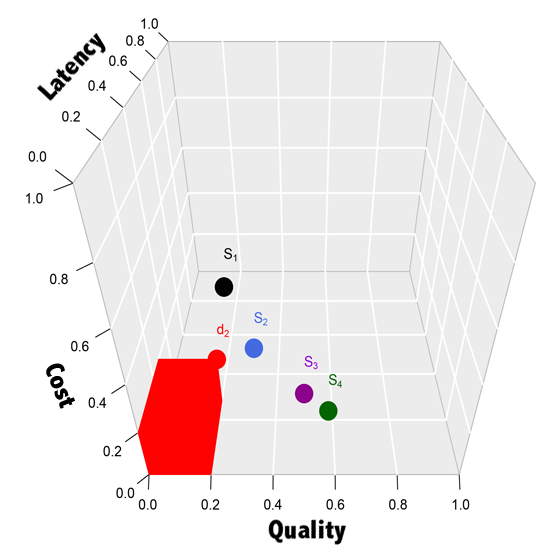}\label{fig:3dspace}
	}
	\hfill
	\subfloat[Projection of $d'$ on ({\tt L, Q}) plane]{
		\includegraphics[height=6cm, width=.45\textwidth]{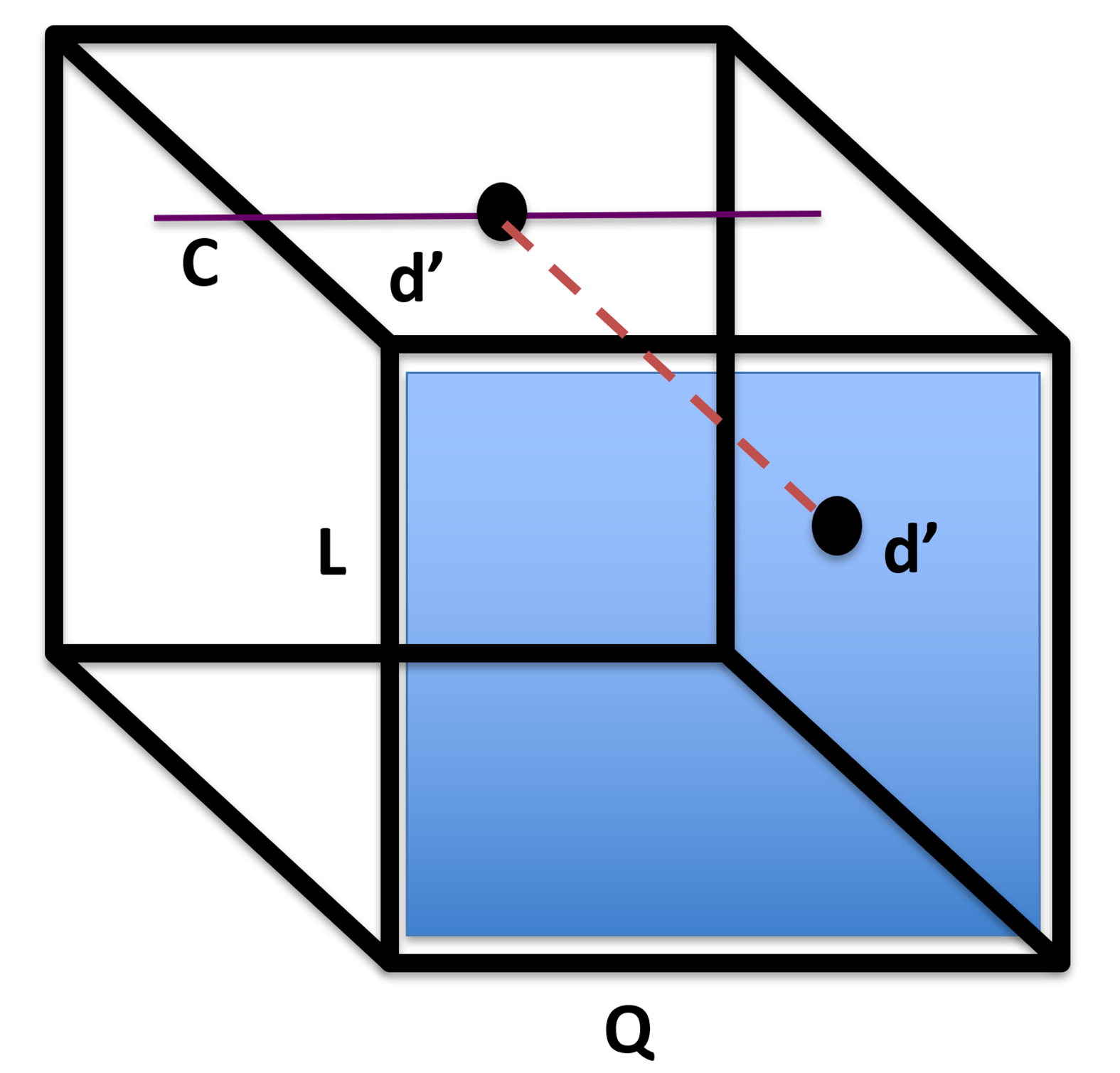}\label{fig:proj}
	}
	\caption{ \ADPaR}
	\label{CC}
\end{figure}

\subsection{Algorithm \ADPaRA}
Our treatment is geometric and exploits the monotonicity of our objective function (Equation~\ref{opt} in Section~\ref{sec:pbms}). Even though the original problem is defined in a continuous space, we present a discretized technique that is exact.  \ADPaRA, employs three sweep-lines~\cite{de1997computational}, one for each parameter, quality, cost, and latency and gradually relaxes the parameters to produce the tightest alternative parameters that admit $k$ strategies. By its unique design choice,  \ADPaRA is empowered to select the parameter that is most suitable to optimize the objective function, and hence, produces exact solutions to \ADPaR.

\ADPaRA has four main steps. Before getting into those details, we present a few simplifications to the problem for the purpose of elucidation. As we have described before, we normalize quality, cost, latency thresholds of a deployment or of a strategy in $[0,1]$, and inverse quality to $(1-quality)$. This step is just for unification, making our treatment for all three parameters uniform inside \ADPaR, where smaller is better, and the deployment thresholds are considered as upper-bounds.  With this, each strategy is a point in a $3$-dimensional space and a deployment parameter (modulo its cardinality constraint) is an axis-parallel hyper-rectangle\cite{de1997computational}  in that space.  Consider Figure~\ref{fig:3dspace} that shows the $4$ strategies in Example~\ref{ex1} and $d_2$ as a hyper-rectangle.

Step-1 of \ADPaRA computes the relaxation (increment) that a deployment requires to satisfy a strategy among each deployment parameter. This is akin to computing $s_i.cost - d_2.cost$ (likewise for quality and latency) and when the strategy cost is smaller than the deployment threshold, it shows no relaxation is needed - hence we transform that to $0$. The problem is studied for quality, cost, and latency (referred to as {\tt Q}, {\tt C}, {\tt L}) (Table~\ref{sp1}). It also initializes $d'= \{1,1,1\}$, the worst possible relaxation.

\smallskip \noindent Step-2 of \ADPaRA involves sorting the strategies based on the computed relaxation values from step-1 in an increasing order across all parameters, as well as keeping track of the index of the strategies and the parameters of the relaxation values. The sorted relaxation scores are stored in list $R$, the corresponding $I$ data structure provides the strategy index, and $D$ provides the parameter value. In other words, $R[j]$ represents the $j$-th smallest relaxation value, where  $I[j]$ represents the index of the strategy and $D[j]$ represents the parameter value corresponding to that. A cursor $r$ is initialized to the first position in $R$ (Table~\ref{sp2}).
Another data structure, a  boolean matrix $M$ of size $|\mathcal{S}| \times 3$ (Table~\ref{csm}) is used that keeps track of the number of strategies that are covered by the current movement of cursor $r$ in list $R$. This matrix is initialized to $0$ and the entries are updated to $1$, as $r$ advances.

\smallskip \noindent Step-3 involves designing three sweep-lines along  {\tt Q}, {\tt C}, and {\tt L} (Table~\ref{sp3}). A  sweep line is an imaginary vertical line which is swept across the plane rightwards. The  {\tt Q} sweep-line sorts the $\mathcal{S}$ in  {\tt C L} plane in  increasing order of  {\tt Q} (the other two works in a similar fashion).  \ADPaRA sweeps the line as it encounters strategies, in order to discretize the sweep. At the beginning, each sweep-line points the $k$-th strategy along {\tt Q}, {\tt C}, {\tt L}, respectively.  $d'$ is updated and contains the current {\tt Q}, {\tt C}, {\tt L} value i..e, $d'.quality= {\tt Q}$, $d'.cost= {\tt C}$, and $d'.latency= {\tt L}$.  Cursor $r$ points to the smallest of these three values in R. Matrix $M$ is updated to see what parameters of which strategies are covered so far.

\smallskip \noindent  At step-4, \ADPaRA checks if the current $d'$ covers $k$ strategies or not.  This involves reading through $I$ and checking if there exists $k$ strategies such that for each strategy $s.quality \leq d'.quality$ and $s.cost \leq d'.cost$ and $s.latency \leq d'.latency$.  If there  are not $k$ such strategies, it advances $r$ to the next position and resets $d'= \{1,1,1\}$ again.

If there are more than $k$ strategies, the new $d'$, however, does not ensure that it is the tightest one to optimize Equation~\ref{multiobjective-prob}. Therefore, \ADPaRA cannot halt.  \ADPaRA needs to check if there exists another $d^{''}$ that still covers $k$ strategies better than $d'$. This can indeed happen as we are dealing with a $3$-dimensional problem and these three values in combination determine the objective function.

\ADPaRA takes turn in considering the current values of each parameter based on $d'$, and creates a  projection on the corresponding $2$-D plane, for the fixed value of the third parameter. Figure~\ref{fig:proj} shows an example in  ({\tt Q, L}) plane for a fixed cost. It then considers all strategies whose $s.cost \leq d'.cost$.  After that, it finds the largest expansion among the two parameters such that this new $d^{''}$ covers $k$ strategies. This gives rise to three new deployment parameters, $d^{''}_{C}$, $d^{''}_{Q}$, $d^{''}_{L}$. It chooses the best of these three and updates $d'$. At this point it checks if $M$ has $k$ strategies covered. If it does, it stops processing and returns the new $d'$ and the $k$ strategies. If it does not, it advances the cursor $r$ to the right.

Using Example~\ref{ex1}, the alternative parameters are $(0.75,0.5,$ $0.28)$ for $d_2$ and  $s_1, s_2, s_3$ are returned.

\begin{table}[h!]
	\centering
	\begin{tabular}{|l|l|l|l|}
\hline
&Cost&Quality&Latency \\ \hline
$s_1$&0&0&1  \\ \hline
$s_2$&0&0&1  \\ \hline
$s_3$&0&0&0  \\ \hline
$s_4$&0&0&0  \\ \hline
\end{tabular}
\caption{matrix $M$}\label{csm}
	\begin{tabular}{|l|l|l|l|}
		\hline
		&Cost&Quality&Latency \\ \hline
		$s_1$&0.3&0.05&0  \\ \hline
		$s_2$&0.05&0.13&0  \\ \hline
		$s_3$&0&0.3&0  \\ \hline
		$s_4$&0&0.38&0  \\ \hline
	\end{tabular}
		\caption{Step 1}\label{sp1}
			\begin{tabular}{|l|l|l|l|l|l|l|}
		\hline
		Relaxation $R$&0&\underline{0}&0&0&0&0 \\ \hline
		Strategy Index $I$&1&\underline{2}&3&4&3&4 \\\hline
		Parameter $D$&L&\underline{L}&L&L&C&C \\\hline
		Relaxation R&0.05&0.05&0.13&0.3&0.3&0.38 \\ \hline
		Strategy Index $I$&1&2&2&1&3&4 \\\hline
		Parameter $D$&Q&C&Q&C&Q&Q \\\hline
	\end{tabular}
		\caption{Step 2}\label{sp2}
			\begin{tabular}{|l|l|l|l|l|l|}
		\hline
		sweep-line(Q)&$C,L$ plane&0.05&0.13&0.3&0.38\\\hline
		&$s.cost$ &0.3&0.05&0&0\\\hline
		&$s.latency$ &0&0&0&0\\\hline
		sweep-line(C)&$Q,L$ plane &0&0&0.05&0.3\\\hline
		&$s.quality$ &0.38&0.3&0.13&0.05\\\hline
		&$s.latency$ &0&0&0&0\\\hline
		sweep-line(L)&$C,Q$ plane &0&0&0&0\\\hline
		&$s.cost$ &0.3&0.05&0&0\\\hline
		&$s.quality$ &0.05&0.13&0.3&0.38\\\hline
	\end{tabular}
		\caption{Step 3}\label{sp3}
\end{table}

\begin{lemma}\label{th1}
 To cover $k$ strategies, $d'$ needs to be initialized  at least to the  $k^{th}$ smallest values on each parameter.
\end{lemma}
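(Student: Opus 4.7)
The lemma is essentially a pigeonhole claim, one axis at a time. Recall the normalization stated just before Step-1: quality is inverted to $1-quality$ so that on every one of the three coordinates ``smaller is better,'' and a strategy $s$ is covered by $d'$ exactly when $s.cost \le d'.cost$ \textbf{and} $s.latency \le d'.latency$ \textbf{and} $s.quality \le d'.quality$ simultaneously. I would prove the lemma by contraposition, showing that dropping $d'$ below the $k$-th smallest value on any single axis already caps the number of covered strategies at $k-1$, independently of the other two coordinates.

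First, I would fix one axis, say cost. Let $c_{(1)} \le c_{(2)} \le \cdots \le c_{(|\mathcal{S}|)}$ be the sorted cost coordinates of the strategies in $\mathcal{S}$. The key observation is that the condition $s.cost \le d'.cost$ is a one-dimensional lower-bound constraint on $d'.cost$, so
\[
|\{s \in \mathcal{S} : s.cost \le d'.cost\}| \;=\; |\{i : c_{(i)} \le d'.cost\}|.
\]
Hence if $d'.cost < c_{(k)}$, this set has cardinality at most $k-1$; since covering a strategy requires (in particular) passing the cost test, $d'$ covers at most $k-1$ strategies, contradicting the hypothesis.

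Second, I would repeat the argument verbatim for latency and for (inverted) quality. The three axis conditions are independent one-dimensional constraints on the three coordinates of $d'$, so each yields its own lower bound: $d'.cost \ge c_{(k)}$, $d'.latency \ge \ell_{(k)}$, and $d'.quality \ge q_{(k)}$, where $q_{(k)}$, $c_{(k)}$, $\ell_{(k)}$ denote the $k$-th smallest strategy coordinates along the respective axes. This is exactly the initialization the lemma requires, and it justifies why each of \ADPaRA's three sweep lines is started at the $k$-th strategy along its parameter instead of anywhere earlier.

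I do not expect a genuine obstacle here: the argument is a triple application of pigeonhole, and the only point to handle carefully is making the three cases symmetric by relying on the $(1-quality)$ transformation so that all three constraints become upper bounds on coordinates of $d'$. Everything else is a one-line counting step per axis.
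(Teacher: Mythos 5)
Your argument is correct and matches the paper's own proof in substance: the paper also argues (by contradiction rather than contraposition) that covering $k$ strategies forces each coordinate of $d'$ to be at least the corresponding coordinate of every covered strategy, hence at least the $k$-th smallest value on each axis. Your per-axis pigeonhole write-up is simply a cleaner, more explicit rendering of the same one-dimensional counting idea.
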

\begin{proof}
	We prove this by the method of contradiction. Assume there is an alternative deployment $d'$, which has a a parameter that is smaller than the corresponding $k^{th}$ smallest value. On the other hand, to be able to satisfy as an alternative deployment parameter, $d'$ must cover at least $k$ strategies on each parameter per its definition. However, to be able to cover at least  $k$ strategies, each parameter of $d'$ must be equal or larger than the corresponding parameters of the strategies that it covers. Hence, the contradiction.  
		
		In Figure \ref{p2}, the red dot shows an alternative deployment request for $d_2$ in Example 1. Figure \ref{p3} is the 2D projection of Figure \ref{p2}. It illustrates further, to cover $3$ strategies, the alternative $d_2$ has to cover at least $3$ strategies on each dimension.  
	\end{proof}
		\begin{figure}[h]
		\centering
		\includegraphics[width=6cm,height=6cm]{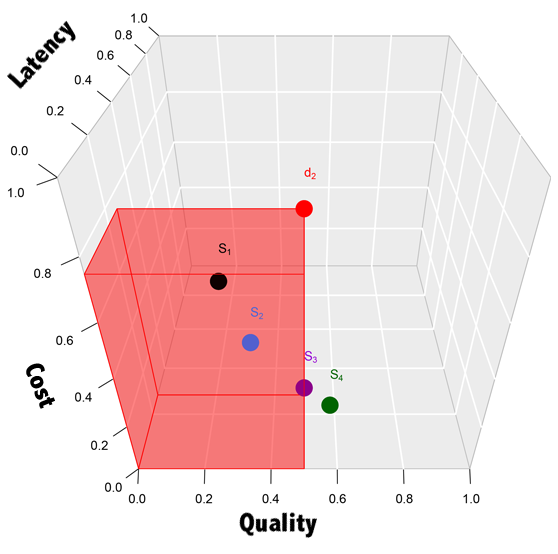}
		\caption{The returned alternative deployment for $d_2$ and  corresponding strategies}
		\label{p2}
	\end{figure}
	\begin{figure}[h]
		\centering
		\includegraphics[width=6cm,height=6cm]{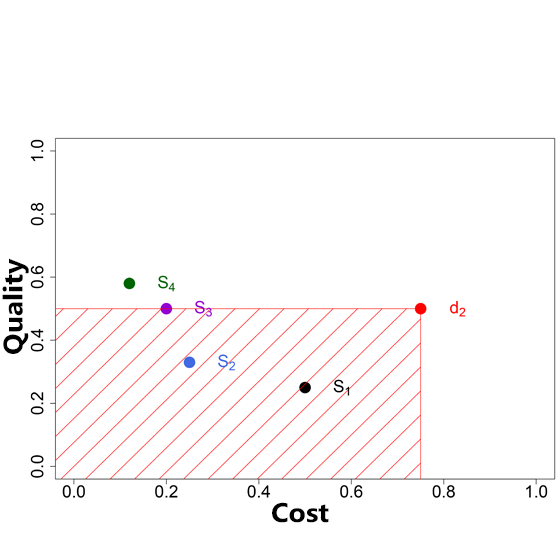}
		\caption{The 2D projection of the returned alternative deployment for $d_2$ and corresponding strategies}
		\label{p3}
	\end{figure}

\begin{lemma}\label{th2}
Going by the relaxation value and parameter order of $R$ and $D$, it ensures the tightest increase in the objective function in \ADPaRA.
\end{lemma}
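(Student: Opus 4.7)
The plan is to combine the monotonicity of the objective with an exchange argument over the sorted list $R$. The first observation is that the objective $(d'.cost-d.cost)^2+(d'.latency-d.latency)^2+(d'.quality-d.quality)^2$ is a sum of three non-negative terms, each monotonically non-decreasing in the corresponding coordinate of $d'$; recall that after the normalization in Step~1 smaller values are better on every parameter, and any feasible $d'$ satisfies $d' \ge d$ coordinate-wise. Consequently, among all feasible alternatives that cover the same fixed triple of strategy constraints, the one minimizing the objective is the coordinate-wise smallest.

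Next I would prove that advancing the cursor $r$ in non-decreasing order over $R$ is greedy-optimal. Each entry $R[j]$ records the marginal relaxation needed on parameter $D[j]$ to subsume one additional strategy $I[j]$ into $d'$ along that axis. Suppose, for contradiction, that there is a feasible $d''$ with strictly smaller objective than the output $d'$, corresponding to a cursor state in which some entry $R[j']$ with $j'>r$ has been used while some $R[j]$ with $j\le r$ has been skipped. Because $R$ is sorted, $R[j]\le R[j']$, so swapping them yields a $d'''$ no larger than $d''$ on every coordinate and, by monotonicity, no larger in objective value. Iterating this swap dominates $d''$ by the cursor-ordered $d'$, contradicting the strict improvement and establishing that processing $R$ in order gives the tightest feasible increment at every step.

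Finally, I would combine this with Lemma~\ref{th1}, which already guarantees that no coordinate of $d'$ can fall below the $k$-th smallest strategy value on that axis. Together with the refinement in Step~4, which contracts each coordinate back to the minimum still compatible with covering $k$ strategies on the corresponding 2D projection, the cursor-order processing provably produces the objective-minimizing $d'$.

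The main obstacle I anticipate is the cross-dimensional accounting: an entry $R[j]$ labelled $D[j]={\tt Q}$ enlarges only $d'.quality$, while covering $k$ strategies requires simultaneous admissibility on all three parameters, as tracked by matrix $M$. The exchange argument must therefore preserve the covered-set recorded in $M$, which I would handle by partitioning $R$ by parameter label: swaps within the same label translate directly into the same row update in $M$, while swaps across labels are justified by noting that the three terms of the objective are independent monotone functions of the respective coordinates, so reshuffling $R$ by label cannot escape domination by the globally sorted order. This decomposition isolates the only subtle part of the argument and lets the greedy/exchange step go through cleanly.
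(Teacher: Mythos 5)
Your overall strategy diverges from the paper's: the paper proves Lemma~\ref{th2} by a short induction on the number of relaxations, arguing only the \emph{local} claim that at each iteration the next entry of the sorted list $R$ yields the smallest single-step increment of the objective; it never attempts the global domination statement you are after. Your exchange argument tries to prove something stronger, and the step where it breaks is exactly the one you flag at the end: swapping a used entry $R[j']$ for a skipped entry $R[j]$ with $R[j]\le R[j']$ yields a coordinate-wise dominated point only when $D[j]=D[j']$. When the labels differ, the swap trades a relaxation on one axis for a relaxation on another, the resulting point is incomparable to $d''$ in the coordinate-wise order, and monotonicity gives you nothing. Your proposed repair (partition $R$ by label and assert that reshuffling by label cannot escape domination by the globally sorted order) is stated rather than proved, and as a global claim it is false: the objective is the sum of squares of the three per-parameter maxima of the used relaxations, so with two strategies having relaxation profiles $(0.4,0.4,0.4)$ and $(0.5,0,0)$ and $k=1$, the globally sorted cursor reaches the first profile (objective $0.48$) before it ever touches the $0.5$ entry, yet the second profile has objective $0.25$. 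Covering the cheaper strategy requires jumping ahead in $R$ on one axis while staying at zero on the others, and no sequence of your swaps reaches that state.

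The second, smaller mismatch is scope. Lemma~\ref{th2} is only an ingredient; the paper defers global optimality to Theorem~\ref{th3}, which relies on the Step-4 projection-and-correction phase precisely to handle situations like the one above where the cursor order alone is not globally optimal. By folding Lemma~\ref{th1} and the Step-4 refinement into your proof of Lemma~\ref{th2}, you are implicitly proving a version of Theorem~\ref{th3} instead, and the load-bearing part of that argument --- why the Step-4 correction recovers every candidate $d''$ that your exchange argument cannot reach --- is not supplied. To match what the paper actually establishes here, you only need the per-step statement: since $R$ is sorted increasingly, advancing the cursor raises one coordinate of $d'$ to the smallest value that admits one more (strategy, parameter) pair, which is the paper's inductive step.
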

	\begin{proof}
The proof is done by induction on the number of relaxation. 

\smallskip \noindent		

{\bf The basis (base case):} we first prove that the statement is true for the very first relaxation.
After the relaxation in step 1, all the values are sorted in increasing order of the  parameters, as shown in Table \ref{sp2}. If there are $n$ strategies under consideration, each with its corresponding quality, cost, and latency parameters, then $\{r_1,...,r_{3n} =R\}$ and $\{d_1,...,d_{3n} = D\}$ are sorted relaxation values and corresponding parameters $r_1\le r_2\le\dots\le r_{3n}$. Assume the scan starts at $d'=$\{ $r_a$, $r_b$ and $r_c$\}. The objective value of the initial setting is $\sqrt{r_a^2 + r_b^2 + r_{c}^2}$, since the objective function is $\sqrt{Cost^2 + Quality^2 + Latency^2}$. 

\smallskip \noindent
Let $r_d$ be the new cursor position  of $d'$, once one of the current parameter (without loss of generality, let $r_a$ be that one) of  $d'$ gets relaxed. Since (Recall Table~\ref{sp2}) $R$ is sorted in increasing order of parameter values of the strategies, after the first relaxation, the increment in the objective function value is smaller than that of any other relaxation. Thus, the statement is true for the first relaxation. \\
\smallskip \noindent		
{\bf Inductive Hypothesis:} Assume the statement to be true for the first $x$ relaxations: \\
\smallskip \noindent		
{\bf Inductive Step:} Now we intend to prove that the statement is true for the $x+1$-th relaxation.	Assume, at the $x$-th iteration, $d' =$\{$r_a', r_b'$,$r_c'$\}. Assume $r_d'$ is the next parameter value in $R$. Therefore, $r_d'$ will be scanned at the $x+1$-th iteration. Since $r_d'$ is the smallest relaxation on the corresponding parameter, therefore the increase in the objective function value from \{$r_a', r_b'$,$r_c'$\} to \{$r_d', r_b'$,$r_c'$\} is smaller than all other relaxation. Therefore, Lemma \ref{th2} is true for the $x+1$-th iteration. Hence the proof.
		
	\end{proof}


\begin{theorem}\label{th3}
\ADPaRA produces an exact solution to the \ADPaR problem.
\end{theorem}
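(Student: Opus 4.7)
The plan is to combine Lemmas~\ref{th1} and~\ref{th2} with a discretization argument in order to show that \ADPaRA both enumerates a finite set of candidates that provably contains an optimum, and visits them in a way that lets it stop as soon as it finds a feasible one. First I would argue that the continuous search space collapses to a grid: for any fixed set of $k$ strategies to be covered, the coordinate-wise tightest $d'$ is the one whose quality, cost and latency are exactly the maxima (after our inversion, the suprema on each axis over those $k$ strategies). Consequently, without loss of optimality we may restrict attention to $d'$ whose coordinates belong to the set of strategy parameter values — exactly the discrete grid swept by the algorithm.

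Next I would invoke Lemma~\ref{th1} to justify initialization: every feasible $d'$ must, on each axis, be at least as large as the $k$-th smallest strategy parameter on that axis; so seeding $d'$ at those three $k$-th-smallest values cannot prune any optimum. Then I would invoke Lemma~\ref{th2} to show that the sorted list $R$ (together with its index structures $I$ and $D$) traverses relaxations in an order that yields non-decreasing increments of the objective $\|d'-d\|_2^2$. This gives the key monotonicity property: among all candidate $d'$ that lie on the grid and cover $k$ strategies, the first one that becomes feasible as the cursor $r$ advances through $R$ has minimum Euclidean distance to $d$.

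The step that needs the most care is the projection substep at Step~4, which I would handle as follows. When the sweep first produces a $d'$ that covers $k$ strategies, the advance on one axis may have overshot what is strictly necessary on that axis given the current values on the other two axes. To avoid this, for each of the three axes in turn \ADPaRA fixes the other two coordinates of $d'$ and asks for the smallest value on the remaining axis that still keeps $k$ strategies covered; because the objective is a sum of three monotone one-dimensional functions (each squared difference is monotone in $d'$ on the half-line $d' \geq d$), replacing one coordinate by its minimum feasible value weakly decreases the objective. Taking the best of the three such candidates $d''_C, d''_Q, d''_L$ therefore yields the pointwise-minimal element of the feasible region reachable without crossing a further grid line; combined with the outer sweep this coincides with the global minimum over the grid.

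The main obstacle I expect is convincing the reader that the projection substep plus the outer sweep together cover \emph{all} optimal configurations, as opposed to only those reachable along a single monotone path in parameter space. I would address this by a short exchange argument: given any optimal $d^{\star}$ on the grid, consider the sweep step at which the cursor $r$ first reaches the largest among the three coordinates of $d^{\star}$; at that moment the matrix $M$ already records coverage of $d^{\star}$'s $k$ witness strategies, and the projection substep on the axis of the maximum coordinate returns a $d''$ whose objective value is at most $\|d^{\star}-d\|_2^2$. Hence \ADPaRA either returns $d^{\star}$ itself or another equally optimal $d'$, which together with the monotone stopping rule completes the proof of exactness.
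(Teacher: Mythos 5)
Your proof follows the same route as the paper's: both rest on Lemma~\ref{th1} to justify initializing $d'$ at the $k$-th smallest value on each axis and on Lemma~\ref{th2} to argue that the cursor sweep through $R$ visits candidate relaxations in order of smallest objective increment, so that the first $d'$ covering $k$ strategies (after the Step-4 projection correction) must be optimal. The paper compresses this into a short contradiction (``any better $d''$ would have been checked before $d'$ is returned''), whereas you make explicit the two points it leaves implicit --- the reduction of the continuous search space to the grid of strategy coordinate values, and the exchange argument showing that the sweep combined with the projection substep actually reaches every optimal grid point --- so your write-up is, if anything, the more complete of the two.
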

	\begin{proof}
		For this, we need to prove that there is no alternative deployment parameters $d^{''}$ that has a smaller objective value than that of $d'$ and still covers $k$ strategies. We prove this by contradiction.
	Assume that is indeed the case, that is, there is an alternative deployment parameter $d^{''}$ whose objective value is smaller than that of $d'$'s and it still covers $k$ strategies. Inside Algorithm \ADPaRA,  when a parameter of $d'$ needs to be updated, it is increased to the corresponding value of its closest strategy, based on Lemma 2, i.e., update $d'$ based of $R$ and $D$, as described in Algorithm \ADPaRA. Therefore, in such cases, \ADPaRA will decide not to relax the next parameter, and $d^{''}$ will be returned, instead of $d'$. However, that is not the case. Hence the contradiction and the proof.
	
In fact, by using the sweep-lines, all possible alternative deployment parameters  that have smaller objective value (than that of $d'$) are checked before $d'$ is returned, just like the Figure \ref{th} shows. All the possible alternative deployment requests in the unchecked area have a larger objective value.  
	\end{proof}
	
		\begin{figure}
		\centering
		\includegraphics[width=6cm,height=6cm]{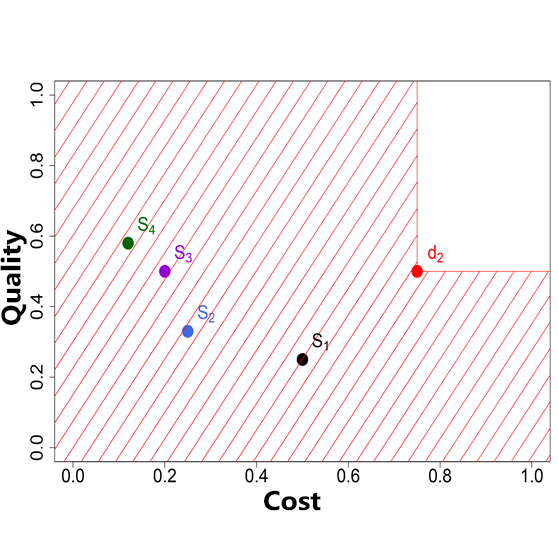}
		\caption{The scanned area when  $d'$ is returned}
		\label{th}
	\end{figure}
	
%

{\bf Running Time:}  Step-1 of Algorithm \ADPaRA takes $O(|\mathcal{S}|)$. Step-2 and 3 are dominated by sorting time, which takes $O(|\mathcal{S}| \ log|\mathcal{S}|)$. Step-4 is the most time-consuming and takes $O(|\mathcal{S}^3|)$. Therefore, the overall running time of the algorithm is cubic to the number of strategies.

\begin{algorithm}[htpb]
\caption{Algorithm \ADPaRA for alternative deployment parameter recommendation}\label{checkk}
	\begin{algorithmic}[1]
		\Require$\mathcal{S}$, $k$, $W$, $d$, $k$.
         \State Compute relaxation values $s.quality - d.quality$, $s.cost - d.cost$, $s.latency-d.latency$, $\forall s \in \mathcal{S}$.
         \State Compute $R$ by sorting $3|\mathcal{S}|$ numbers in increasing order.
         \State Compute $I$ and $D$ accordingly.
        \State Initialize $M$ to all $0$'s and $d'=\{1,1,1\}$
        \State Initialize Cursor $r = R[0]$
         \State Sort ({\tt C L}) , ({\tt Q L}), and ({\tt Q C}) planes based on the {\tt Q}, {\tt C}, {\tt L} sweep-lines respectively.
         \State $x$= $k$-th value in ( {\tt C L}), $y$= $k$-th value in ( {\tt Q L}), $z$= $k$-th value in ({\tt Q C}) plane
         \State Update $d'$= $\{x,y,z\}$
         \State $r$ = minimum $\{x,y,z\}$
         \State Update matrix $M$
            \If{$d'$ covers $ \ge k$ strategies}
             \State Compute the best $d^{''}$ better than $d'$ that covers $k$ strategies
             \If {$M$ covers $k$ strategies}
             \State  $d'=d^{''}$ and return
             \EndIf
             \If {$M$ covers  $<k$ strategies}
             \State  move $r$ to the right
             \EndIf
            \EndIf
            \If{$d'$ covers $< k$ strategies}
            \State Move $r$ to the right
            \State Update $d'$'s one of the parameters by consulting $R$ and $D$
            \EndIf
         \State go back to line $10$
            \end{algorithmic}
\end{algorithm}

\section{Experimental Evaluation}\label{exp}
In our real-world deployments, we estimate worker availability and demonstrate the need for optimization (Section~\ref{realdata}). In synthetic data experiments (Section~\ref{synExp}), we present results to validate the qualitative and scalability aspects of our algorithms.

\subsection{Real Data Experiments}\label{realdata}
We perform two different real data experiments that involve workers from AMT focusing on text editing tasks. The first experiments (Section~\ref{realdata1}) empirically validate key assumptions in designing \StratRec. the second experiments (Section~\ref{realdata2}) validate the effectiveness of \StratRec when compared to the case where no recommendation is made.

\subsubsection{Validating Key Assumptions}\label{realdata1}
We consider two types of tasks: a) sentence translation (translating from English to Hindi) and text creation (writing 4 to 5 sentences on some topic) to validate the following questions:

\noindent {\em 1. Can worker availability be estimated and does it vary over time?} We performed $3$ different deployments for each task. The first deployment was done on the weekend (Friday 12am to Monday 12am), the second deployment was done at the beginning to the middle of the week (Monday to Thursday), the last one is from the middle of the week until the week-end (Thursday to Sunday). 
We design the HITs (Human Intelligence Tasks) in AMT such that each task needs to be undertaken by a maximum number of workers $x$. Worker availability is computed as the ratio of $\frac{x'}{x}$, where $x'$ is the actual number of workers who undertook the task during the deployment time (although this does not fully conform to our formal worker availability definition, it is our sincere attempt to quantify worker availability using public platforms).

\noindent {\em 2. How does worker availability impact deployment parameters?} We need to be able to calculate the quality, cost, and latency, along with worker availability. Latency and cost are easier to calculate, basically, it is the total amount of money that was paid to workers and the total amount of time the workers used to make edits in the document. Since text editing tasks are knowledge-intensive, to compute the quality of the crowd contributions, we ask a domain expert to judge the quality completed tasks as a percentage.
Once worker availability, quality, cost, and latency are computed, we perform curve fitting that has the best fit to the series of data points.

\noindent {\em 3. How do deployment strategies impact different task types?} We deployed both types of text editing tasks using two different deployment strategies {\em SEQ-IND-CRO} and {\em SIM-COL-CRO}  that were shown to be effective with more than 70\% of quality score for short texts~\cite{borromeo2017deployment}. Since our effort here was to evaluate the effectiveness of these two strategies considering quality, cost, and latency, we did not set values for deployment parameters and we simply observed them through experimentation.  

{\bf Tasks and Deployment Design:} We chose three popular English nursery rhymes for sentence translation. Each rhyme consisted of 4-5 lines that were to be translated from English to Hindi (one such sample rhyme is shown in Figure~\ref{fig:sample-sentence}).  For text creation, we considered three popular topics, {\em Robert Mueller Report}, {\em Notre Dame Cathedral}, and {\em 2019 Pulitzer prizes}. One sample text creation is shown in Figure~\ref{fig:sample-text}.


\begin{figure}
	\centering
	\begin{minipage}{0.45\textwidth}
	\includegraphics[width=0.9\textwidth,height=8cm]{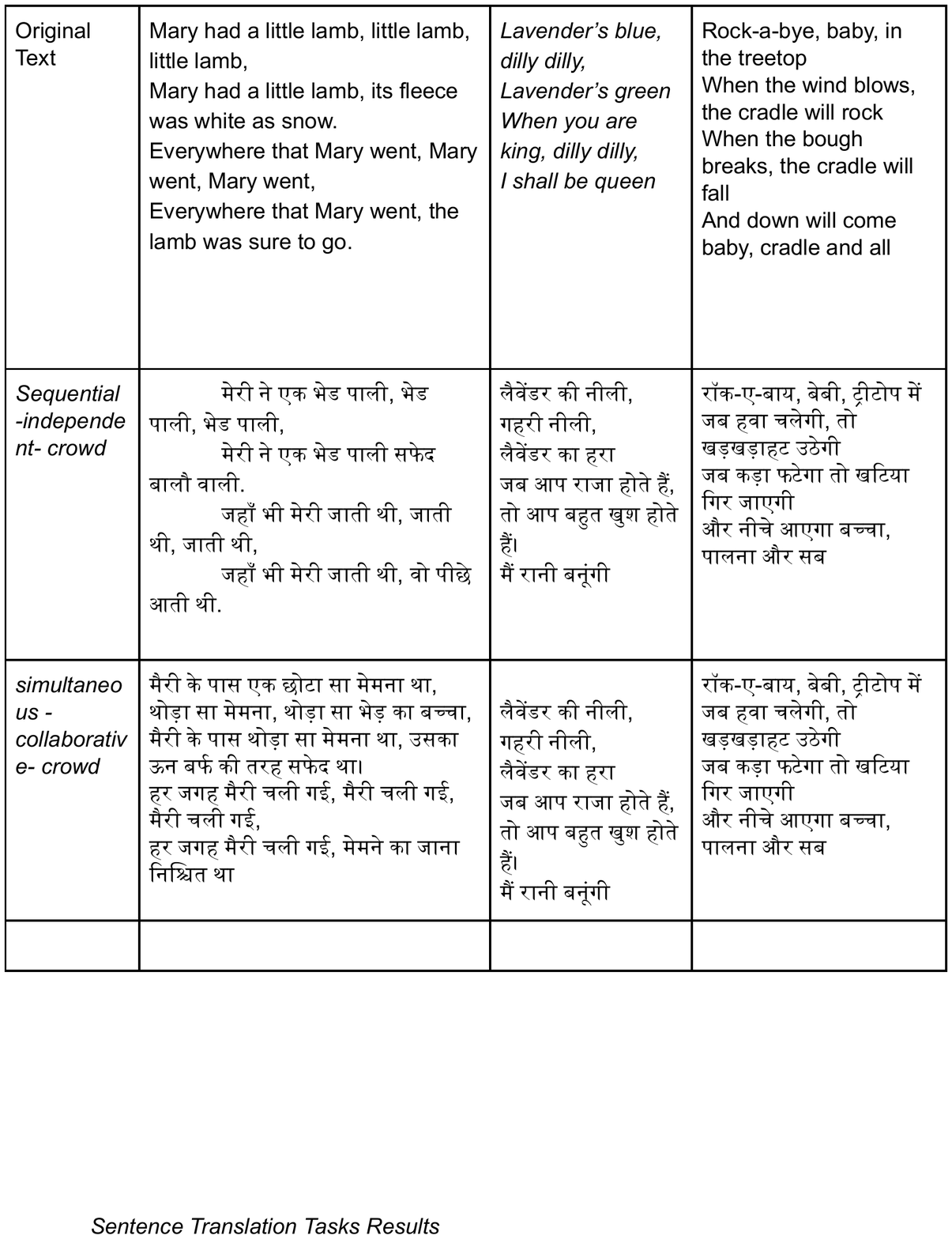}
	\caption{Translation: Original Texts and Translation}
	\label{fig:sample-sentence}
	\end{minipage}\hfill
	\begin{minipage}{0.45\textwidth}
	\centering
	\includegraphics[width=0.9\textwidth,height=8cm]{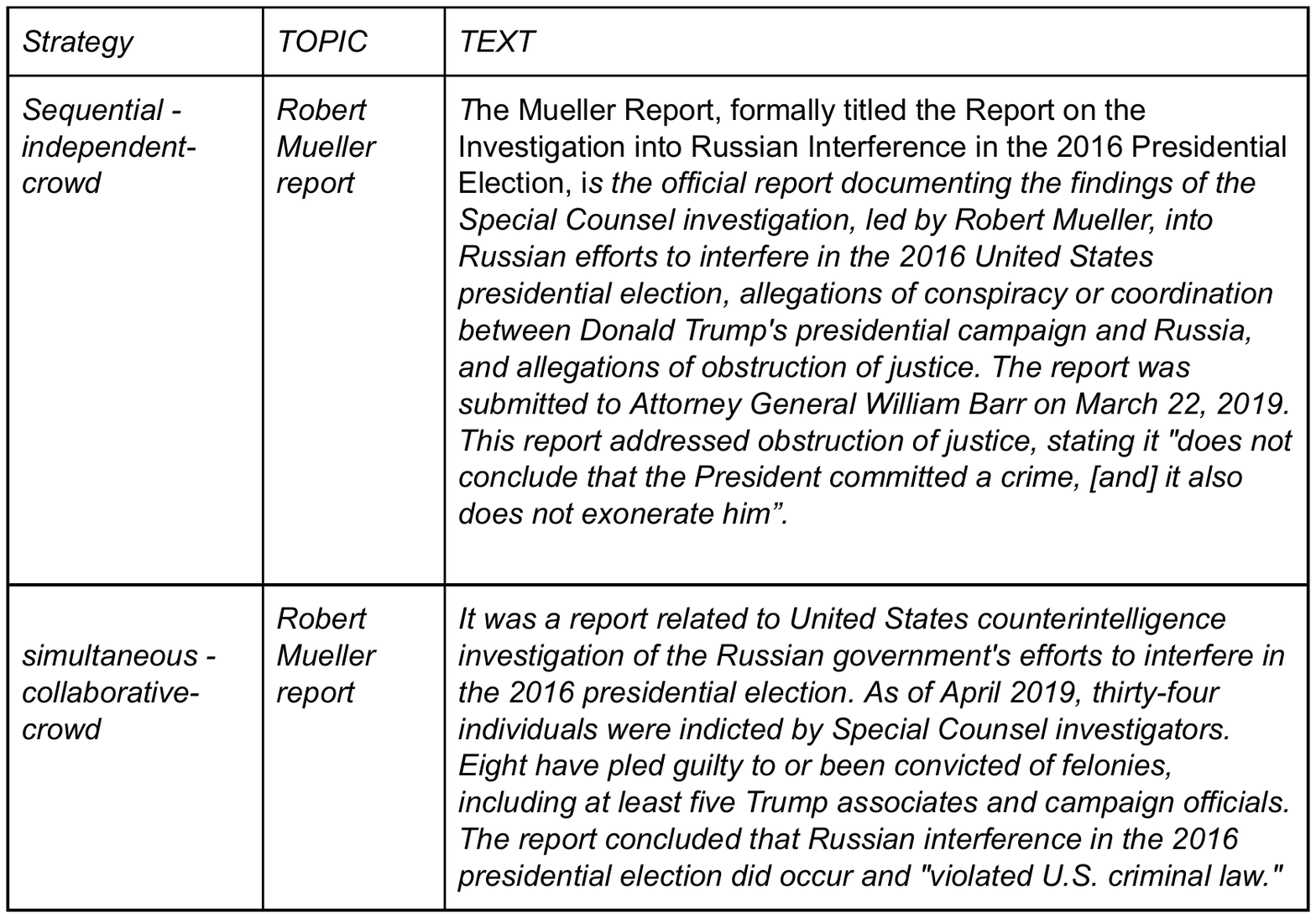}
	\caption{Text Creation: Robert Mueller Report}
	\label{fig:sample-text}
	
	\end{minipage}\hfill
\end{figure}

\begin{figure}
	\centering
	\includegraphics[width=7cm,height=4cm]{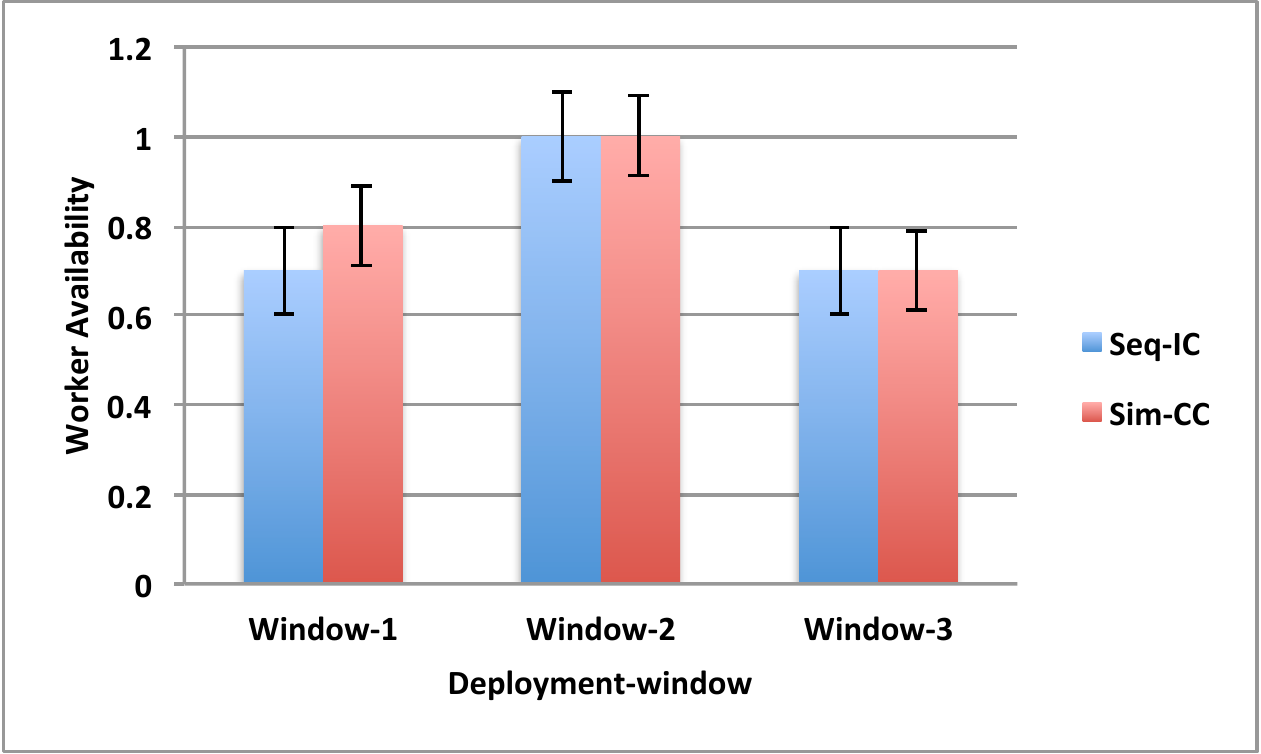}
	\caption{\small Worker Availability Estimation}
	\label{fig:wa}
\end{figure}

We designed three deployment windows at different days of the week. Unlike micro-tasks in AMT, text editing tasks require significantly more time to complete (we allocated $2$ hours per HIT). A HIT contains  either 3 sentence translation tasks or three text creation tasks as opposed to micro-tasks, where a HIT may contain tens of tasks. For each task type, we validated $2$ deployment strategies - in {\em SEQ-IND-CRO}, workers were to work in sequence and independently, whereas, in {\em SIM-COL-CRO}, workers were asked to work simultaneously and collaboratively. We created $2$ different samples of the same study resulting in a total of $8$ HITs deployed inside the same window. Each HIT was asked to be completed by $10$ workers paid $\$2$ each if the worker spent enough time (more than $10$ minutes). This way, a total of $80$ unique workers were hired for each deployment window, and a total of $240$ workers were hired for all three deployments.

{\bf Worker Recruitment:} For both task types, we recruited workers with a HIT approval rate greater than $90\%$. For sentence translation, we additionally filtered workers on geographic locations, either US or India. For text creation tasks, we recruited US-based workers with a Bachelor's degree. 

{\bf Enabling collaboration:} After workers were recruited from AMT, they were directed to Google Docs where the tasks were described and the workers were given instructions. The docs were set up in editing mode, so edits could be monitored. 

{\bf Experiment Design:} An experiment is comprised of three steps. In Step-1, all initially recruited workers went through qualification tests. For text creation, a  topic (Royal Wedding) was provided and the workers were asked to write $5$ sentences related to that topic. For sentence translation, the qualification test comprised of $5$ sample sentences to be translated from English to Hindi. Completed qualification tests were evaluated by domain experts and workers with more than 80\% or more qualification scores were retained and invited to work on the actual HITs. In Step-2, actual HITs  were deployed for $72$ hours and the workers were allotted $2$ hours for the tasks. In Step-3, after $72$ hours of deployment, results were aggregated by domain experts to obtain a quality score. Cost and latency were easier to calculate directly from the raw data.

{\bf Summary of Results:} 
{\bf Our first observation is} that worker availability {\em can be estimated and does vary over time} (standard error bars added). We observed that for both task types, workers were more available during Window 2 (Monday-Thursday), compared to the other two windows. Detailed results are shown in Figure~\ref{fig:wa}. 

{\bf Our second observation} is that each deployment parameter has a linear relationship with worker availability for text editing tasks. Quality and cost increases linearly with worker availability. Latency decreases with increasing worker availability. This linear relationship could be captured and the parameters $(\alpha,\beta)$ could be estimated. Table~\ref{mc} presents these results and the estimated $(\alpha,\beta)$ always lie within $90\%$ confidence interval of the fitted line.

{\bf Our final observation} is that {\em SEQ-IND-CRO} performs better than {\em SIM-COL-CRO} for both task types. However, this difference is not statistically significant. On the other hand, {\em SEQ-IND-CRO} has higher latency. Upon further analysis, we observe that when workers are asked to collaborate and edit simultaneously, that gives rise to an edit war and an overall poor quality. Figure~\ref{relationship} presents these results.

\begin{table}[h]
\centering
	\begin{tabular}{ |l|l|l| }
		\hline
		\multicolumn{3}{ |c| }{Worker Availability and Deployment Parameters} \\
		\hline
		Task-Strategy & Parameters & $\alpha$,$\beta$ \\ \hline
		\multirow{3}{*}{Translation {\em SEQ-IND-CRO}} & Quality & $0.09$,  $0.85$ \\
		& Cost & $1.00,0.00$ \\
		& Latency & $-0.98,1.40$ \\ \hline
		\multirow{3}{*}{Translation {\em SIM-COL-CRO}} & Quality & $0.09,0.82$ \\
		& Cost &$0.82,0.17$ \\
		& Latency & $-0.63,1.01$ \\ \hline
		\multirow{3}{*}{Creation {\em SEQ-IND-CRO}} & Quality & $0.10,0.80$ \\
		& Cost & $1.00,0.00$  \\
		& Latency & $-1.56,2.04$ \\ \hline
		\multirow{3}{*}{Creation {\em SIM-COL-CRO}} & Quality & $0.19,0.70$ \\
		& Cost & $1.00,-0.00$ \\
		& Latency & $-1.38,1.81$ \\ \hline
	\end{tabular}
	\vspace{0.1in}
	\caption{ $\alpha,\beta$ Estimation}\label{mc}
\end{table}

\begin{figure*}
	\subfloat[Translation {\em SEQ-IND-CRO}]{
		\includegraphics[height=3cm, width=.22\textwidth]{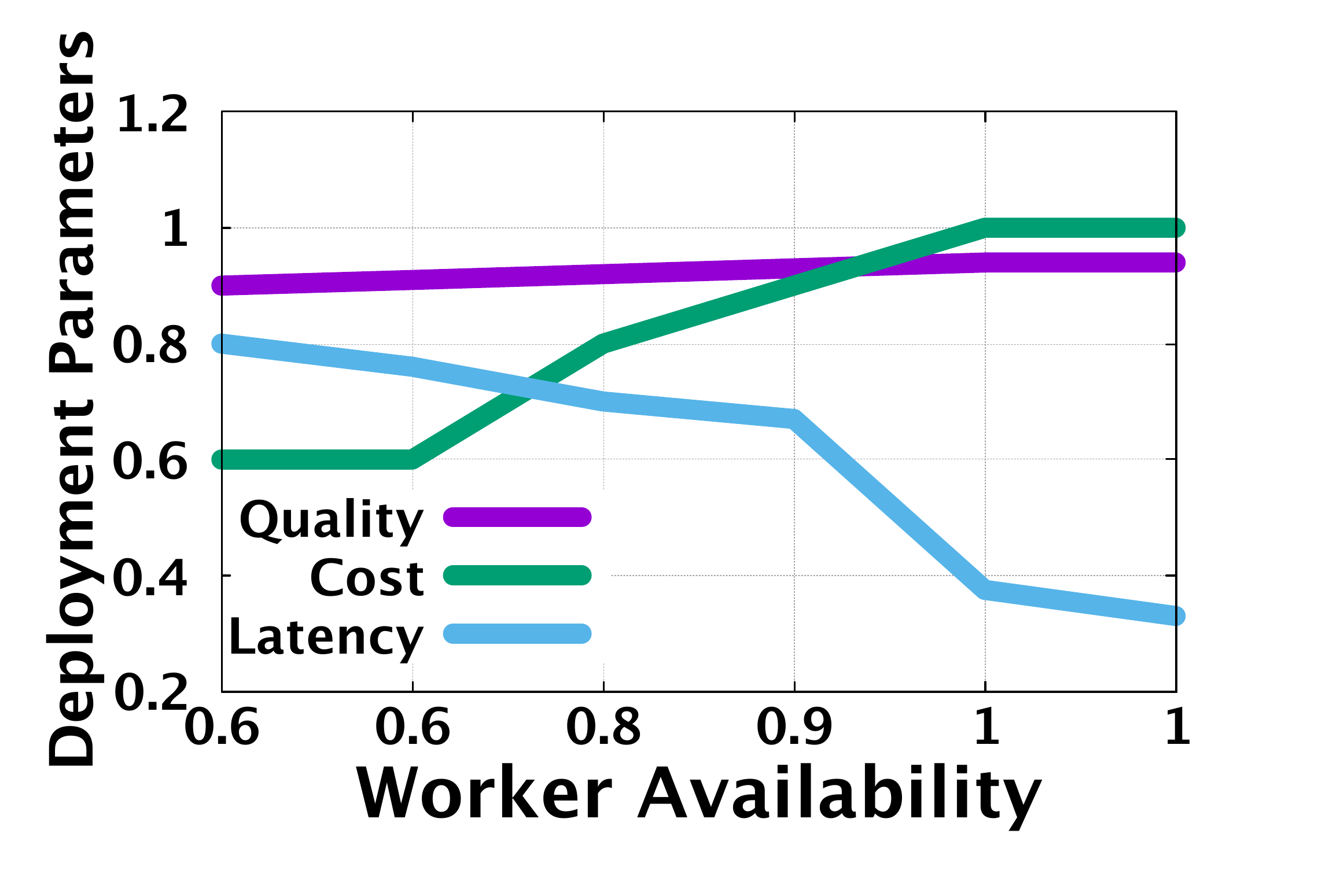}
	}
	\hfill
	\subfloat[Translation {\em SIM-COL-CRO}]{
		\includegraphics[height=3cm, width=.22\textwidth]{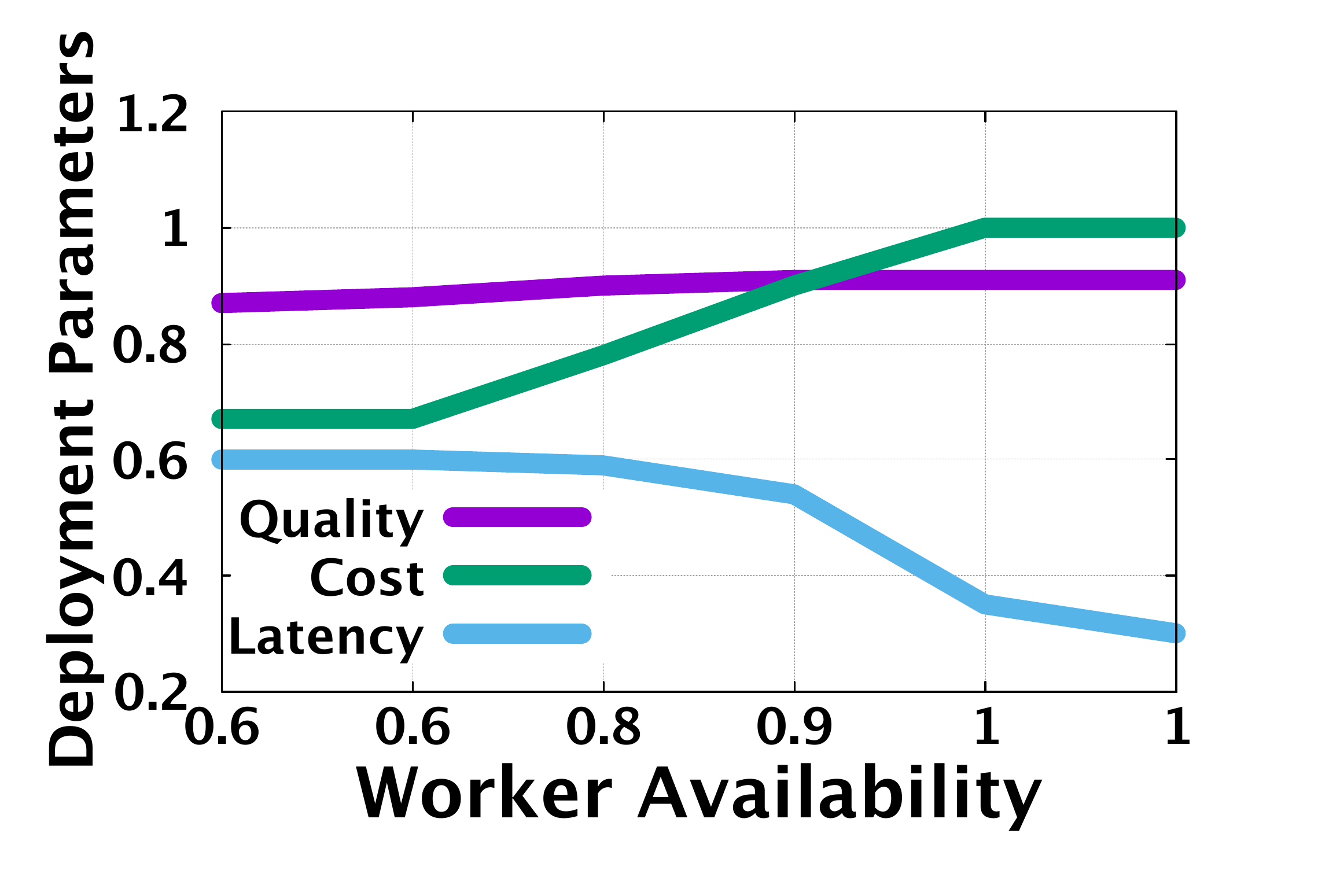}
	}
	\hfill
	\subfloat[Creation {\em SEQ-IND-CRO}]{
		\includegraphics[height=3cm, width=.22\textwidth]{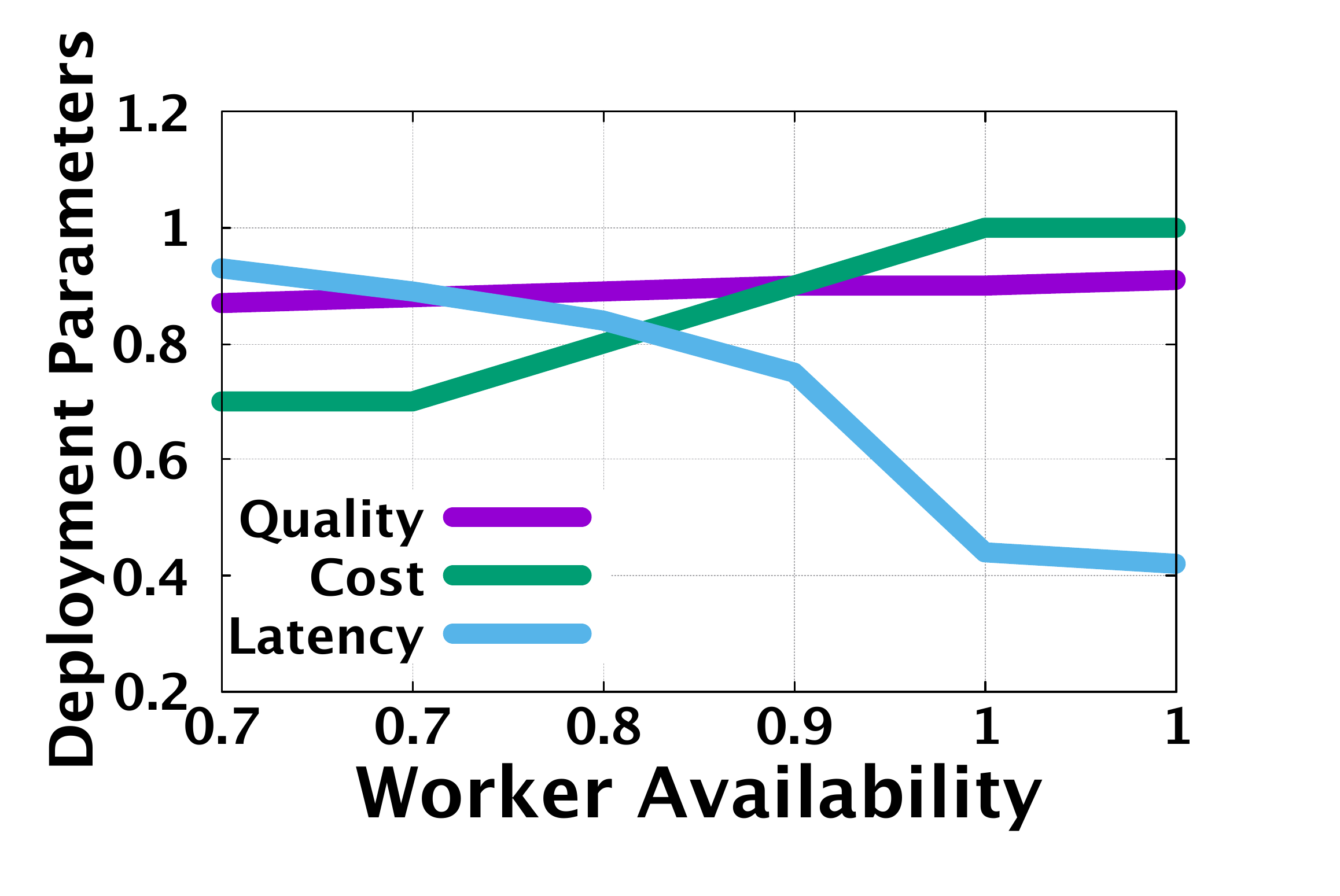}
	}
	\hfill
	\subfloat[Creation {\em SIM-COL-CRO}]{
		\includegraphics[height=3cm, width=.22\textwidth]{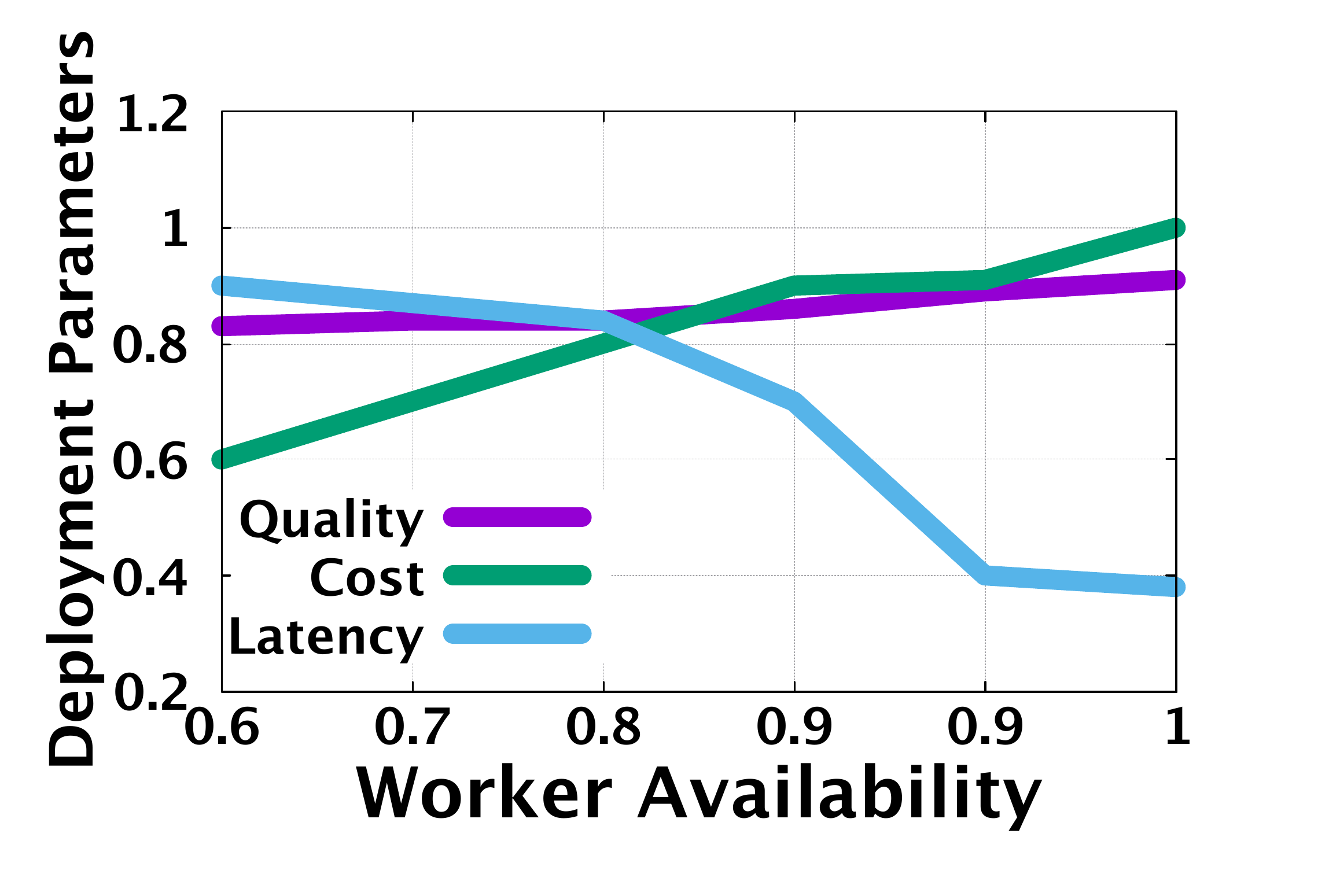}
	}
	\caption{ Relationship Between Deployment Parameters and Worker Availability}
	\label{relationship}
\end{figure*}

\subsubsection{Validating the Effectiveness of \StratRec} \label{realdata2}
We are unable to ask specific user (task designer's) satisfaction questions in this experiment, simply because AMT does not allow  to recruit additional task designers and only workers could be recruited. 
For this purpose, we deploy $10$ additional sentence translation (translating nursery rhymes from English to Hindi) and $10$ additional text creation tasks considering a set of $8$ strategies.  

We create $2$ mirror deployments for the same task (one according to \StratRec recommendation and the other without) resulting in a total of $40$ HITs deployed. For the latter scenario, the deployments were not recommended any structure, organization, or style and the workers were given the liberty to complete the task the way they preferred. Each HIT was asked to be completed by $7$ workers paid $\$2$ each if the worker spent enough time (more than $10$ minutes). This way, a total of $280$ unique workers are hired during this experiment. The quality, cost, and latency thresholds of each deployment are set to be $70\%$, $\$14$, $72$ hours. 

The worker recruitment, and the rest of the experiment design, and result aggregation steps are akin to those steps that are described in Section~\ref{realdata1}. Figure~\ref{recnorec} represents the average quality, cost, and latency results of these experiments with statistical significance.

\begin{figure}
	\centering
	\includegraphics[width=10cm,height=5cm]{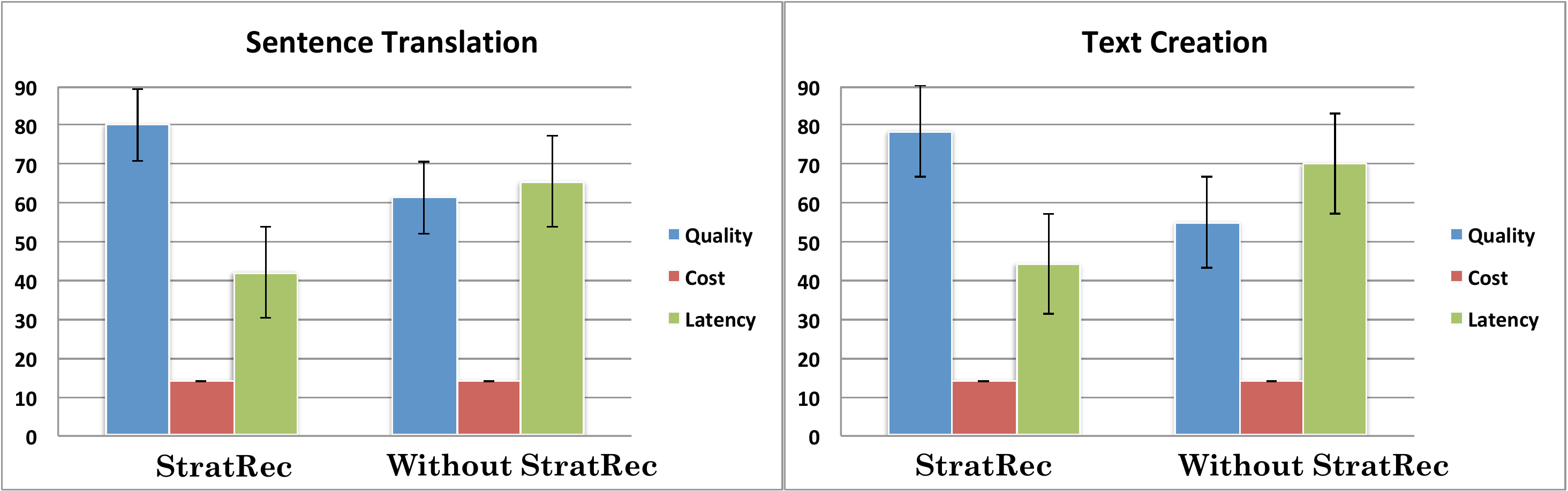}
	\caption{\small Average Quality, Cost, Latency Comparison of Deployments with and without \StratRec}
	\label{recnorec}
\end{figure}

{\bf Summary of Results:} We have two primary observations from these experiments.
{\bf Our first observation is} that (Figure~\ref{recnorec}), when tasks are deployed considering recommendation of \StratRec, with statistical significance, they achieve higher quality and lower latency, under the fixed cost threshold on an average compared to the deployments that do not consult \StratRec. These results validate the effectiveness of deployment recommendations of our proposed frameworks and its algorithms.

{\bf Our second observation} (upon further investigating the Google Docs where the workers undertook tasks),  is that the deployments that do not consider \StratRec recommendations have more edits, compared to that are deployed considering \StratRec. In fact, on average, \StratRec deployments have an average of $3.45$ edits for sentence translation, compared to $6.25$ edits on average for those deployed with no recommendations. Indeed, when workers were not guided, they repeatedly overrode each other's contributions, giving rise to an edit war.

\subsection{Synthetic Experiments}\label{synExp}
We aim to evaluate the qualitative guarantees and the scalability. Algorithms are implemented in Python 3.6  on Ubuntu 18.10. Intel Core i9 3.6 GHz CPU, 16GB of memory.

\subsubsection{Implemented Algorithms}
We describe different algorithms that are implemented.
\paragraph{Batch  Deployment  Algorithms}
{\tt Brute Force:}
An exhaustive algorithm which compares all possible combinations of deployment requests and returns the one that optimizes the objective function.\\
{\tt BaselineG:} This algorithm sorts the deployment requests in decreasing order of $\frac{f_i}{\vec{w_i}}$ and greedily selects requests until worker availability $W$ is exhausted.\\
{\BatchStrat:} Our proposed solution described in Section~\ref{sec:batch}.

\paragraph{\ADPaR Algorithms}
{\tt ADPaRB:} This is a brute force algorithm that examines all sets of strategies of size $k$. It returns the one that has the smallest distance to the task designer's original deployment parameters. While it returns the exact answer, this algorithm takes exponential time to run.\\
{\tt Baseline2:}  This baseline algorithm is inspired by a related work~\cite{mishra}. The main difference though, the related work modifies the original deployment request by just one parameter at a time and is not optimization driven. In contrast, \ADPaRA returns an alternative deployment request, where multiple parameters may have to be modified. \\
{\tt Baseline3:} This one is designed by modifying space partitioning data structure R-Tree~\cite{beckmann1990r}. We treat each strategy parameters as a point in a $3$-D space and index them using an R-Tree. Then, it scans the tree to find if there is a minimum bounding box (MBB) that exactly contains $k$ strategies. If so, it returns the top-right corner of that MBB as the alternative deployment parameters and corresponding $k$ strategies. If such an MBB does not exist, it will return the top right corner of another MBB that has at least $k$ strategies and will randomly return $k$ strategies from there. \\
{\ADPaRA:} Our proposed solution in Section~\ref{sec:adpar}. \\

{\bf Summary of Results:}
Our simulation experiments highlight the following findings:
{\bf Observation 1:} Our solution \BatchStrat returns exact answers for throughput optimization, and the approximation factor for pay-off maximization is always above $90\%$, significantly surpassing its theoretical approximation factor of $1/2$.
{\bf Observation 2:}  Our solution \BatchStrat is highly scalable and takes less than a second to handle millions of strategies, and hundreds of deployment requests, and $k$.
{\bf Observation 3:} Our algorithm \ADPaRA returns exact solutions to the \ADPaR problem, and significantly outperforms the two baseline solutions in objective function value.
{\bf Observation 4:} \ADPaRA is scalable and takes a few seconds to return alternative deployment parameters, even when the total number of strategies is large and $k$ is sizable.
\smallskip
\begin{figure*}[htpb]
	\subfloat[Varying $k$]{
		\includegraphics[height=6 pc, width=9 pc]{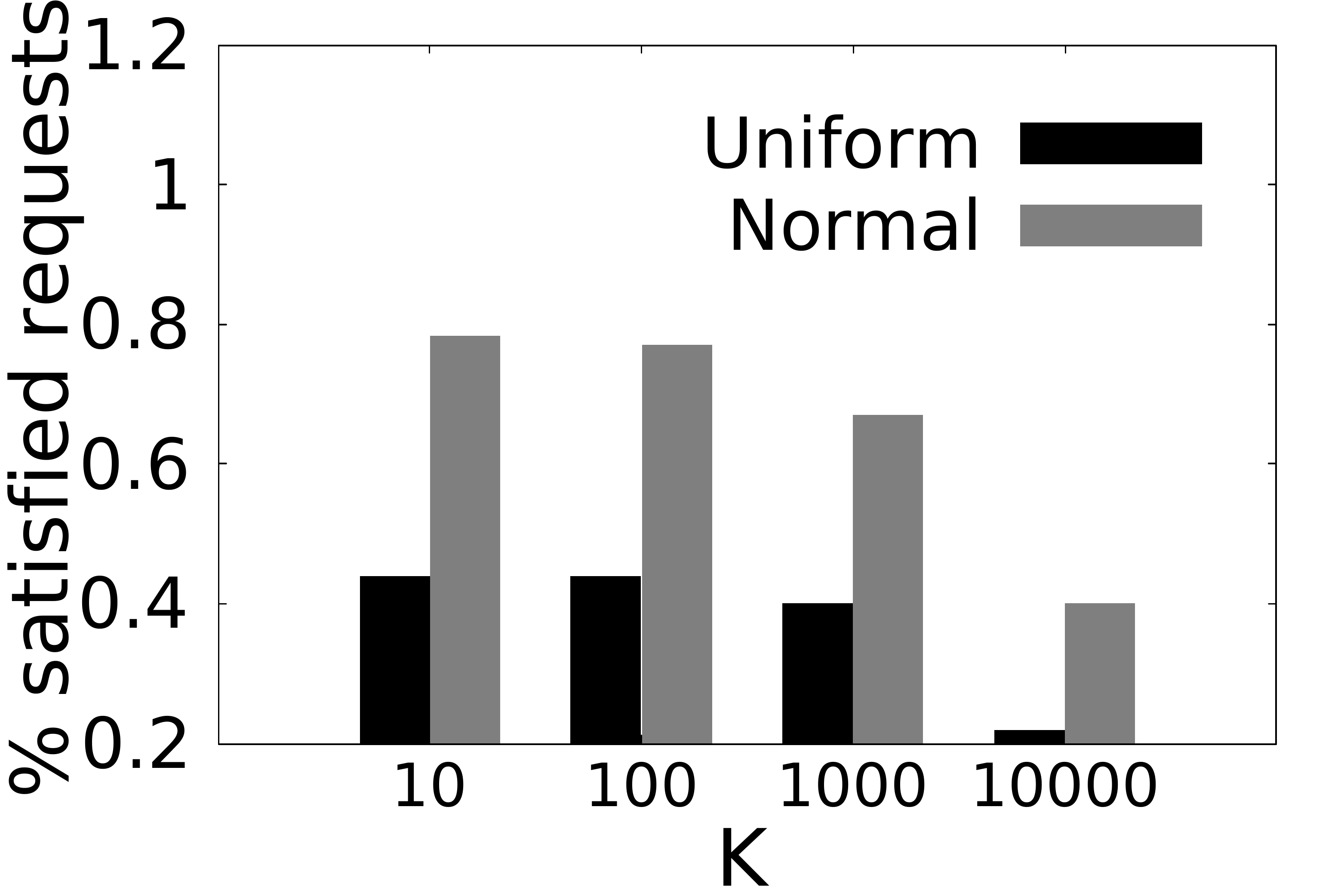}
	}
	\subfloat[Varying $m$]{
		\includegraphics[height=6 pc, width=9 pc]{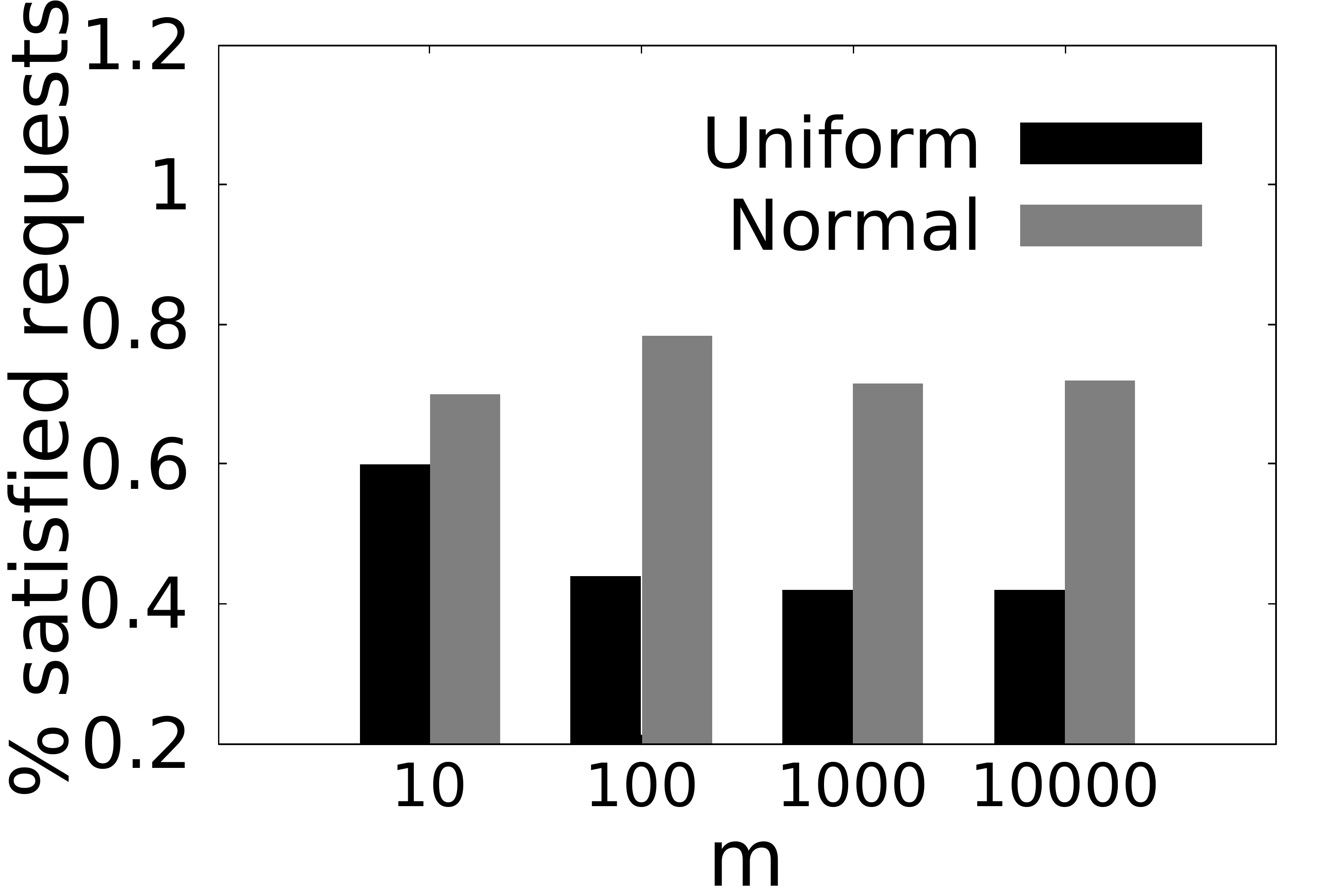}
	}
	\subfloat[Varying $\mathcal{S}$]{
		\includegraphics[height=6 pc, width=9 pc]{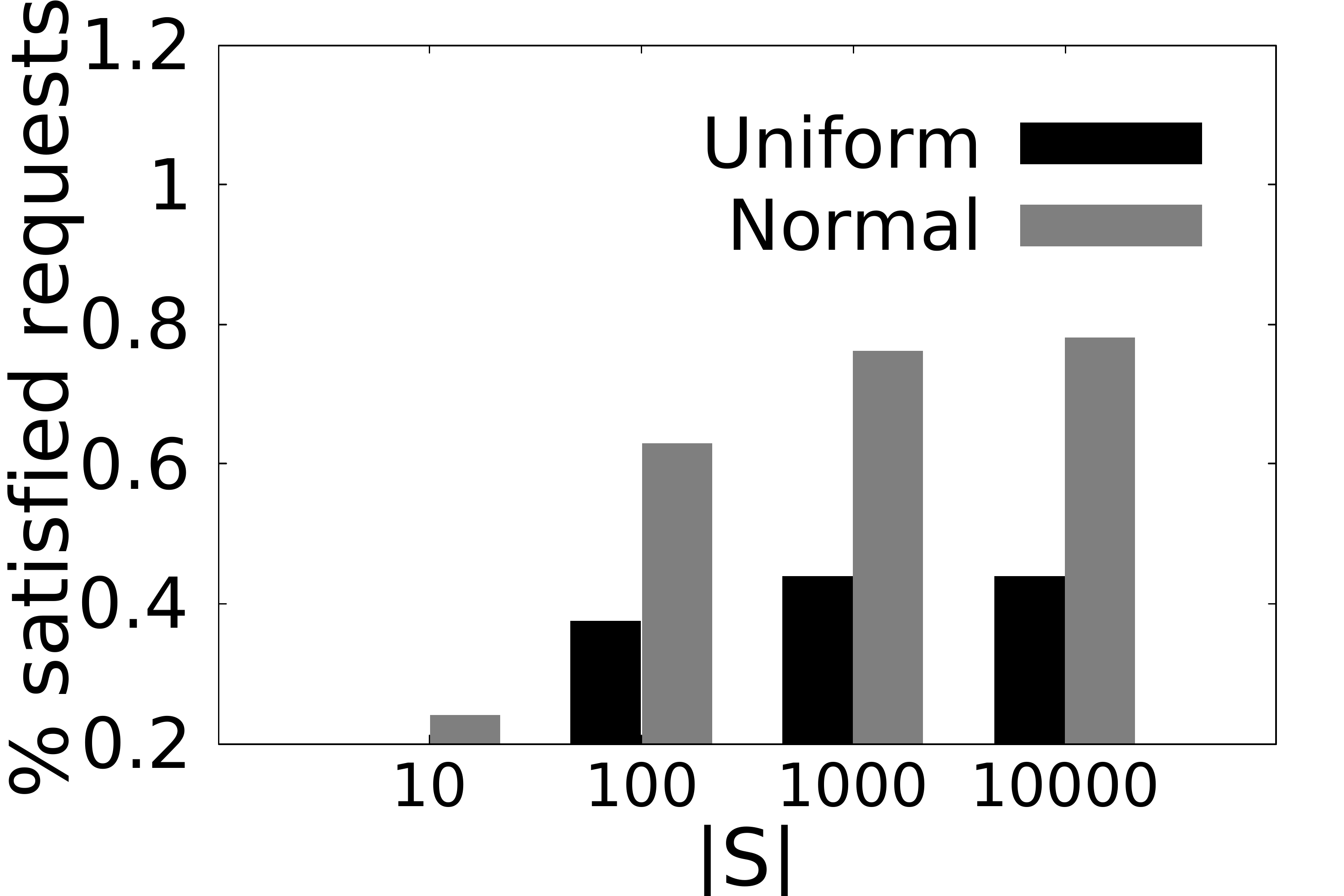}
	}
	\subfloat[Varying $W$]{
		\includegraphics[height=6 pc, width=9 pc]{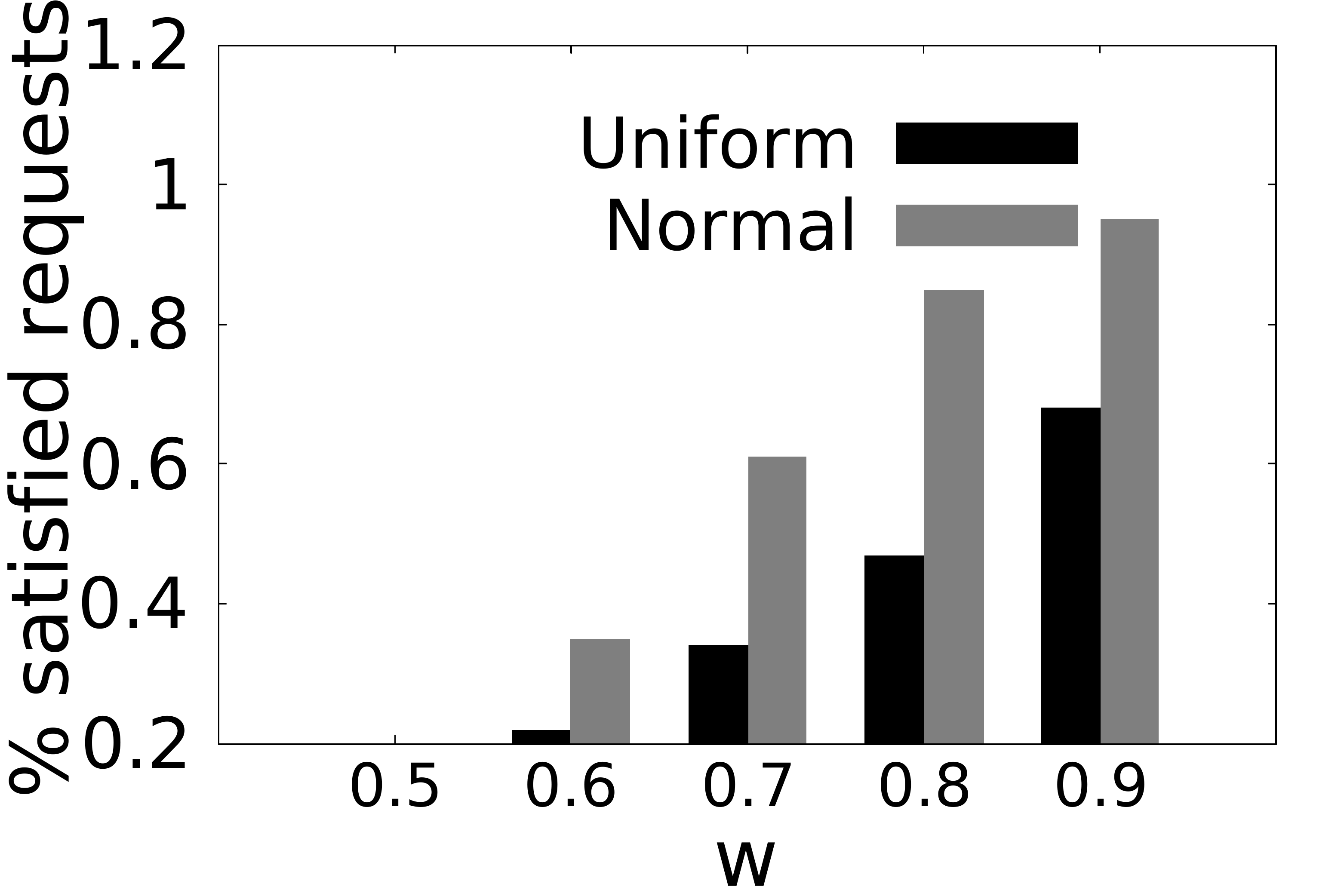}
	}
	\caption{Percentage of satisfied requests before invoking \ADPaR}
	\label{CC2}
\end{figure*}
\subsubsection{Quality Experiment}
\indent\\
\indent{\bf Batch Deployment Recommendation.}
{\bf Goal:} We validate the following two aspects: (i) {\em how many deployment requests} \BatchStrat { }{\em can satisfy without invoking} \ADPaR? (ii) {\em How does} \BatchStrat{ }{\em fare to optimize different platform-centric goals?} We compare \BatchStrat with the other two baselines, as appropriate.

{\bf Strategy Generation:}  The dimension values of a strategy  are generated considering uniform and normal distributions. For the normal distribution, the mean and standard deviation are set to $0.75$ and $0.1$, respectively. We randomly pick the value from $0.5$ to $1$ for the uniform distribution.

{\bf Worker Availability:} For a strategy, we generate $\alpha$ uniformly from an interval $[0.5,1]$. Then, we set $\beta = 1- \alpha$ to make sure that the estimated worker availability $W$ is within $[0,1]$. These numbers are generated in consistence with our real data experiments.

{\bf Deployment Parameters:} Once $W$ is estimated, the quality, latency, and cost - i.e., the deployment parameters, are generated in the interval $[0.625,1]$. For each experiment, $10$ deployment parameters are generated, and an average of $10$ runs is presented in the results.
\begin{figure*}[htpb!]
\centering
	\subfloat[Varying $k$]{
		\includegraphics[height=6 pc, width=9 pc]{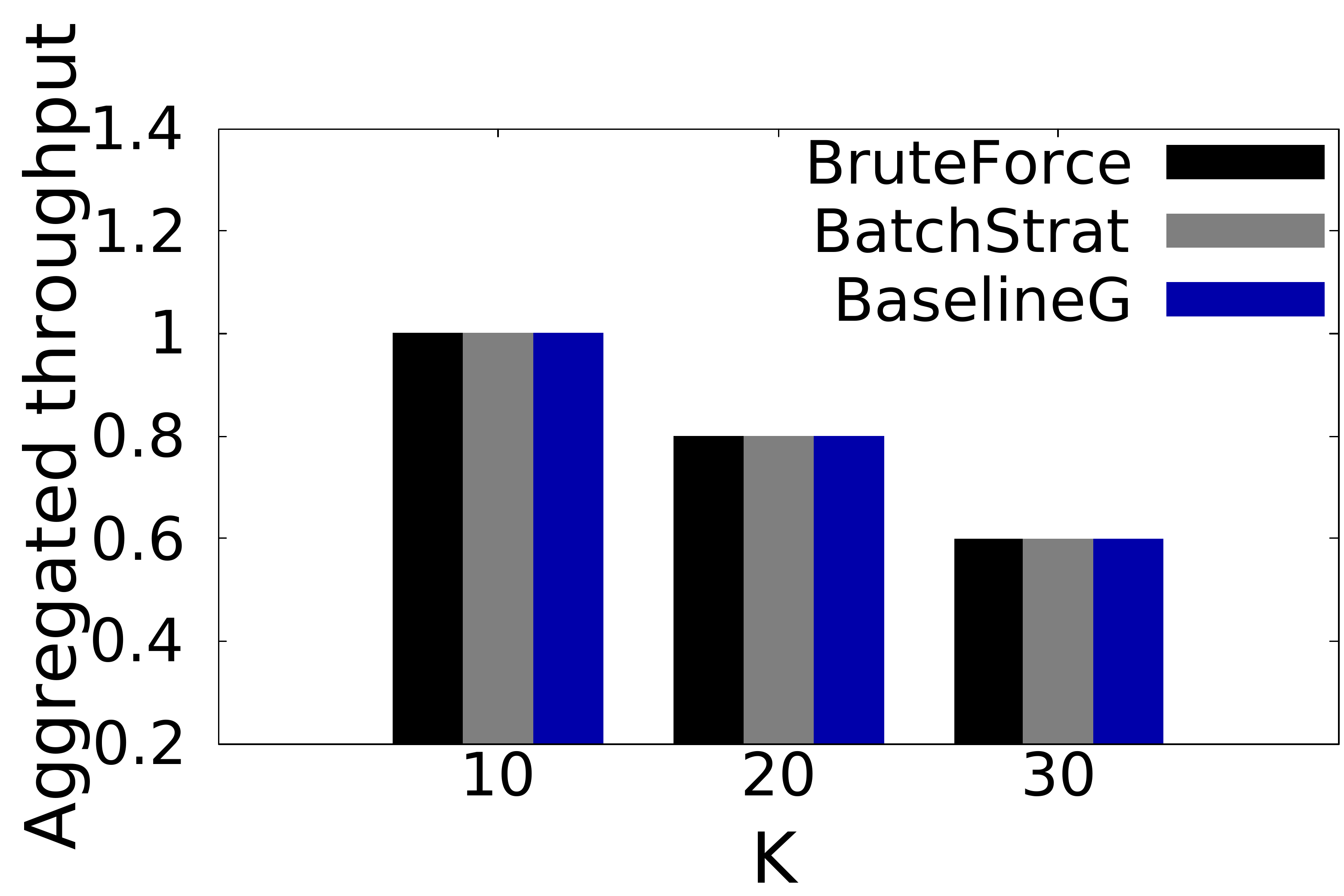}
	}
	\subfloat[Varying $m$]{
		\includegraphics[height=6 pc, width=9 pc]{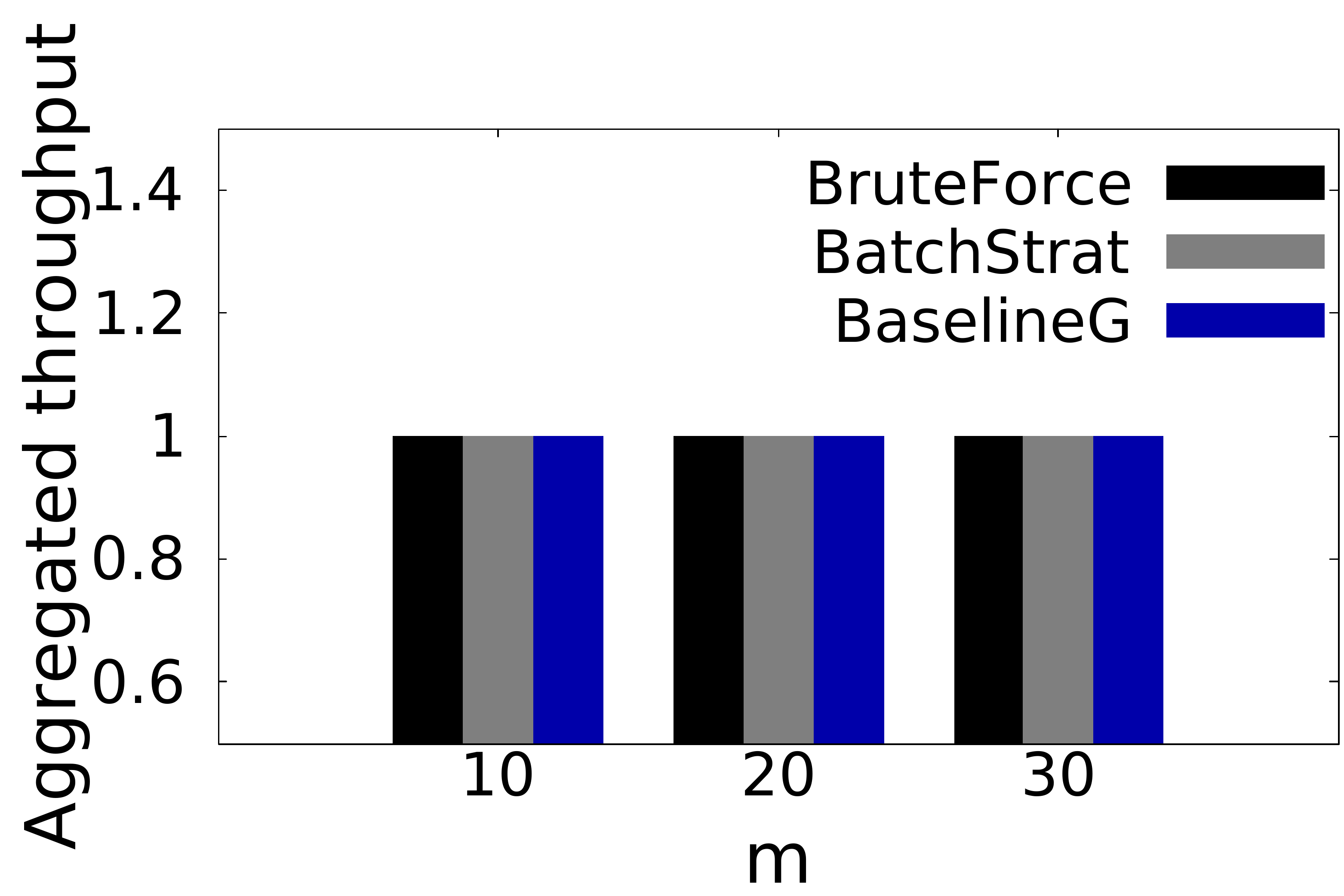}
	}
	\subfloat[Varying $\mathcal{S}$]{
		\includegraphics[height=6 pc, width=9 pc]{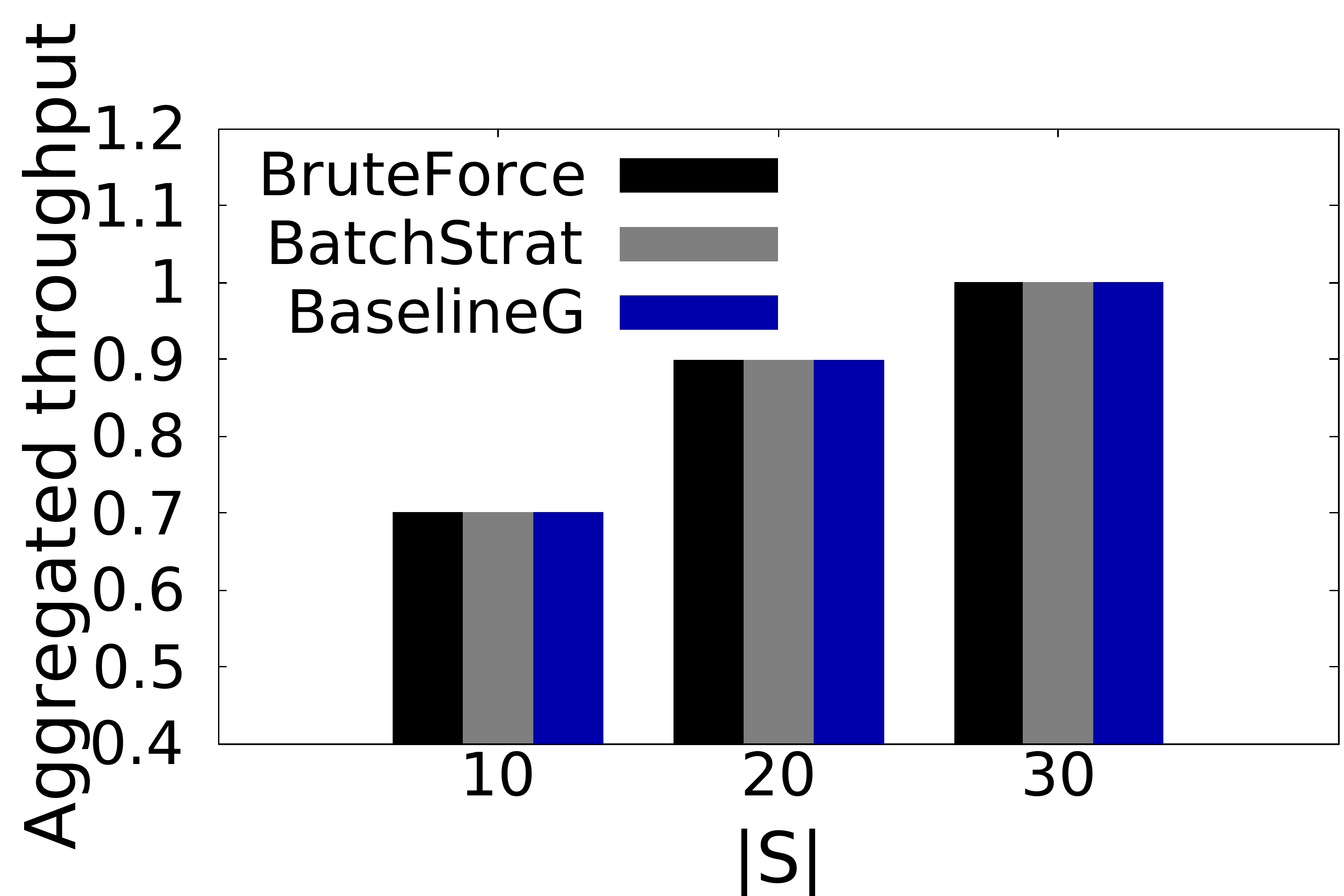}
	}
	\caption{Objective Function for Throughput}
	\label{throughout}
\end{figure*}

\begin{figure*}[htpb]
\centering
	\subfloat[Varying $k$]{
		\includegraphics[height=6 pc, width=9 pc]{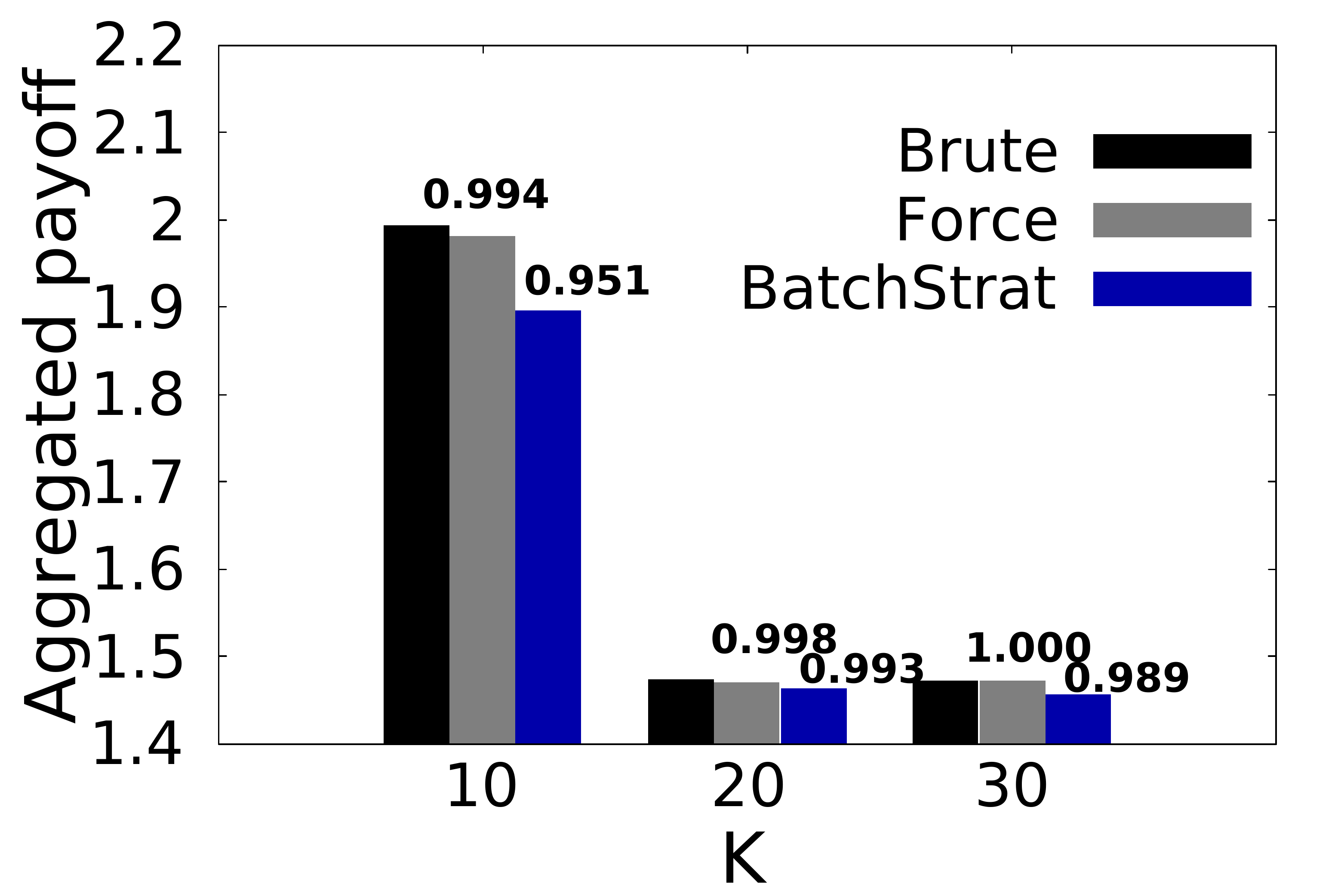}
	}
	\subfloat[Varying $m$]{
		\includegraphics[height=6 pc, width=9 pc]{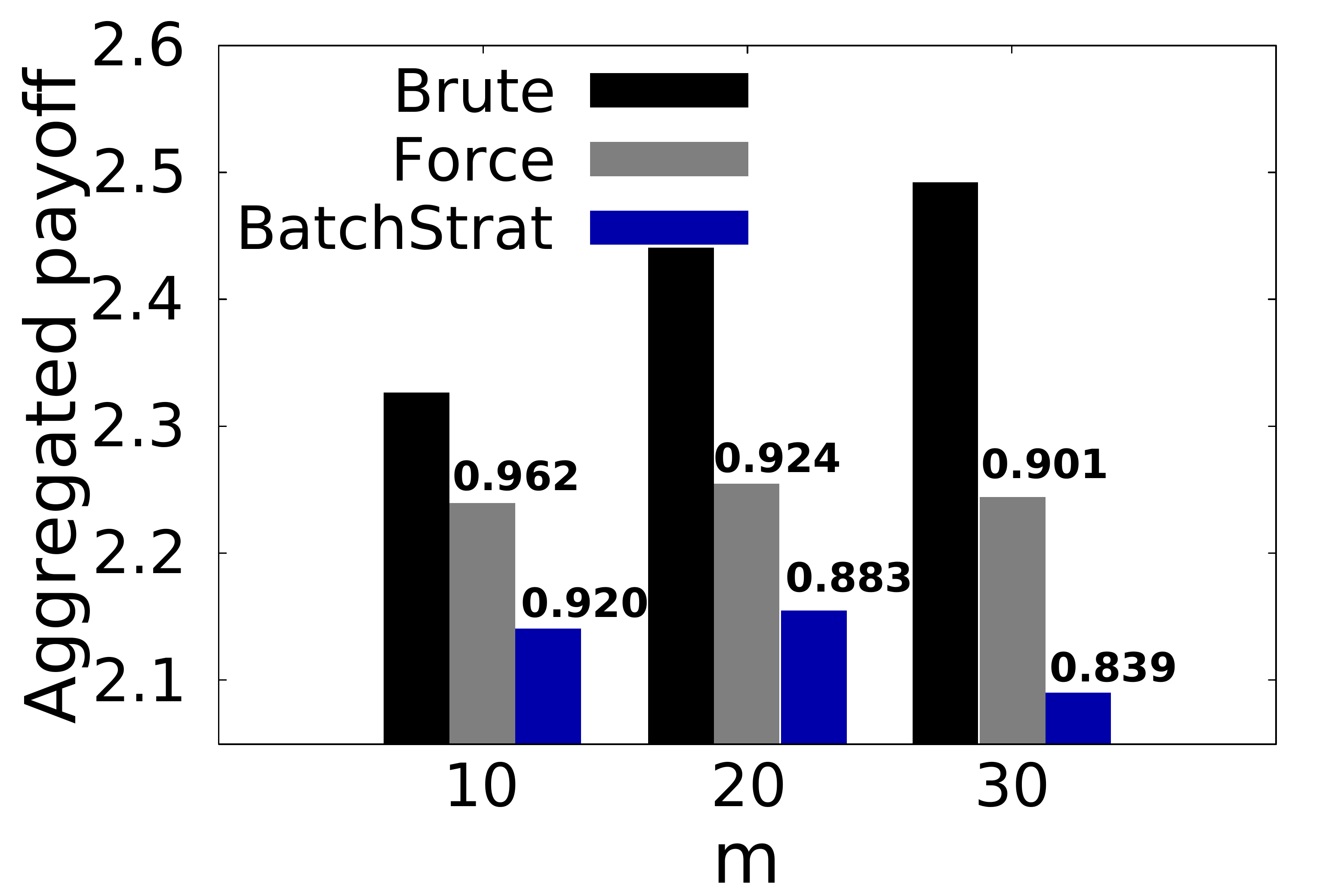}
	}
	\subfloat[Varying $\mathcal{S}$]{
		\includegraphics[height=6 pc, width=9 pc]{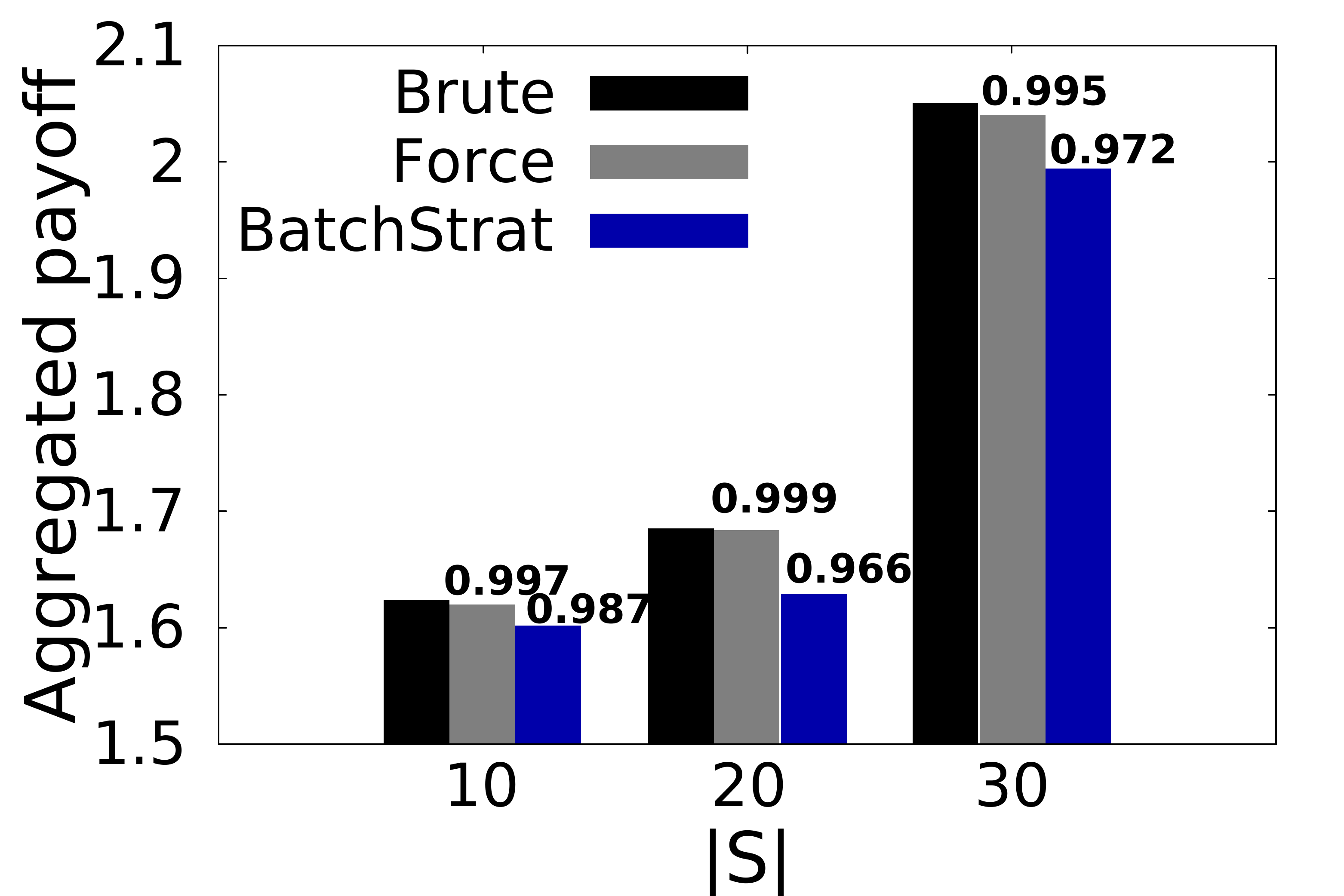}
	}
	\caption{Objective Function and Approximation Factor for Payoff}
	\label{payoff}
\end{figure*}

\begin{figure*}[htpb]

	\subfloat[without Brute Force]{
		\includegraphics[height=6 pc, width=9 pc]
		{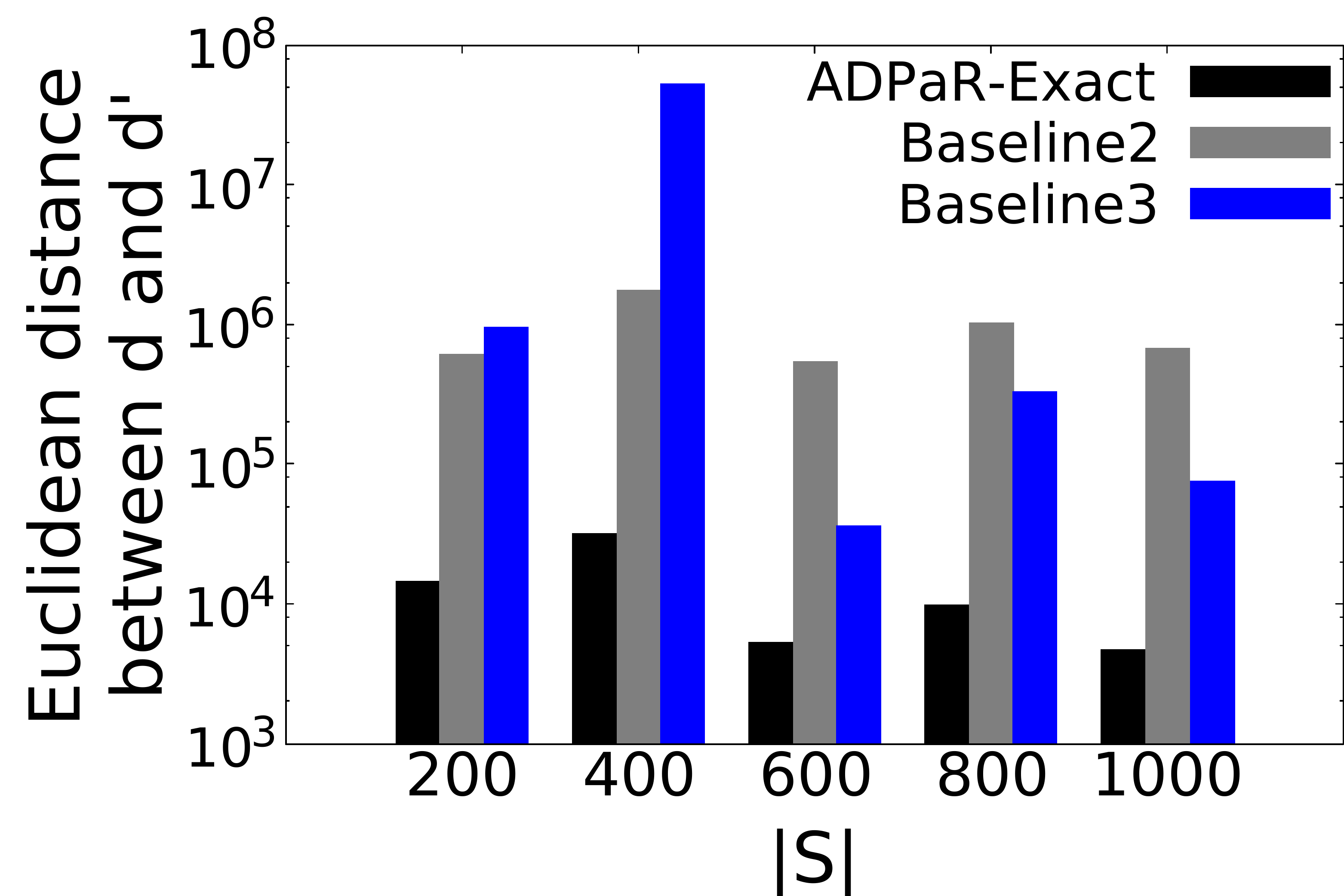}}
	\subfloat[with Brute Force]{
		\includegraphics[height=6 pc, width=9 pc]{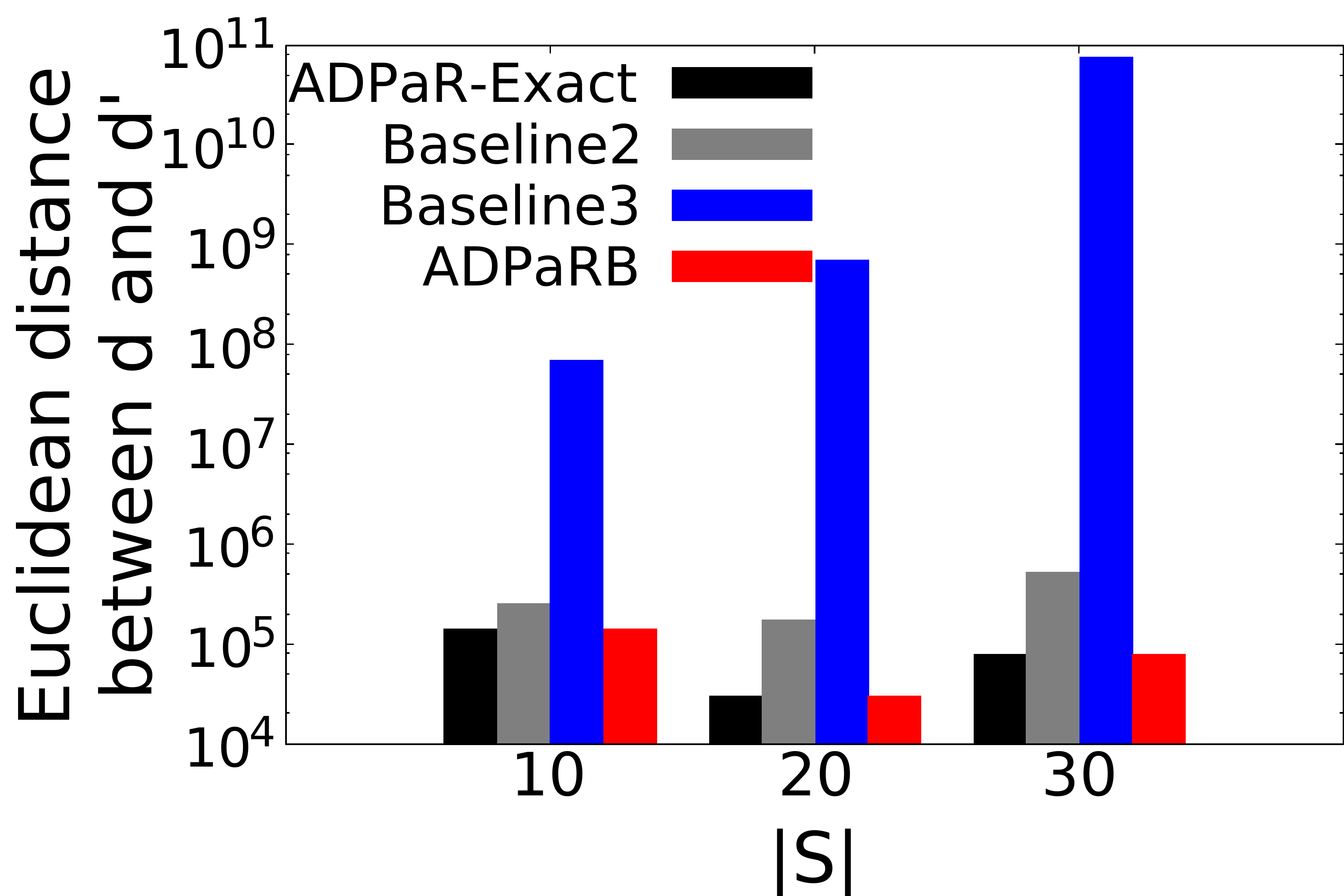}
	}
	\subfloat[without Brute Force]{
		\includegraphics[height=6 pc, width=9 pc]{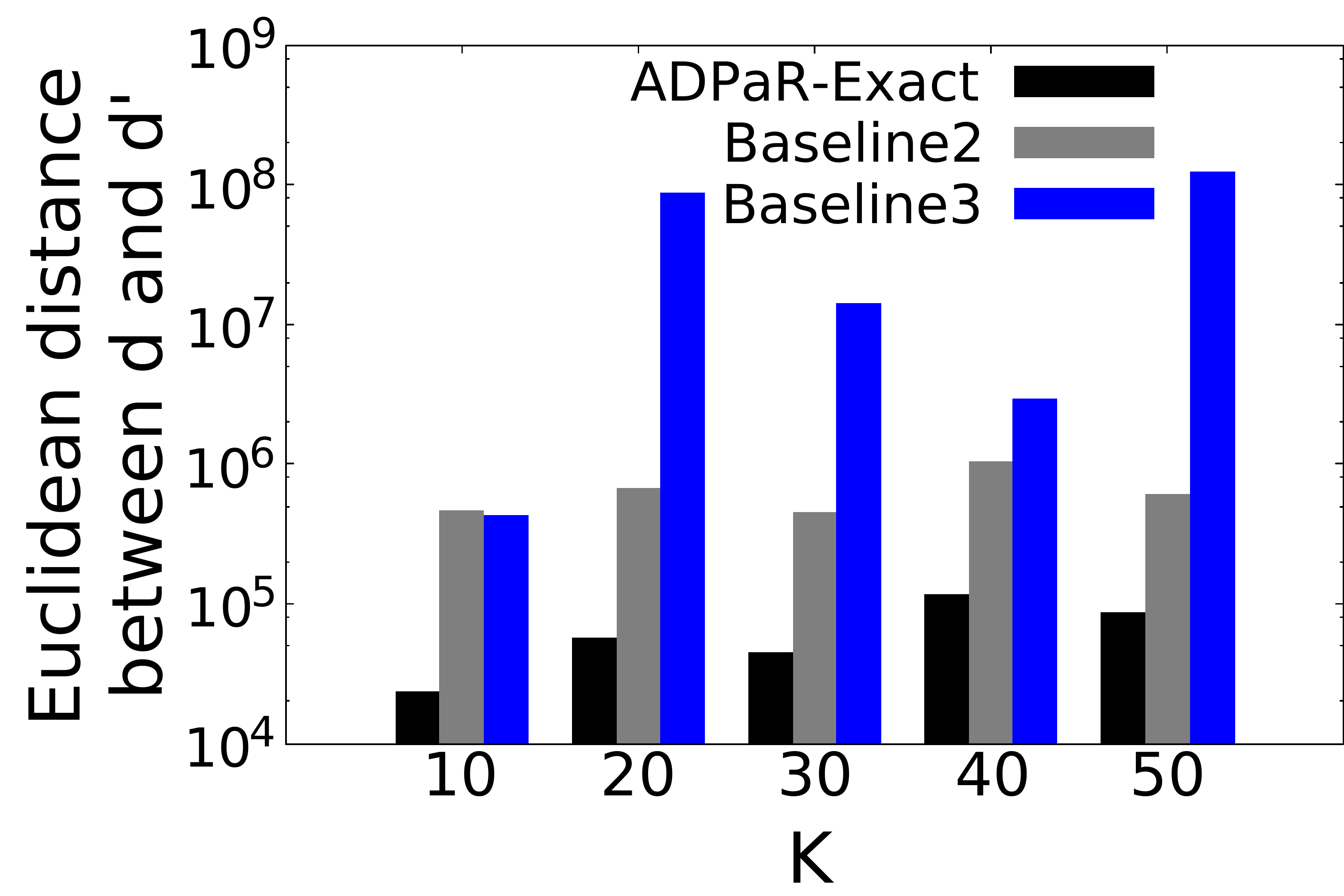}
	}
	\subfloat[with Brute Force]{
		\includegraphics[height=6 pc, width=9 pc]{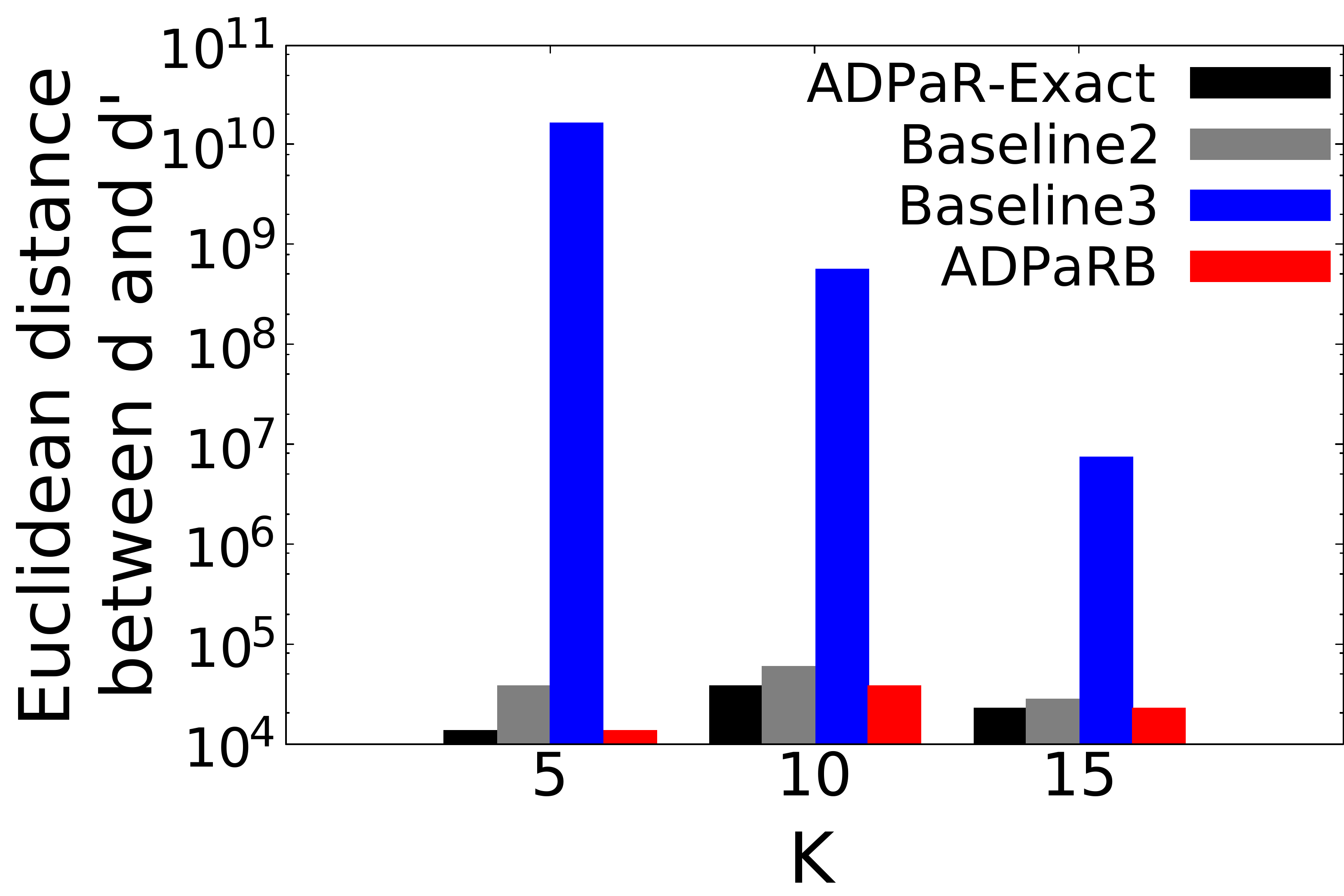}
	}
	\caption{Quality Experiments for \ADPaR}
	\label{Quality}
\end{figure*}

\begin{figure*}[htpb]
\centering
	\subfloat[Batch Deployment Varying $m$]
	{
		\includegraphics[height=6 pc, width=10 pc]{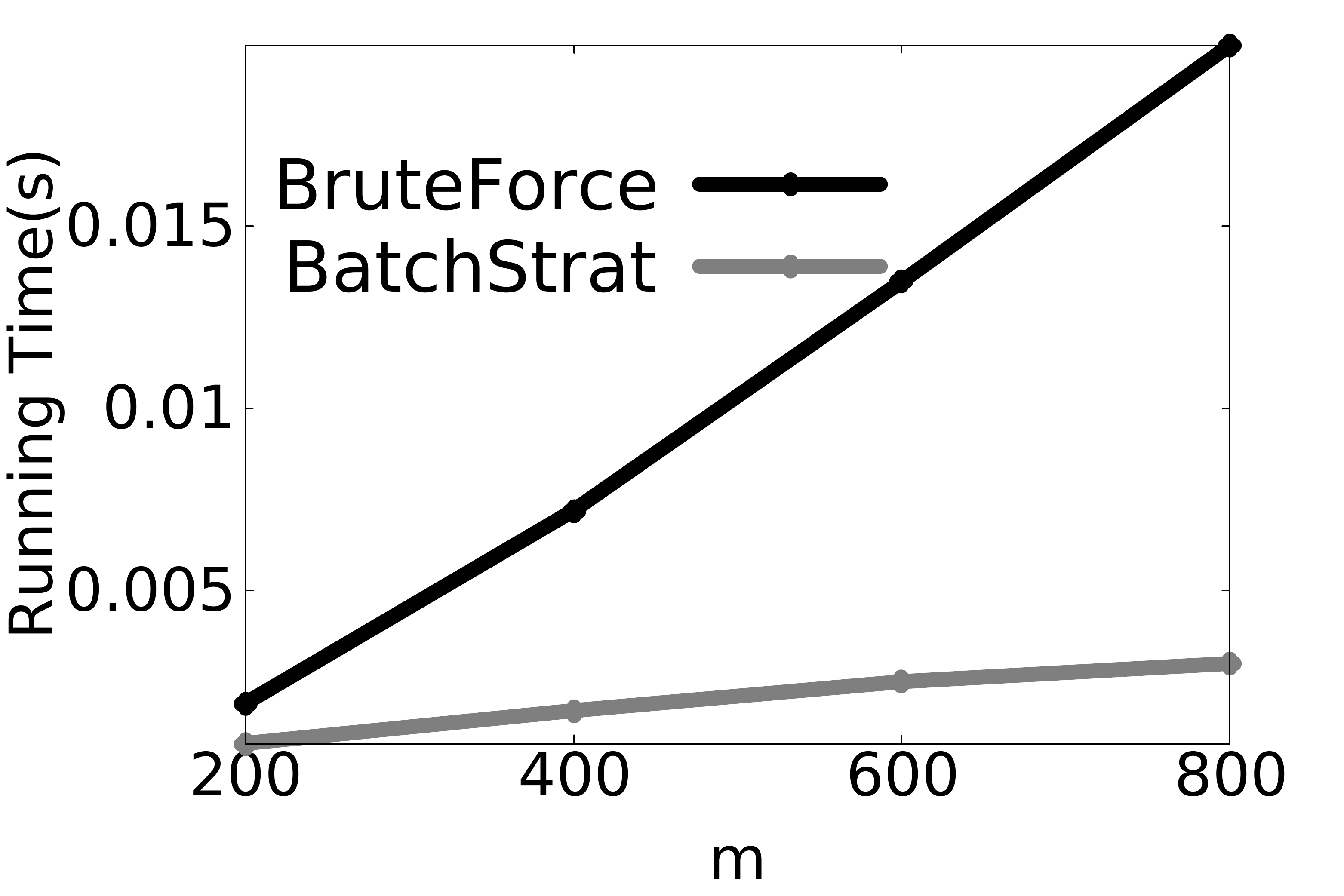}\label{batch}
	}
	\subfloat [\ADPaR Varying $|\mathcal{S}|$]
	{
		\includegraphics[height=6 pc, width=10 pc]{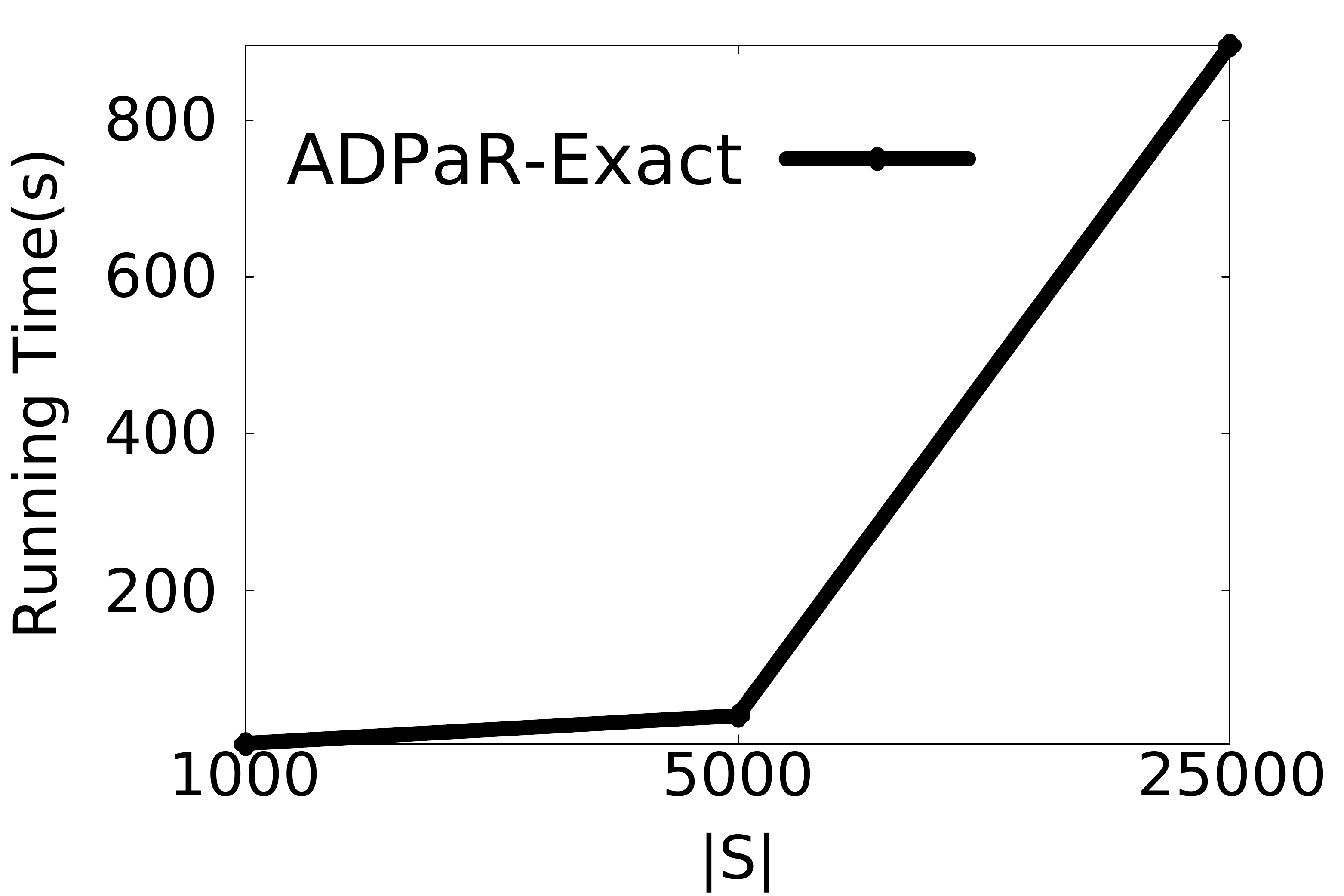}\label{sc1}
	}
	\subfloat[\ADPaR Varying $k$]
	{
		\includegraphics[height=6 pc, width=10 pc]{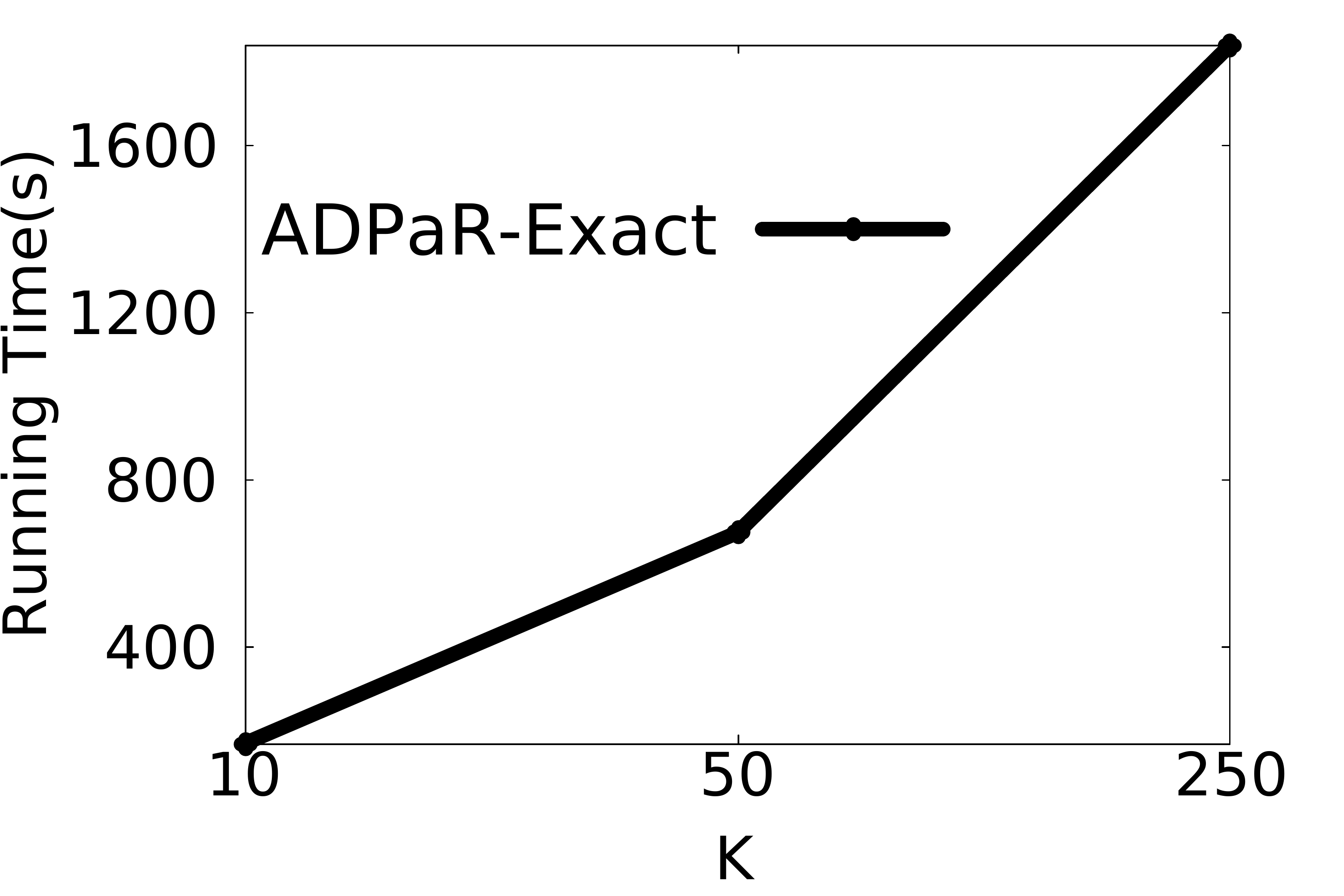}\label{sc2}
	}
	\caption{\small Scalability Experiments}
	\label{Scalability_Q1}
\end{figure*}

Figure~\ref{CC2} shows the percentage of satisfied requests by \BatchStrat with varying $k$, $m$, $|\mathcal{S}|$, $W$. In general, normal distribution performs better than uniform. Upon further analysis, we realize that normal distribution has a very small standard deviation, and is thereby able to satisfy more requests. As shown in Figure~\ref{CC2}(a), the percentage of satisfied requests decreases with increasing $k$, which is expected. Contrarily, the effect of increasing batch size $m$ is less pronounced. This is because all requests use the same underlying distribution, allowing \BatchStrat to handle more of them. With more strategies $|\mathcal{S}|$, as Figure \ref{CC2}(c) illustrates, \BatchStrat satisfies more requests, which is natural, because with increasing $|\mathcal{S}|$, it simply has more choices. Finally, in Figure \ref{CC2}(d), with higher worker availability \BatchStrat  satisfies more requests.  By default, we set $|\mathcal{S}| = 10000, m = 10, k=10, W=0.5$.

Figure \ref{throughout} shows the results of throughput of \BatchStrat by varying $k$, $m$,$|\mathcal{S}|$, compare with the two baselines. Figure \ref{payoff} shows the approximation factor of \BatchStrat and {\tt BaselineG}. \BatchStrat achieves an approximation factor of $0.9$ most of the time. For both experiments, the default values are $k =10, m = 5, |\mathcal{S}| = 30, W = 0.5$ because brute force does not scale beyond that.

{\bf Alternative Deployment Recommendation \ADPaR.}
The goal here is to measure the objective function. Since {\tt ADPaRB} takes exponential time,  to be able to compare with this, we set
$|\mathcal{S}| = 20$, $k = 5, W=0.5$ for all the quality experiments that has to compare with the brute force. Otherwise, the default values are $|\mathcal{S}| = 200$, $k = 5$.

In Figure \ref{Quality}, we vary $|\mathcal{S}|$ and  $k$ and plot the Euclidean distance between $d$ and $d'$ (smaller is better). Indeed, \ADPaRA returns exact solution always.  The other two baselines perform significantly worse, while {\tt Baseline 3} is the worst. That is indeed expected, because these two baselines are not optimization guided, and does not satisfy our goal.  Naturally, the objective function decreases with increasing  $|\mathcal{S}|$, because more strategies mean smaller change in $d'$, making the distance between $d$ and $d'$  smaller. As the results depict, optimal Euclidean distance between $d$ and $d'$ increases with increasing $k$, which is also intuitive, because, with higher $k$ value, the alternative deployment parameters are likely to have more distance from the original ones.

\subsubsection{Scalability Experiments} \label{scaExp}
Our goal is to evaluate the running time of our proposed solutions. Running time is measured in seconds. We present a subset of results that are representative.
\paragraph{Batch Deployment Recommendation}
Since the {\tt BaselineG} has the same running time as that of \BatchStrat (although qualitatively inferior), we only compare the running time between {\tt Brute Force} and \BatchStrat. The default setting for $|\mathcal{S}|$, $k$ and $W$ are $30$, $10$ and $0.75$, respectively.

The first observation we make is, clearly \BatchStrat can handle millions of strategies, several hundreds of batches, and very large $k$ and still takes only a few fractions of seconds to run. It is easy to notice that the running time of this problem only relies on the size of the batch $m$ (or the number of deployment requests), and not on $k$ or $\mathcal{S}$.  As we can see in Figure \ref{batch}, {\tt Brute Force} takes exponential time with increasing $m$, whereas \BatchStrat scales linearly.

\paragraph{Alternative Deployment Recommendation}
We vary $k$ and $|\mathcal{S}|$ with defaults set to $5$ and $10000$ respectively, and evaluate the running time of \ADPaRA. $W$ is set to $0.5$.
As Figures~\ref{sc1} and~\ref{sc2} attest, albeit non-linear, \ADPaRA scales well with $k$ and $|\mathcal{S}|$. We do not present the baselines as they are significantly inferior in quality.

\section{Related Work}\label{related}
{\bf Crowdsourcing Deployment:} Till date, the burden is entirely on the task requester to design appropriate deployment strategies that are consistent with the cost, latency, and quality parameters of task deployment.
A very few related works~\cite{zheng2011task,allen2018design} have started to study the importance of appropriate deployment strategies but these works do not propose an algorithmic solution and are limited to empirical studies. A recent work~\cite{dep1} presents the results of a 10-month deployment of  a  crowd-powered  system  that  uses  a  hybrid  approach  to fast recruitment of workers, called {\em Ignition}. These  results suggest a number of opportunities to deploy work in the online job market.

{\bf Crowdsourcing Applications:}
A number of interactive crowd-powered systems have been developed  to  solve  difficult  problems  and develop applications~\cite{workflow1, turkomatic, workflow3, Bernstein10soylent:a, crowdforge, bio, software, cascade, mozafari2014scaling}. For instance, Soylent uses the crowd to edit and proofread text~\cite{Bernstein10soylent:a}; Chorus recruits a group of workers to hold sophisticated conversations~\cite{chorus}; and Legion allows a crowd to interact with an UI-control task~\cite{legion}. A primary challenge for such interactive systems is to decrease latency without having to compromise with the quality. A comprehensive survey on different crowdsourcing applications could be found at~\cite{survey}. All crowd-powered systems share these challenges and are likely to benefit from \StratRec.

\smallskip \noindent
{\bf Query planning and Refinement:}
The closest analogy of  deployment strategy recommendation is recommending the best query plan in relational databases, in which joins, selections and projections could be combined any number of times. Typical parametric query optimization problems, like \cite{ioannidis1992parametric}, only focus on one objective to optimize. Afterwards, multi-objective problems have been studied, with a focus on optimizing multiple  objectives at the same time \cite{trummer2016multi}. Our work borrows inspiration from that and studies the problem in the deployment context, making the challenges unique and different from traditional query planning.

Query reformulation has been widely studied in Information Retrieval~\cite{query4}. In~\cite{mishra},  authors take users' preference into account and propose an interactive method for seeking an alternative query which satisfies cardinality constraints. This is different from \ADPaR since it only relaxes one dimension at a time. Aris et al.~\cite{query3} proposed a graph modification method to recommend queries that maximize an overall utility.  Mottin et al.~\cite{query2} develop an optimization framework where solutions can only handle Boolean/categorical data.

\smallskip \noindent
{\bf Skyline and Skyband Queries:}
Skyline queries play an essential role in computing favored answers from a database~\cite{sky5,sky1}. Based on the concepts of skylines, other classes of queries arise, especially top-$k$ queries and $k$-skyband problems which aim to bring more useful information than original skylines. Mouratidis et al.~\cite{sky4,sky3} study several related problems. In~\cite{sky4}, sliding windows are used to track the records in dynamic stream rates. In~\cite{sky3}, a geometry arrangement method is proposed for top-$k$ queries with uncertain scoring functions. Because our problem seeks the optimal group of $k$ strategies, it is similar to the top-$k$ queries problem.   However, unlike Skyband or any other related work, \ADPaR recommends alternative deployment parameters. Thus, these solutions do not extend to solve \ADPaR. 
\color{black}

\section{Conclusion}\label{conclusion}
We propose an optimization-driven middle layer to recommend deployment strategies.  Our work addresses multi-faceted modeling challenges through the generic design of modules in \StratRec that could be instantiated to optimize different types of goals by accounting for worker availability.  
We develop computationally-efficient algorithms and validate our work with extensive real data and synthetic experiments. 

This work opens up several important ongoing and future research directions. As an ongoing investigation, we are deploying  additional types of tasks using \StratRec to evaluate its effectiveness. Our future investigation involves adapting batch deployment to optimize additional criteria, such as worker-centric goals, or to combine multiple goals inside the same optimization function.  Understanding the computational challenges of such an interactive system remains to be explored. Finally, how  to design \StratRec for a fully dynamic stream-like setting of incoming deployment requests, where the deployment requests could be revoked, remains to be an important open problem.


\newpage
\bibliographystyle{ACM-Reference-Format}
\bibliography{ref,reference}

\end{document}